\newtheorem{theorem}{Theorem}[section]
\newtheorem{assumption}[theorem]{Assumption}
\newtheorem{corollary}[theorem]{Corollary}
\newtheorem{definition}[theorem]{Definition}
\newtheorem{lemma}[theorem]{Lemma}
\newtheorem{proposition}[theorem]{Proposition}
\newtheorem{remark}[theorem]{Remark}
\numberwithin{equation}{section}
\newcommand{\md}{\mathrm{d}}
\newcommand{\mR}{\mathbb{R}}
\newcommand{\bmo}{\textup{BMO}}
\title{Equilibrium Portfolio Selection under Utility-Variance Analysis\\ of Log Returns in Incomplete Markets}
\author{Yue Cao\thanks{Department of Mathematical Sciences, Tsinghua University, Beijing, China. \url{ caoyue24@mails.tsinghua.edu.cn}} 
\and Zongxia Liang\thanks{Department of Mathematical Sciences, Tsinghua University, Beijing, China. \url{ liangzongxia@tsinghua.edu.cn}}
\and Sheng Wang\thanks{Department of Statistics and Actuarial Science, The University of Hong Kong, Pokfulam Road, Hong Kong. \url{sheng-wa15@tsinghua.org.cn}}
\and Xiang Yu\thanks{Department of Applied Mathematics,  The Hong Kong Polytechnic University, Kowloon, Hong Kong. \url{xiang.yu@polyu.edu.hk}}
}
\date{}
\begin{document}
\maketitle
\vspace{-0.3in}
\begin{abstract}
This paper investigates a time-inconsistent portfolio selection problem in the incomplete market model, integrating expected utility maximization with risk control. The objective functional balances the expected utility and variance on log returns, giving rise to time inconsistency and motivating the search for a time-consistent equilibrium strategy. We characterize the equilibrium via a coupled quadratic backward stochastic differential equation (BSDE) system and establish the existence results in two special cases: (i) the two Brownian motions driving the price dynamics and the factor process are independent ($\rho=0$); (ii) the trading strategy is constrained to be bounded. For the general case with correlation coefficient $\rho\neq 0$, we introduce the notion of an approximate time-consistent equilibrium. Employing the solution structure from the equilibrium in the case $\rho=0$, we can construct an approximate time-consistent equilibrium in the general case with an error of order $O(\rho^2)$. Numerical examples and financial insights are also presented based on deep learning algorithms.

\vspace{0.1in}	
 \noindent\textbf{Keywords:} Time inconsistent control, time-consistent equilibrium, quadratic BSDE system, approximate time-consistent equilibrium
 
\end{abstract}

\section{Introduction}
The optimal portfolio management problem has always been a core research topic in quantitative finance, traditionally approached via the mean-variance (MV) criterion proposed by \citet{Markowitz_1952} and the expected utility (EU) theory developed in \cite{merton_1969,MERTON1971373}. The classical MV criterion captures the trade-off between the expected return and the risk (variance). In contrary, the utility maximization framework incorporates the investor’s risk aversion level into decision making without concerning the risk level of terminal wealth. 

One natural extension is to integrate the risk management into the utility maximization problem, dictating the optimal trading strategy at a reasonable risk level. The research in this direction has gained an upsurge of attention recently. As a pioneer attempt, \cite{Bsask_2001} study the utility maximization on portfolio and consumption plans by imposing the Value-at-Risk (VaR) constraints on the terminal wealth to encode the agent's concern on the risk level.  \cite{Wong_2017} examine a utility–deviation–risk portfolio selection problem by reformulating it as an equivalent nonlinear moment system, where the objective combines expected utility with a deviation-based risk measure, and derives the optimal  strategy in a Black-Scholes model.  \cite{Bensouusan_2022} recently combine the utility maximization problem with the variance minimization of terminal wealth, and recasts the problem as a mean field-type control (MFC) problem using a coupled system of Hamilton–Jacobi–Bellman and Fokker–Planck equation, and characterizes the optimal consumption and trading strategies in the mean-field context.

Our paper adopts an integrated formulation that aims to optimize the trade-off between the expected utility and the variance of portfolio in incomplete market models. Unlike the  setting that focuses on the capital amount (see, e.g., \cite{Wong_2017} and \cite{Bensouusan_2022}), we target at the log returns of the portfolio. This choice follows a growing line of research on portfolio selection (see \cite{dai_dynamic_2021}, \cite{Peng_2023}, \cite{Guan2025}, and references therein). Specifically, the objective functional in our paper at time $t$ is defined as
\begin{align}\label{eq:J}
J(t,\pi)=\mathbb{E}_t\left[U\left(\log\left(\frac{W^{\pi}_T}{W^{\pi}_t}\right)\right)\right]-\frac{\gamma_t}{2}\textup{Var}_t\left[\log\left(\frac{W^{\pi}_T}{W^{\pi}_t}\right)\right].
\end{align}

Nevertheless, the log returns and the variance term in the objective functional render the problem time-inconsistent. That is, a strategy being optimal today may no longer remain optimal at future dates. \cite{Strotz1955} classifies agents facing time-inconsistency into three types.
The naive agent always re-optimizes the current objective, ignoring time-inconsistency and thus constantly changing strategies. The precommitted agent fixes an initial optimal plan, knowing it will later become suboptimal; the plan remains time-inconsistent because it is only optimal at the initial time. A sophisticated agent cannot precommit but anticipates future deviations, viewing her future selves as strategic counterparts and seeking a time-consistent equilibrium strategy that each self optimally follows—namely, the intrapersonal equilibrium.

Contrary to \cite{Wong_2017} and \cite{Bensouusan_2022}, who focus on the precommitted strategy, the present paper aims to characterize a time-consistent equilibrium strategy by a sophisticated agent under game theoretical thinking with future selves, whose precise definition in continuous time was first proposed in \cite{Ekeland2006noncommitmentsubgameperfectequilibrium}\footnote{Since then, a strand of literature on time-inconsistent control problems in continuous-time setting has emerged, see e.g. \cite{Basak_dynamic_2010}, \cite{hu_time-inconsistent_2012, hu_time-inconsistent_2017}, \cite{Bjork_2014}, \cite{dai_dynamic_2021} for  MV problems, \cite{Ekeland2006noncommitmentsubgameperfectequilibrium}, \cite{Ekeland_2007}, \cite{Hamaguchi_2021} for non-exponential discounting problems.}. Moreover, we consider the incomplete market model, where the financial market consists of a risk-free asset and a single risky asset\footnote{For simplicity, we focus on one risky asset, as the analysis can be extended directly to multiple risky assets.}. The wealth process is driven by a standard Brownian motion $B$, while the asset prices are influenced by an exogenous factor process driven by another Brownian motion $\bar{B}$. The correlation coefficient of these two Brownian motions is denoted by $\rho$.
Similar to \cite{Hamaguchi_2021} and \cite{dai_dynamic_2021}, we can employ the perturbation and calculus of variation method to establish a necessary condition to characterize an open-loop time-consistent equilibrium strategy in Theorem \ref{nes} for our utility maximization problem with variance minization via a two-dimensional system of coupled BSDEs \eqref{factor} (see also \eqref{simplified BSDE} for a simplified yet still coupled system), which has not been addressed before in the existing literature.

In general, the system \eqref{simplified BSDE} consists of two quadratic BSDEs coupled by the derived equilibrium strategy in 
quadratic form. Classical existence results for the multidimensional quadratic BSDEs mainly fall into two categories: the fixed-point approach based on contraction mappings, which ensures uniqueness of solutions (see \cite{HU20161066}, \cite{Luo_2017} and \cite{FAN2023105}), and the construction of uniformly convergent approximations via coefficient regularization (see \cite{xing_class_2018}, \cite{Jackson_2022} and \cite{JACKSON2023}). However, neither of these methods applies to our problem (see Appendix \ref{secC} for the detailed explanation). 

In response to the challenge of the unconventional BSDE system, we first study two special cases where the existence of solution can be obtained and the verification theorem can be exercised: (i) the case with zero correlation coefficient ($\rho=0$) for two Brownian motions $B$ and $\bar{B}$; (ii) the case with trading constraint in a bounded set.

For the general case with $\rho\neq 0$ and without trading constraint, 
we can no longer establish the existence of solution to the more complicated BSDE system. However, as a new contribution to the existing literature, we  show that by employing the solution structure from the time-consistent equilibrium strategy in the special case with $\rho=0$, one can successfully construct an approximate time-consistent equilibrium (see Definition \ref{ANE}) in the general case with $\rho\neq 0$ when the correlation coefficient $\rho$ is sufficiently small. More importantly, under some additional regularity conditions, we can rigorously show that the approximation error using the approximate time-consistent equilibrium  of order $O(\rho^2)$, see Theorem \ref{rhoNE}. This approximation result using the explicit equilibrium structure in the special case with $\rho=0$ significantly simplifies the  decision making in some general incomplete market models with small $\rho$ such that it is sufficient to implement the more tractable control process while the time-consistency can be retained in an approximate sense.

We also highlight the main differences between the present paper and some related studies in \cite{Gu_2020} and \cite{Bensoussan_2025}. \cite{Gu_2020} studies a time-consistent Hamilton–Jacobi–Bellman (HJB) equation to characterize the equilibrium strategy for a utility–deviation risk control problem. They obtain the explicit equilibrium value function and the corresponding equilibrium strategy in a homogeneous utility–deviation risk problem with CRRA utility based on a specific choice of risk-aversion function. In contrast, we focus on the optimization of the trade-off between the exponential utility and variance over 
log returns and characterize the open-loop time-consistent equilibrium via a system of coupled quadratic BSDEs, for which we establish existence results in some cases and provide an approximate equilibrium in the general case. \cite{Bensoussan_2025} study a stochastic control problem incorporating higher-order moments in the complete market model and provide the characterization of the time-consistent equilibrium strategy. On the contrary, we consider an incomplete stochastic factor model that significantly increases the mathematical complexity to establish the existence of time-consistent equilibrium. 

To illustrate our theoretical results, we further employ a deep learning–based numerical scheme, originally proposed by \cite{e_deep_2017} and \cite{Han_2018}, to solve the BSDE systems arising in the three cases—namely, the case with $\rho=0$, the case with trading constraints, and the case of the approximate equilibrium for $\rho\neq 0$. We find that when the correlation coefficient $\rho=0$, the equilibrium investment ratio exhibits a slight upward trend as time approaches the terminal horizon, differing from the observation in \cite{dai_dynamic_2021} such that the investment strategy remains constant. When $\rho<0$, however, our results are consistent with \cite{dai_dynamic_2021}, showing a downward trend in the investment ratio when time approaches the terminal horizon. In addition, we perform sensitivity analysis with respect to the risk-aversion parameter $\zeta$ in the utility function  and the weighting coefficient $\gamma$ between the expected utility and the variance. The numerical results illustrate that the equilibrium strategy exhibits a reasonable monotonic pattern with respect to both $\zeta$ and $\gamma$. Moreover, the numerical results also shows that the approximate Nash equilibrium strategy and its resulting value function are relatively close to those under the true constrained equilibrium when the trading constraint bound is large.

The remainder of the paper is organized as follows. Section \ref{sec:formulation} formulates the time inconsistent control problem as the trade-off between the utility maximization and the variance minimization over logarithmic portfolio returns, for which a time-consistent equilibrium is introduced. Section \ref{section3} provides the characterization of the time-consistent equilibrium by a quadratic BSDE system. Section \ref{sec:factor} investigates the existence of solution to the BSDE system in the stochastic factor model in two special cases: (i) when two Brownian motions in the price dynamics and in the factor model are independent; (ii) when the trading strategy is constrained in a compact set. Section \ref{sec:approximate} establishes the approximate time-consistent equilibrium in the general case when two Brownian motions are correlated with $\rho\neq 0$ using the same solution structure from time-consistent equilibrium solution in the model with $\rho=0$ with the error of order $O(\rho^2)$. Some numerical illustrations and financial implications are presented in Section \ref{sec:numerical}. Section \ref{sec:conclusion} concludes the paper with some remarks. Appendix \ref{secA} and \ref{secB} collect some inequalities and sufficient conditions for the existence theory of BSDE system that are used in previous sections, Appendix \ref{secC} elaborates some challenges of the BSDE system in the general case when $\rho\neq 0$, and Appendix \ref{sec:D} collects the proofs of some auxiliary results in previous sections.    

\ \\
\noindent
\textbf{Notations}: 
Let $T>0$ be a finite time horizon and $(\Omega,\mathcal{F},\mathbb{F},\mathbb{P})$ be a filtered complete probability space, where $\mathbb{F}=\{\mathcal{F}_t,0\leq t\leq T\}$ is the augmented natural filtration generated by two standard Brownian motions $\{B(t), 0\leq t\leq T\}$ and $\{ \bar{B}(t), 0\leq t\leq T\}$.
Their correlation coefficient $\rho$ is such that
  $\mathbb{E}[B(t)\bar{B}(t)]=\rho t$.
For $t \in [0, T]$, $p, q \geq 1$, we list below several notations that will be used frequently throughout the paper:
\begin{itemize}
\item  $H_{\textup{BMO}}$ denotes the set of $\mR$-valued $\mathbb{F}$-progressively  measurable processes such that $\|\pi\|_{\textup{BMO}}<\infty$, where
$$
\|\pi\|_{\textup{BMO}}^2:=\sup_{\tau \in \mathcal{T}_{[0,T]}} \left\| \mathbb{E}_{\tau} \left[ \int_\tau^T |\pi_s|^2 ds \right] \right\|_\infty,
$$
and $\mathcal{T}_{[0,T]}$ denotes the set of all  $\mathbb{F}$-stopping times taking values in $[0,T]$.
    \item $L^\infty_{\mathcal{F}_t}(\Omega; \mathbb{R})$ denotes the set of all $\mathbb{R}$-valued, $\mathcal{F}_t$-measurable, bounded random variables.
    
    \item $L^p_{\mathbb{F}}(\Omega; L^q(t, T; \mathbb{R}))$ denotes the set of all $\mathbb{R}$-valued,  $\mathbb{F}$-progressively  measurable processes $X$ such that
    \[
    \mathbb{E}\left[ \left( \int_t^T |X_s|^q  ds \right)^{p/q} \right] < \infty.
    \]
     When $p=q$, we write $L^p_{\mathbb{F}}(t, T; \mathbb{R}) := L^p_{\mathbb{F}}(\Omega; L^p(t, T; \mathbb{R}))$.
    \item $L^p_{\mathbb{F}}(\Omega; C([t, T]; \mathbb{R}))$ denotes the set of all $\mathbb{R}$-valued, $\mathbb{F}$-adapted, continuous processes $X$ such that
    \[
    \mathbb{E}\left[ \sup_{s \in [t,T]} |X_s|^p \right] < \infty.
    \]
    \item $\mathcal{S}^{\infty}$ denotes the set of all $\mR$-valued, $\mathbb{F}$-progressively  measurable, bounded processes.
\item A function $u$ is said to belong to $C^{2+\beta}(\mR^n)$ if and only if
\[
\sup_{x\in \mR^n} |u|+\sup_{x\in \mR^n} |u_x|+\sup_{x\in \mR^n} | u_{xx}|+\sup_{x,x'\in \mR^n}
\frac{|u_{xx}(x)- u_{xx}(x')|}
     {|x-x'|^{\beta}}
<\infty,
\]
and a function $u$ is said to belong to  $\in C^{1+\beta/2,\,2+\beta}( [0,T]\times \mR^n)$ if and only if 
$$
\begin{aligned}
&\sup_{(t,x)\in [0,T]\times \mR^n} |u|+\sup_{(t,x)\in [0,T]\times \mR^n} |u_t|+\sup_{(t,x)\in [0,T]\times \mR^n} |u_x|+\sup_{(t,x)\in [0,T]\times \mR^n} |u_{xx}|\\
&+\sup_{(t,x),(t',x')\in [0,T]\times \mR^n} 
\frac{|u_{xx}(t,x)-u_{xx}(t',x')|}
     {d\big((t,x),(t',x')\big)^{\beta}}+\sup_{(t,x),(t',x')\in [0,T]\times \mR^n}
\frac{|u_t(t,x)-u_t(t',x')|}
     {d\big((t,x),(t',x')\big)^{\beta}}
<\infty,
\end{aligned}
$$
where $d\big((t,x),(t',x')\big):= |x-x'| + |t-t'|^{1/2}$ is the parabolic distance between $(t,x)$ and $(t',x')$.
\end{itemize}

\section{Problem Formulation}\label{sec:formulation}
 We consider a financial market consisting of one risk-free bond with interest rate $r=\{r_t: t\in[0,T]\}$ and one risky asset whose price dynamics are given by
 $$
\mathrm{d}S_t=S_t\left(\mu_t\mathrm{d}t+\sigma_t \mathrm{d}B_t\right), \quad S_0=s_0\in(0,\infty),
 $$
 where the drift $\mu$, the volatility $\sigma$, and the interest rate $r$ are all $\mathbb{F}$-progressively measurable processes.  

 Throughout the paper, we impose the following  assumption on the market coefficients.
 \begin{assumption}\label{assumpmarket}
    The processes $r, \mu$ and $\sigma$  are bounded. Moreover, $\mu > r>0,\sigma>0$  for almost every $t\in[0,T]$, $\mathbb{P}$-a.s., and $\sigma^{-1}$ is also bounded.
 \end{assumption}
 Under Assumption \ref{assumpmarket}, the market price of risk $\theta:=\sigma^{-1}(\mu-r)>0$ are bounded.

 A trading strategy is a $\mathbb{F}$-progressively measurable processes $\pi$ such that $\int_0^T|\pi_t(\mu(t)-r(t))|\md t+\int_0^T|\sigma(t)\pi_t|^2\md t<\infty$ a.s., where $\pi_t$ represents the proportion of wealth invested in the risky asset at time $t$. 
 Given such a strategy, the corresponding self-financing wealth process $W^{\pi}$ evolves according to
 $$
 \md W^{\pi}_t=W^{\pi}_t\left(r_t+\pi_t(\mu_t-r_t)\md t+\pi_t\sigma_t\md B_t\right), \quad W^{\pi}_0=w_0>0.
 $$ 
 In this paper, we consider a utility-maximization framework with a variance adjustment applied to the log return of the portfolio. Specifically, for a given constant $w_0>0$\footnote{$w_0>0$ is fixed throughout the paper.} and a given trading strategy $\pi$,  the objective functional of the investor  at time $t\in[0,T)$ is defined as
 \begin{equation} \label{objectiveJ}
   J(t,\pi)=\mathbb{E}_t\left[U\left(\log\left(\frac{W^{\pi}_T}{W^{\pi}_t}\right)\right)\right]-\frac{\gamma_t}{2}\textup{Var}_t\left[\log\left(\frac{W^{\pi}_T}{W^{\pi}_t}\right)\right],  
  \end{equation}
where $\mathbb{E}_t$ and $\textup{Var}_t$ represent the conditional expectation and the conditional variance at time $t$, respectively, and $\gamma_t>0$ is the weighting parameter capturing the trade-off between expected utility and risk (variance) at time $t$.
In particular, we shall work with the CARA utility $U(x)=-\frac{1}{\zeta}e^{-\zeta x}$ with the risk aversion parameter $\zeta>0$.

To simplify the notation, let us denote  $R^{\pi}_t:=\log W^{\pi}_t$ as the log-return process, which satisfies
 \begin{align}\label{eq:R}
 \md R^{\pi}_t=\left[r_t+\pi_t(\mu_t-r_t)-\frac{1}{2}|\sigma_t \pi_t|^2 \right]\md t+\pi_t\sigma_t\md B_t,\quad R^{\pi}_0=r_0=\log w_0\in\mR.
 \end{align}
Under this notation, the objective functional \eqref{objectiveJ} can be rewritten as
$$
J(t,\pi)=\mathbb{E}_t\left[U(R^{\pi}_T-R^{\pi}_t)\right]-\frac{\gamma_t}{2}\textup{Var}_t[R^{\pi}_T-R^{\pi}_t]
   =-\frac{1}{\zeta}\mathbb{E}_t\left[\exp\left(-\zeta(R^{\pi}_T-R^{\pi}_t)\right)\right]-\frac{\gamma_t}{2}\textup{Var}_t[R^{\pi}_T].
$$

To ensure that the objective functional is well defined for the trading strategy $\pi$, we introduce the following admissibility condition.
\begin{definition}{(Admissible trading strategy)}\label{def:admis_stra}
    For any $p>1$, define
    \begin{align*}
        \Pi_p:=\left\{\pi\in H_{\textup{BMO}}:\mathbb{E}_t\left[\exp\left(-p\zeta(R^{\pi}_T-R^{\pi}_t)\right)\right]< \infty\, \, a.s. \text{ for all}\, \, t\in[0,T)\right\}.
    \end{align*}
      A trading strategy is said to be \emph{admissible} if it belongs to $\Pi_p$ for some $p>1$. And denote by   $ \Pi:=\bigcup\limits_{p>1}\Pi_p$
 the collection of all admissible trading strategies.
\end{definition}
\begin{remark}
Note that for any trading strategy $\pi$, $ R^{\pi}_T-R^{\pi}_t$ is independent of the initial value $r_0$. Thus, the set $\Pi$ does not depend on $r_0$.
   Furthermore, we require $\mathbb{E}_t\left[\exp\left(-\zeta(R^{\pi}_T-R^{\pi}_t)\right)\right]<\infty$ a.s. to ensure that the objective functional at time $t$ is well-defined. In Definition \ref{def:admis_stra}, we impose a slightly stronger integrability requirement with a power  $p>1$. As will be shown later, this strengthened condition guarantees that $\Pi$ remains stable under perturbations (see Lemma \ref{diffpro} and Corollary \ref{coro:admi:piepsilon}), which is crucial for a rigorous formulation of equilibrium strategies. Moreover, it implies that $\mathbb{E}_t\left[\left( U^{\prime}(R^{\pi}_T-R^{\pi}_t)\right)^p\right]<\infty$ a.s, a property that plays an essential role in the proof of Theorem \ref{nes}; see. e.g.,  \cite{Hamaguchi_2021} for similar integrability conditions.
    \end{remark}

The following lemma shows  $J(t,\pi)<\infty$ a.s. for any $\pi\in\Pi$.
\begin{lemma}\label{lma:2.4}
     For any $\pi\in \Pi$ and  $t\in[0,T)$, $J(t,\pi)<\infty$ a.s.. Moreover, $R^{\pi}\in L^2_{\mathbb{F}}(\Omega; C([0, T]; \mathbb{R}))$.
\end{lemma}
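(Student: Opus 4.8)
The plan is to treat the two terms in
$$
J(t,\pi) = -\tfrac{1}{\zeta}\,\mathbb{E}_t\!\left[\exp\!\left(-\zeta(R^{\pi}_T-R^{\pi}_t)\right)\right] - \tfrac{\gamma_t}{2}\,\textup{Var}_t\!\left[R^{\pi}_T\right]
$$
separately and show that each is a.s.\ finite. For the utility term, since $\pi\in\Pi$ there is $p>1$ with $\mathbb{E}_t[\exp(-p\zeta(R^{\pi}_T-R^{\pi}_t))]<\infty$ a.s.; applying the conditional Jensen inequality to the concave map $x\mapsto x^{1/p}$ gives $\mathbb{E}_t[\exp(-\zeta(R^{\pi}_T-R^{\pi}_t))]\le\big(\mathbb{E}_t[\exp(-p\zeta(R^{\pi}_T-R^{\pi}_t))]\big)^{1/p}<\infty$ a.s., and this term is moreover non-positive. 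For the variance term it suffices to show $\mathbb{E}_t[(R^{\pi}_T)^2]<\infty$ a.s., for then $\textup{Var}_t[R^{\pi}_T]=\mathbb{E}_t[(R^{\pi}_T)^2]-(\mathbb{E}_t[R^{\pi}_T])^2$ is a well-defined finite quantity; and this in turn follows immediately from the stronger claim $R^{\pi}\in L^2_{\mathbb{F}}(\Omega;C([0,T];\mathbb{R}))$, since $\mathbb{E}\big[\mathbb{E}_t[(R^{\pi}_T)^2]\big]\le\mathbb{E}[\sup_{s\in[0,T]}|R^{\pi}_s|^2]<\infty$.

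So the real substance is the $L^2$-continuity claim. Writing $R^{\pi}_s=r_0+\int_0^s b_u\,\md u+M_s$ with $b_u:=r_u+\pi_u(\mu_u-r_u)-\tfrac12|\sigma_u\pi_u|^2$ and $M_s:=\int_0^s\sigma_u\pi_u\,\md B_u$, I would bound $\sup_{s\in[0,T]}|R^{\pi}_s|\le|r_0|+\int_0^T|b_u|\,\md u+\sup_{s\in[0,T]}|M_s|$ and estimate the three pieces. Under Assumption \ref{assumpmarket} the coefficients $r$, $\mu-r$, $\sigma$ are bounded, so by elementary pointwise bounds $\int_0^T|b_u|\,\md u\le C\big(1+\int_0^T|\pi_u|^2\,\md u\big)$. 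Since $\pi\in H_{\textup{BMO}}$, the energy (moment) inequality for $\textup{BMO}$ processes yields $\mathbb{E}\big[(\int_0^T|\pi_u|^2\,\md u)^n\big]\le n!\,\|\pi\|_{\textup{BMO}}^{2n}$ for every $n$, in particular for $n=2$, so the drift contribution lies in $L^2(\Omega)$. For the martingale part, boundedness of $\sigma$ gives $\|\sigma\pi\|_{\textup{BMO}}\le\|\sigma\|_\infty\|\pi\|_{\textup{BMO}}<\infty$, hence $M$ is a $\textup{BMO}$ martingale with $\mathbb{E}[\langle M\rangle_T]=\mathbb{E}[\int_0^T|\sigma_u\pi_u|^2\,\md u]<\infty$, and the Burkholder--Davis--Gundy inequality gives $\mathbb{E}[\sup_{s\in[0,T]}|M_s|^2]\le C\,\mathbb{E}[\langle M\rangle_T]<\infty$. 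Continuity of $R^{\pi}$ is clear since $t\mapsto\int_0^t b_u\,\md u$ is absolutely continuous ($\int_0^T|b_u|\,\md u<\infty$ a.s.) and $M$ is a continuous martingale. Combining the three estimates gives $\mathbb{E}[\sup_{s\in[0,T]}|R^{\pi}_s|^2]<\infty$, i.e.\ $R^{\pi}\in L^2_{\mathbb{F}}(\Omega;C([0,T];\mathbb{R}))$.

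Assembling the pieces: the variance term is a.s.\ finite by the $L^2$-bound, the utility term is a.s.\ finite and non-positive by the Jensen estimate, so $J(t,\pi)$ is a.s.\ a finite real number; in fact $J(t,\pi)<0$ a.s., hence in particular $J(t,\pi)<\infty$ a.s. No step here is genuinely hard; the only non-elementary input is the energy inequality for $\textup{BMO}$ processes, which I would quote from standard $\textup{BMO}$-martingale theory (or from the inequalities collected in Appendix \ref{secA}). The one point needing care is that all estimates on the drift and stochastic-integral terms must be reduced to moments of $\int_0^T|\pi_u|^2\,\md u$, and it is precisely membership in $H_{\textup{BMO}}$ — rather than mere square-integrability of $\pi$ — that supplies the required higher moments.
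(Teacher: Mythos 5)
Your proof is correct and follows essentially the same route as the paper: Hölder/Jensen with the exponent $p>1$ from admissibility for the utility term, and a drift-plus-martingale sup-estimate for $R^{\pi}$ controlled by moments of $\int_0^T|\pi_u|^2\,\md u$ via the BMO energy inequality for the variance term. The only cosmetic difference is that the paper runs the second-moment estimate conditionally on $\mathcal{F}_t$ (via the conditional energy inequality, yielding the explicit a.s.\ bound $C(\|\pi\|_{\textup{BMO}}^4+1)$), whereas you obtain a.s.\ finiteness of $\mathbb{E}_t[(R^{\pi}_T)^2]$ from the unconditional bound — both suffice for the statement.
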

\begin{proof}
  See Appendix \ref{pflma:2.4}.
\end{proof}
The issue of time inconsistency arises due to the initial dependence of $R_t$ and the presence of the variance term. As a result, the classical optimal solution may no longer be meaningful because a decision that appears optimal today may be overturned at future dates. Instead, we seek an equilibrium solution using a game-theoretic approach with future selves.

To define the equilibrium strategy, we first introduce the notion of a perturbed strategy for $\pi\in\Pi$.
\begin{definition}\label{def:per}
    For any $\pi\in\Pi$, $t\in[0,T)$, $\varepsilon\in(0,T-t)$ and $\eta\in L_{\mathcal{F}_t}^{\infty}(\Omega,\mathbb{R})$, define the perturbed strategy $\pi^{t,\varepsilon,\eta}\in \Pi$ (see Corollary \ref{coro:admi:piepsilon} for the proof of admissibility) by
 $$ \pi_s^{t,\varepsilon,\eta}={\pi}_s+\eta \mathds{1}_{s\in[t,t+\varepsilon)}, \quad s\in[0,T],
 $$
 where $\mathds{1}$ denotes  the indicator function.
\end{definition}
For any $t\in [0,T)$, $\eta\in L_{\mathcal{F}_t}^{\infty}(\Omega,\mathbb{R}^d)$, $\varepsilon\in (0,T-t)$ and $\pi\in\Pi$, let  
$\xi^{t,\varepsilon,\eta,\pi}:=R^{\pi^{t.\varepsilon,\eta}}-R^{\pi}$ be the difference of log return processes between $\pi^{t,\varepsilon,\eta}$ and $\pi$. Then, $\xi^{t,\varepsilon,\pi}:=\xi^{t,\varepsilon,\eta,\pi}$\footnote{For notation simplicity, we drop the superscript $\eta$.} is the unique solution to the SDE\footnote{Formally, both $R^{\pi^{t.\varepsilon,\eta}}$ and $\xi^{t,\varepsilon,\pi}$ can be defined on the entire interval  $[0,T]$. Nevertheless, in view of the definition of equilibrium, it suffices to consider them only on $[t,T]$.} 
 \begin{equation}\label{diff}
 \left\{\begin{aligned}
     &\md \xi_s^{t,\varepsilon,\pi}=\left(a(s,\pi^{t,\varepsilon,\eta}_s)-a(s,\pi_s)\right)\md s+\sigma_s \eta 1_{s\in [t,t+\varepsilon)}\md B_s,\quad s\in[t,T],\\
     &\xi_t^{t,\varepsilon,\pi}=0,
 \end{aligned}
 \right.
 \end{equation}
 where  $a(s,\pi):=r_s+(\mu_s-r_s)\pi-\frac{1}{2}|\sigma_s\pi|^2$ . In the following lemma, we establish several fundamental properties of $\xi^{t,\varepsilon,\pi}$.
 
 \begin{lemma}\label{diffpro}
     Let $\pi\in\Pi$, $t\in[0,T)$ and $\eta\in L_{\mathcal{F}_t}^{\infty}(\Omega,\mathbb{R})$ be fixed. Then we have the
following three assertions:
     \begin{enumerate}
         \item For any $\varepsilon\in (0,T-t)$, $\xi^{t,\varepsilon,\pi}\in L_{\mathbb{F}}^{k}(\Omega;C([t,T];\mR))$ for any $k\ge1$.
         \item For any $k\ge1$ and  $\varepsilon\in(0,T-t)$, there exists a constant $C(k,\|\pi\|_{\textup{BMO}})$\footnote{ Throughout the remainder of the paper, the constant $C$ may vary from line to line. For simplicity, we will omit its dependence on market parameters such as $\|r\|_{\infty},\|\mu\|_{\infty},\|\sigma\|_{\infty},T$, etc. Instead, we will explicitly indicate the dependence of $C$ on the trading strategy $\pi$ as well as on the perturbation parameters $\varepsilon$ and $\eta$ (see Definition \ref{def:per}) when they appear.} such that
$$
\mathbb{E}_t\left[\sup\limits_{s\in[t,T]}|\xi_s^{t,\varepsilon,\pi}|^{2k}\right]\le C(k,\|\pi\|_{\bmo})(\varepsilon|\eta|^2)^k,\quad a.s..
$$
         \item For any  $c\in\mathbb{R}$, it holds that $ \sup\limits_{\varepsilon\in (0,T-t)}\mathbb{E}_t\left[\exp(c \xi^{t,\varepsilon,\pi}_T)\right]\leq C(|c|,\|\pi\|_{\textup{BMO}},\|\eta\|_{\infty}), \, \mathbb{P}-a.s.$. Moreover,  $C(|c|,\|\pi\|_{\textup{BMO}},\|\eta\|_{\infty})$  is non-decreasing with respect to $\|\eta\|_{\infty}$. 
     \end{enumerate}
 \end{lemma}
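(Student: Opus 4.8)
The plan is to exploit that the coefficients of the SDE~\eqref{diff} do not involve the unknown $\xi^{t,\varepsilon,\pi}$, so the equation is solved by explicit integration, and that both the drift $a(s,\pi^{t,\varepsilon,\eta}_s)-a(s,\pi_s)$ and the diffusion $\sigma_s\eta\mathds{1}_{s\in[t,t+\varepsilon)}$ vanish for $s\ge t+\varepsilon$, so that $\xi^{t,\varepsilon,\pi}$ is constant on $[t+\varepsilon,T]$ and equals $\xi^{t,\varepsilon,\pi}_{t+\varepsilon}$. Since $\pi^{t,\varepsilon,\eta}_s=\pi_s+\eta$ on $[t,t+\varepsilon)$ and $a(s,\cdot)$ is explicit, $a(s,\pi^{t,\varepsilon,\eta}_s)-a(s,\pi_s)=(\mu_s-r_s)\eta-\sigma_s^2\pi_s\eta-\tfrac12\sigma_s^2\eta^2$ on $[t,t+\varepsilon)$, which gives the decomposition $\xi^{t,\varepsilon,\pi}_s=A_s+M_s$, where $A_s:=\int_t^{s\wedge(t+\varepsilon)}[(\mu_u-r_u)\eta-\sigma_u^2\pi_u\eta-\tfrac12\sigma_u^2\eta^2]\,\md u$ and $M_s:=\int_t^{s\wedge(t+\varepsilon)}\sigma_u\eta\,\md B_u$; these are well defined since $\pi\in H_{\bmo}$ (so $\int_t^T|\pi_u|\,\md u<\infty$ a.s.) and $r,\mu,\sigma$ are bounded. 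All three assertions will be deduced from this decomposition together with the standard $H_{\bmo}$ toolbox -- the energy inequality and the (conditional) John--Nirenberg inequality -- collected in Appendix~\ref{secA}.

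For assertion~(2), use $\sup_{s\in[t,T]}|\xi^{t,\varepsilon,\pi}_s|^{2k}\le 2^{2k}\big(\sup_s|A_s|^{2k}+\sup_s|M_s|^{2k}\big)$ and estimate the two pieces under $\mathbb{E}_t[\cdot]$. In the drift, the contributions of $(\mu-r)\eta$ and of the It\^o--correction $\tfrac12\sigma^2\eta^2$ are bounded directly from Assumption~\ref{assumpmarket} and $\varepsilon<T-t$ (the latter only at higher order in $\varepsilon|\eta|^2$), while the cross term $\sigma^2\pi\eta$ is handled by Cauchy--Schwarz in time, $\int_t^{t+\varepsilon}|\pi_u|\,\md u\le\varepsilon^{1/2}\big(\int_t^{T}|\pi_u|^2\,\md u\big)^{1/2}$, followed by the conditional energy inequality $\mathbb{E}_t\big[\big(\int_t^{T}|\pi_u|^2\,\md u\big)^{k}\big]\le k!\,\|\pi\|_{\bmo}^{2k}$ a.s., with $|\eta|$ pulled out of $\mathbb{E}_t[\cdot]$ by $\mathcal{F}_t$-measurability; this yields $\mathbb{E}_t[\sup_s|A_s|^{2k}]\le C(k,\|\pi\|_{\bmo})(\varepsilon|\eta|^2)^k$. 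For the martingale part, a conditional Burkholder--Davis--Gundy inequality together with $|\sigma_u\eta|\le\|\sigma\|_\infty|\eta|$ gives $\mathbb{E}_t[\sup_s|M_s|^{2k}]\le C(k)\,\mathbb{E}_t\big[\big(\int_t^{t+\varepsilon}\sigma_u^2\eta^2\,\md u\big)^{k}\big]\le C(k,\|\sigma\|_\infty)(\varepsilon|\eta|^2)^k$. Assertion~(1) is then obtained by taking full expectations, since $\mathbb{E}[\sup_s|\xi^{t,\varepsilon,\pi}_s|^{2k}]\le C(k,\|\pi\|_{\bmo})\,\varepsilon^k\|\eta\|_\infty^{2k}<\infty$ for every $k$, so $\xi^{t,\varepsilon,\pi}\in L^{m}_{\mathbb{F}}(\Omega;C([t,T];\mR))$ for every $m\ge1$.

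For assertion~(3), since $\xi^{t,\varepsilon,\pi}_T=A_{t+\varepsilon}+M_{t+\varepsilon}$, write $e^{c\xi^{t,\varepsilon,\pi}_T}=e^{cA_{t+\varepsilon}}\,e^{cM_{t+\varepsilon}}$ and apply Cauchy--Schwarz under $\mathbb{E}_t[\cdot]$. The martingale factor is controlled via the Dol\'eans-Dade exponential, $e^{2cM_{t+\varepsilon}}=\mathcal{E}_{t+\varepsilon}\big(2c\int_t^{\cdot}\sigma_u\eta\,\md B_u\big)\,e^{2c^2\int_t^{t+\varepsilon}\sigma_u^2\eta^2\,\md u}$, where the stochastic exponential is a genuine martingale on $[t,t+\varepsilon]$ (bounded integrand, Novikov), so $\mathbb{E}_t[e^{2cM_{t+\varepsilon}}]\le e^{2c^2\|\sigma\|_\infty^2\|\eta\|_\infty^2 T}$, uniformly in $\varepsilon$. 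For the drift factor, $\varepsilon<T-t$ and Cauchy--Schwarz in time give $2cA_{t+\varepsilon}\le c_1+c_2\big(\int_t^T|\pi_u|^2\,\md u\big)^{1/2}$ with $c_1,c_2$ depending on $|c|$, $\|\eta\|_\infty$ and the market data; the key step is to linearise the square root by Young's inequality, $c_2 X^{1/2}\le \tfrac{c_2^2}{2\delta}+\tfrac{\delta}{2}X$, so that $\mathbb{E}_t[e^{2cA_{t+\varepsilon}}]\le e^{c_1+c_2^2/(2\delta)}\,\mathbb{E}_t\big[e^{(\delta/2)\int_t^T|\pi_u|^2\,\md u}\big]$, and then to choose $\delta=\delta(\|\pi\|_{\bmo})$ small enough that the conditional John--Nirenberg inequality bounds the last factor by a constant $C(\|\pi\|_{\bmo})$, uniformly in $\varepsilon$. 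Combining gives $\sup_{\varepsilon\in(0,T-t)}\mathbb{E}_t[e^{c\xi^{t,\varepsilon,\pi}_T}]\le C(|c|,\|\pi\|_{\bmo},\|\eta\|_\infty)$ a.s., and monotonicity in $\|\eta\|_\infty$ is immediate because every term in the resulting bound is non-decreasing in $\|\eta\|_\infty$.

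The main obstacle is the drift factor in~(3): for $\pi\in H_{\bmo}$ the integral $\int_t^{t+\varepsilon}\sigma_u^2|\pi_u|\,\md u$ is not pathwise bounded, so $e^{cA_{t+\varepsilon}}$ is genuinely random and only exponentially integrable for exponents below a threshold governed by $\|\pi\|_{\bmo}$; this is what forces the Young-inequality splitting with $\delta$ tuned to $\|\pi\|_{\bmo}$, and some care is needed to keep every constant independent of $\varepsilon$. By contrast, the remaining ingredients -- the conditional Burkholder--Davis--Gundy and energy inequalities and the Dol\'eans-Dade/Novikov step -- are routine once the BMO estimates of Appendix~\ref{secA} are in place.
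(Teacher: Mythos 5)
Your proof is correct and follows essentially the same route as the paper's: Cauchy--Schwarz in time plus the conditional energy inequality for the drift and (conditional) BDG for the martingale part in assertion (2), and Cauchy--Schwarz, Novikov for the stochastic exponential, and Young's inequality with $\delta$ tuned to $\|\pi\|_{\bmo}$ so that the conditional John--Nirenberg inequality applies for assertion (3). The only cosmetic differences are that you make the additive decomposition $\xi^{t,\varepsilon,\pi}=A+M$ explicit and deduce assertion (1) from the bound in assertion (2) rather than citing the standard SDE estimate directly.
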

 \begin{proof}
See Appendix \ref{pflm:diffpro}.
\end{proof}
 \begin{remark}
     The above result is similar to Lemma 2.5 in \cite{Hamaguchi_2021}. However, the difference process $\xi^{t,\varepsilon,\pi}$ in \eqref{diff} contains quadratic terms of the strategy $\pi\in\Pi$, and therefore does not satisfy the properties in \cite{Hamaguchi_2021} under arbitrary trading strategies. This motivates us to impose a BMO condition in the definition of admissible strategies. As a consequence, the proof differs from that of Lemma 2.5 in \cite{Hamaguchi_2021}.
 \end{remark}
 \begin{corollary}\label{coro:admi:piepsilon}
      Assume that $\pi\in\Pi$. Then for any $t\in [0,T)$, $\epsilon\in(0,T-t)$ and $\eta\in L_{\mathcal{F}_t}^{\infty}(\Omega,\mathbb{R})$,   
     $\pi^{t,\varepsilon,\eta}\in \Pi$. 
 \end{corollary}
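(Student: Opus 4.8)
The plan is to verify directly that $\pi^{t,\varepsilon,\eta}$ meets the two requirements of Definition~\ref{def:admis_stra}: that it lies in $H_{\textup{BMO}}$, and that $\mathbb{E}_s\big[\exp(-p'\zeta(R^{\pi^{t,\varepsilon,\eta}}_T-R^{\pi^{t,\varepsilon,\eta}}_s))\big]<\infty$ a.s.\ for every $s\in[0,T)$ and some fixed $p'>1$. The $\textup{BMO}$ bound is immediate: since $\pi\in\Pi\subseteq H_{\textup{BMO}}$ and $\int_\tau^T|\eta|^2\mathds{1}_{[t,t+\varepsilon)}(u)\,\md u\le\varepsilon\|\eta\|_\infty^2$ for every stopping time $\tau$, one has $\|\eta\mathds{1}_{[t,t+\varepsilon)}\|_{\textup{BMO}}^2\le\varepsilon\|\eta\|_\infty^2<\infty$, and then the elementary estimate $\|a+b\|_{\textup{BMO}}^2\le 2\|a\|_{\textup{BMO}}^2+2\|b\|_{\textup{BMO}}^2$ gives $\|\pi^{t,\varepsilon,\eta}\|_{\textup{BMO}}<\infty$; progressive measurability of $\pi^{t,\varepsilon,\eta}$ is clear.

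For the integrability I would fix $p>1$ with $\pi\in\Pi_p$ and choose any $p'\in(1,p)$. Given $s\in[0,T)$, I would work with the centered increment difference $\chi^s_u:=(R^{\pi^{t,\varepsilon,\eta}}_u-R^{\pi^{t,\varepsilon,\eta}}_s)-(R^\pi_u-R^\pi_s)$ for $u\in[s,T]$, so that $R^{\pi^{t,\varepsilon,\eta}}_T-R^{\pi^{t,\varepsilon,\eta}}_s=(R^\pi_T-R^\pi_s)+\chi^s_T$. Because the dynamics \eqref{eq:R} carry no dependence on the level $R$ itself, $\chi^s$ solves, on $[s,T]$, an SDE of exactly the form \eqref{diff} driven by the perturbation $\pi^{t,\varepsilon,\eta}-\pi=\eta\mathds{1}_{[\max(s,t),\,t+\varepsilon)}$ restricted to $[s,T]$; consequently $\chi^s\equiv 0$ when $s\ge t+\varepsilon$, while for $s<t+\varepsilon$ one has $\chi^s_T=\xi^{t',\varepsilon',\pi}_T$ with $t':=\max(s,t)\in[s,T)$, $\varepsilon':=t+\varepsilon-t'\in(0,T-t')$ and the same $\eta$, which is $\mathcal{F}_{t'}$-measurable since $t'\ge t$.

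I would then apply the conditional Hölder inequality with conjugate exponents $p/p'$ and $p/(p-p')$:
\[
\mathbb{E}_s\Big[e^{-p'\zeta(R^{\pi^{t,\varepsilon,\eta}}_T-R^{\pi^{t,\varepsilon,\eta}}_s)}\Big]\le\Big(\mathbb{E}_s\big[e^{-p\zeta(R^\pi_T-R^\pi_s)}\big]\Big)^{p'/p}\Big(\mathbb{E}_s\big[e^{-\frac{pp'}{p-p'}\zeta\,\chi^s_T}\big]\Big)^{(p-p')/p}.
\]
The first factor is a.s.\ finite because $\pi\in\Pi_p$. The second factor equals $1$ when $s\ge t+\varepsilon$; otherwise $\chi^s_T=\xi^{t',\varepsilon',\pi}_T$, and Lemma~\ref{diffpro}(3) with $c=-\tfrac{pp'}{p-p'}\zeta$ gives $\mathbb{E}_{t'}\big[e^{c\chi^s_T}\big]\le C(|c|,\|\pi\|_{\textup{BMO}},\|\eta\|_\infty)<\infty$ a.s.; if $t'=s$ this already bounds the factor, and if $t'=t>s$ (i.e.\ $s<t$) I would lower the conditioning from $t$ to $s$ by the tower property. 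This yields the desired a.s.\ finiteness for every $s\in[0,T)$, hence $\pi^{t,\varepsilon,\eta}\in\Pi_{p'}\subseteq\Pi$.

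The only point requiring care — and the one I expect to be the main (though minor) obstacle — is the bookkeeping for the position of $s$ relative to the perturbation window $[t,t+\varepsilon)$: one must check that $\chi^s$ is again a difference process of the admissible type \eqref{diff} so that Lemma~\ref{diffpro}(3) applies at the correct conditioning time $t'$, and must invoke the tower property to treat $s<t$. Everything else reduces to the triangle-type inequality for $\|\cdot\|_{\textup{BMO}}$ and a single application of the conditional Hölder inequality.
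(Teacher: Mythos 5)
Your proof is correct and follows essentially the same route as the paper's: decompose $R^{\pi^{t,\varepsilon,\eta}}_T-R^{\pi^{t,\varepsilon,\eta}}_s$ as the unperturbed increment plus a difference process of type \eqref{diff}, apply the conditional H\"older inequality with exponents tied to $p$ and $p'$, control the exponential moment of the difference process via Lemma~\ref{diffpro}(3), and bound the $\textup{BMO}$ norm by the elementary triangle-type estimate. The one place you go beyond the paper is worth noting: the paper's proof verifies the conditional integrability only at the perturbation time $t$, whereas Definition~\ref{def:admis_stra} requires it at every $s\in[0,T)$; your case analysis on the position of $s$ relative to the window $[t,t+\varepsilon)$ (using that the increments of \eqref{eq:R} do not depend on the level of $R$, that $\eta$ is $\mathcal{F}_{s}$-measurable once $s\ge t$, and the tower property when $s<t$) supplies exactly the bookkeeping the paper glosses over, and all three cases are handled correctly.
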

 \begin{proof}
See Appendix \ref{pfcoro:admi:piepsilon}.
 \end{proof}
  Next we give the definition of our (open-loop) time-consistent equilibrium strategy using the intra-personal game theoretic thinking by perturbation, which is inspired by \cite{hu_time-inconsistent_2012,hu_time-inconsistent_2017}.
 \begin{definition}
     For $\hat{\pi}\in\Pi$,  $\hat{\pi}$ is called a time-consistent equilibrium strategy if 
     $$   \limsup\limits_{\varepsilon\rightarrow 0}\frac{J(t,\hat{\pi}^{t,\varepsilon,\eta})-J(t,\hat{\pi})}{\varepsilon}\le 0 \quad a.s.,
     $$
     for any $t\in [0,T)$ and  any $\eta\in L_{\mathcal{F}_t}^{\infty}(\Omega,\mathbb{R})$. 
 \end{definition}
 \begin{remark}\label{rmk:notation}
      In the following sections, we often fix a candidate equilibrium strategy $\hat{\pi}\in\Pi$ or an arbitrary strategy $\pi\in\Pi$ for analysis.  In proving the necessary conditions or verifying theorems for equilibrium strategies, we will work with the perturbed log return processes ${R}^{\hat{\pi}^{t,\epsilon,\eta}}$ or $ {R}^{\pi^{t,\epsilon,\eta}}$ under a fixed perturbation $\eta$. To ease presentation,  we denote the perturbed strategies $\hat{\pi}^{t,\varepsilon,\eta}$ and $\pi^{t,\varepsilon,\eta}$ simply by $\hat{\pi}^{t,\varepsilon}$ and $\pi^{t,\varepsilon}$, respectively. Likewise, we write $\hat{R}$ and $R$ for the log-return processes $R^{\hat{\pi}}$ and $R^{\pi}$, and $\hat{R}^{t,\varepsilon}$ and $R^{t,\varepsilon}$ for the perturbed log-return processes $R^{\hat{\pi}^{t,\varepsilon}}$ and $R^{\pi^{t,\varepsilon}}$.
The difference processes $\xi^{t,\varepsilon,\eta,\hat{\pi}}$ and $\xi^{t,\varepsilon,\eta,\pi}$ and are denoted simply by $\hat{\xi}^{t,\varepsilon}$ and $\xi^{t,\varepsilon}$.    For simplicity, we also use the notations $\hat{u}:=\sigma\hat{\pi}$ and $u:=\sigma\pi$ in the following sections.
\end{remark}

\section{A Necessary Condition of Time-Consistent Equilibria}\label{section3}
In this section, by establishing several properties of admissible strategies, we derive a necessary condition for a time-consistent equilibrium characterized by a coupled quadratic BSDE system. The main theorem of this section is stated below.
 \begin{theorem}\label{nes}
     If $\hat{\pi}\in\Pi_p$ ($p>1$) is a time-consistent equilibrium strategy, there exist $(Y,\tilde{Y})$ and $(Z^1, Z, \tilde{Z}^1, \tilde{Z})$ such that
     \begin{enumerate}
         \item $Y$ and $\tilde{Y}$ are continuous and adapted processes such that $\exp\left(-\zeta(\hat{R}+Y)\right)\in L^p_{\mathbb{F}}(\Omega;C([0,T];\mathbb{R}))$ and $\tilde{Y}\in L^2_{\mathbb{F}}(\Omega; C([0, T]; \mathbb{R}))$.  Moreover, $(Z^1,\tilde{Z}^1,Z,\tilde{Z})$ are $\mathbb{R}$-valued progressively measurable processes such that
         $$\int_0^T \left(|Z_s^1|^2+|Z_s|^2+|\tilde{Z}_s^1|^2+|\tilde{Z}_s|^2\right)\md s<\infty,\quad \mathbb{P}-a.s..$$
     
    \item $(Y,\tilde{Y})$ and $(Z^1, Z, \tilde{Z}^1, \tilde{Z})$ satisfy the following quadratic BSDE system:
     \begin{equation}\label{factor}
     \left\{
     \begin{aligned}
         &\md Y_s=\left(\frac{\zeta}{2}|Z_s^1+\sigma_s\hat{\pi}_s+\rho Z_s|^2+\frac{\zeta(1-\rho^2)}{2}|Z_s|^2-a(s, \hat{\pi}_s)\right)\md s+Z_s^1\md B_s+Z_s\md \bar{B}_s ,\\
         &\md \tilde{Y}_s=-a(s,\hat{\pi}_s)\md s+\tilde{Z}^1_s\md B_s+\tilde{Z}_s\md \bar{B}_s,\\
         &Y_T=0,\\
         &\tilde{Y_T}=0.
     \end{aligned}
     \right.
     \end{equation}
     \end{enumerate}
    Furthermore, the equilibrium strategy is characterized by
     \begin{align}\label{eq:barpi}
     \hat{\pi}=\sigma^{-1} \frac{ e^{-\zeta Y}\theta-\zeta e^{-\zeta Y}(Z^1+\rho Z)-\gamma(\tilde{Z}^1+\rho\tilde{Z})}{(\zeta+1)e^{-\zeta Y}+\gamma}.
     \end{align}
 \end{theorem}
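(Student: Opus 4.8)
The plan is to follow the standard perturbation/calculus-of-variations scheme for open-loop time-consistent equilibria, adapted to our utility-plus-variance objective over log returns. Fix a candidate equilibrium $\hat\pi\in\Pi_p$, $t\in[0,T)$, $\eta\in L^\infty_{\mathcal F_t}(\Omega;\mathbb R)$, and recall the notation $\hat R=R^{\hat\pi}$, $\hat R^{t,\varepsilon}=R^{\hat\pi^{t,\varepsilon}}$, $\hat\xi^{t,\varepsilon}=\hat R^{t,\varepsilon}-\hat R$ from Remark \ref{rmk:notation}. I would first introduce two BSDEs with terminal condition $0$ driven by both Brownian motions, namely a linear one $\widetilde Y$ with generator $-a(s,\hat\pi_s)$ (so that $\widetilde Y_t=\mathbb E_t[\int_t^T a(s,\hat\pi_s)\,ds]$, equivalently $\widetilde Y_t + \hat R_t = \mathbb E_t[\hat R_T]$), and a quadratic one $Y$ whose generator is reverse-engineered so that $e^{-\zeta(\hat R_s+Y_s)}$ is, up to an integrating drift, the conditional expectation of $e^{-\zeta\hat R_T}$: applying It\^o to $e^{-\zeta(\hat R_s+Y_s)}$ and forcing the finite-variation part to vanish yields exactly the $Y$-generator in \eqref{factor}, with $Z^1,Z$ its martingale integrands. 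The integrability claims in part (1) — $e^{-\zeta(\hat R+Y)}\in L^p_{\mathbb F}(\Omega;C([0,T];\mathbb R))$ and $\widetilde Y\in L^2_{\mathbb F}(\Omega;C([0,T];\mathbb R))$ — follow from $\hat\pi\in\Pi_p$, the BMO property, Lemma \ref{lma:2.4}, and standard BSDE a priori estimates; the square-integrability of the $Z$'s on $[0,T]$ is then automatic.

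The core computation is the first-order expansion of $J(t,\hat\pi^{t,\varepsilon})-J(t,\hat\pi)$ in $\varepsilon$. Writing $J(t,\pi)=-\frac1\zeta\mathbb E_t[e^{-\zeta(R^\pi_T-R^\pi_t)}]-\frac{\gamma_t}{2}\operatorname{Var}_t[R^\pi_T]$ and noting $R^{\hat\pi^{t,\varepsilon}}_t=\hat R_t$, the perturbation only enters through $\hat R_T\mapsto \hat R_T+\hat\xi^{t,\varepsilon}_T$. For the utility part I would expand $e^{-\zeta(\hat R_T+\hat\xi^{t,\varepsilon}_T-\hat R_t)}=e^{-\zeta(\hat R_T-\hat R_t)}\bigl(1-\zeta\hat\xi^{t,\varepsilon}_T+O(|\hat\xi^{t,\varepsilon}_T|^2)\bigr)$, using Lemma \ref{diffpro}(2)–(3) to control the remainder in $L^1$ conditionally (the uniform exponential moment bound is what makes the dominated-convergence/Taylor argument rigorous here, exactly as in \cite{Hamaguchi_2021}); for the variance part, $\operatorname{Var}_t[\hat R_T+\hat\xi^{t,\varepsilon}_T]-\operatorname{Var}_t[\hat R_T]=2\operatorname{Cov}_t(\hat R_T,\hat\xi^{t,\varepsilon}_T)+\operatorname{Var}_t[\hat\xi^{t,\varepsilon}_T]$, and again Lemma \ref{diffpro}(2) shows the last term is $o(\varepsilon)$. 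Next I would substitute the SDE \eqref{diff} for $\hat\xi^{t,\varepsilon}$: its drift is $a(s,\hat\pi_s+\eta)-a(s,\hat\pi_s)=(\mu_s-r_s)\eta-\tfrac12\sigma_s^2(2\hat\pi_s\eta+\eta^2)$ on $[t,t+\varepsilon)$ and $0$ afterward (since the drift difference vanishes once the strategies agree), and its diffusion is $\sigma_s\eta\,\mathbf 1_{[t,t+\varepsilon)}$. Using $\widetilde Y$ to represent $\mathbb E_t[\hat R_T\mid\cdot]$-type terms and $e^{-\zeta(\hat R+Y)}$ to represent the conditional expectation $\mathbb E_\cdot[e^{-\zeta(\hat R_T-\cdot)}]$, together with the martingale representation integrands $Z^1,Z,\widetilde Z^1,\widetilde Z$, I would rewrite $\mathbb E_t[e^{-\zeta(\hat R_T-\hat R_t)}\hat\xi^{t,\varepsilon}_T]$ and $\operatorname{Cov}_t(\hat R_T,\hat\xi^{t,\varepsilon}_T)$ as conditional expectations of $\int_t^{t+\varepsilon}(\cdots)\,ds$ by an integration-by-parts / It\^o-product argument; dividing by $\varepsilon$ and sending $\varepsilon\to0$ picks out the integrand at time $t$ by the Lebesgue differentiation theorem (a.s., along a suitable sequence), giving a clean pointwise expression.

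The upshot is that $\limsup_{\varepsilon\to0}\varepsilon^{-1}\bigl(J(t,\hat\pi^{t,\varepsilon})-J(t,\hat\pi)\bigr)$ equals, almost surely, a quadratic function of $\eta$ of the form $\Lambda^{(1)}_t\eta - \tfrac12\Lambda^{(2)}_t\eta^2$ with $\Lambda^{(2)}_t=(\zeta+1)e^{-\zeta Y_t}\sigma_t^2+\gamma\sigma_t^2>0$ (the coefficient of $\eta^2$ collects $\sigma^2$ from the $-\tfrac12\sigma^2\eta^2$ drift term weighted by $e^{-\zeta Y}$, the $\sigma^2$ from the It\^o correction $e^{-\zeta(\hat R+Y)}\cdot\zeta\sigma^2\eta^2$ coming from the cross term in the $Y$-generator, and the $\gamma\sigma^2$ from $\operatorname{Var}_t[\hat\xi^{t,\varepsilon}_T]\sim\sigma_t^2\eta^2\varepsilon$) and $\Lambda^{(1)}_t = e^{-\zeta Y_t}\theta_t\sigma_t - \zeta e^{-\zeta Y_t}\sigma_t(Z^1_t+\rho Z_t) - \gamma\sigma_t(\widetilde Z^1_t+\rho\widetilde Z_t)$. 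The equilibrium condition $\limsup_{\varepsilon\to0}\varepsilon^{-1}(J(t,\hat\pi^{t,\varepsilon})-J(t,\hat\pi))\le0$ for every $\eta$ forces the linear coefficient to vanish, $\Lambda^{(1)}_t=0$, and solving for $\hat\pi_t=\sigma_t^{-1}u_t$ (equivalently $\eta$'s optimal value being $0$) gives precisely \eqref{eq:barpi}; plugging this $\hat\pi$ back into the $Y,\widetilde Y$ generators produces the coupled system \eqref{factor}. The main obstacle I anticipate is making the two limiting steps fully rigorous at the level of almost-sure, conditional-expectation statements: interchanging $\limsup_{\varepsilon\to0}$ with the conditional expectation and invoking Lebesgue differentiation pathwise requires the uniform estimates of Lemma \ref{diffpro} together with careful handling of the $O(|\hat\xi^{t,\varepsilon}_T|^2)$ and $\operatorname{Var}_t$ remainders — in particular checking that the cross terms between the $\varepsilon$-order drift contribution and the stochastic integral $\int_t^{t+\varepsilon}\sigma_s\eta\,dB_s$ contribute to $\Lambda^{(1)},\Lambda^{(2)}$ correctly and that no hidden $O(\sqrt\varepsilon)$ term survives. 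The quadratic (rather than linear) dependence of the drift $a(s,\pi)$ on $\pi$, absent in \cite{Hamaguchi_2021}, is exactly what makes this bookkeeping more delicate and what produces the $\zeta+1$ (rather than $\zeta$) in the denominator of \eqref{eq:barpi}.
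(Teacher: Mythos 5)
Your proposal follows essentially the same route as the paper: construct $\tilde{Y}$ from $\mathbb{E}_\cdot[\hat{R}_T]$ and $Y$ by forcing $e^{-\zeta(\hat{R}+Y)}$ to be the martingale $\mathbb{E}_\cdot[e^{-\zeta \hat{R}_T}]$ (the paper's Lemma \ref{lma:3.4} does exactly this via $\alpha_s=\mathbb{E}_s[U'(R_T)]$ and $Y=-\tfrac{1}{\zeta}\log\alpha-R$), expand $J(t,\hat{\pi}^{t,\varepsilon})-J(t,\hat{\pi})$ to second order, bound the second-order remainder by $C(\|\eta\|_\infty)|\eta|^2$ uniformly in $\varepsilon$ (Lemma \ref{lma:3.5}), pass to the limit along a sequence $\varepsilon_n^t$ for a.e.\ $t$ by a Lebesgue-differentiation argument (Lemma \ref{lma:3.6}), and kill the linear-in-$\eta$ coefficient by the sign-and-scale choice $\eta=\pm\delta_m$, $\delta_m\to0$. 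Your handling of the variance term via $2\operatorname{Cov}_t(\hat{R}_T,\hat{\xi}^{t,\varepsilon}_T)+\operatorname{Var}_t[\hat{\xi}^{t,\varepsilon}_T]$ is a cosmetic variant of the paper's law-of-total-variance computation and leads to the same integrand.

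Two slips in your displayed formulas should be fixed, though neither changes the architecture. First, $\operatorname{Var}_t[\hat{\xi}^{t,\varepsilon}_T]$ is \emph{not} $o(\varepsilon)$: since $\hat{\xi}^{t,\varepsilon}$ contains $\int_t^{t+\varepsilon}\sigma_s\eta\,\md B_s$, this term is of order $\sigma_t^2\eta^2\varepsilon$ and contributes exactly the $\gamma\sigma_t^2$ piece of the quadratic coefficient — which you yourself then include in $\Lambda^{(2)}_t$, so the two statements contradict each other; the correct statement is that it is $O(\varepsilon\eta^2)$ and is absorbed into the quadratic-in-$\eta$ part. Second, your displayed $\Lambda^{(1)}_t$ omits the $\hat{\pi}$-dependent terms: the full linear coefficient is $\sigma_t\bigl[e^{-\zeta Y_t}\theta_t-\zeta e^{-\zeta Y_t}(Z^1_t+\rho Z_t)-\gamma(\tilde{Z}^1_t+\rho\tilde{Z}_t)-\bigl((\zeta+1)e^{-\zeta Y_t}+\gamma\bigr)\sigma_t\hat{\pi}_t\bigr]$, coming from the $-\sigma_t^2\hat{\pi}_t\eta$ drift term, the $-\zeta e^{-\zeta Y}\sigma_t\hat{\pi}_t\sigma_t\eta$ cross term, and the $-\gamma\sigma_t\hat{\pi}_t\sigma_t\eta$ covariance term. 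As written, $\Lambda^{(1)}_t=0$ contains no $\hat{\pi}_t$ and therefore cannot be ``solved for $\hat{\pi}_t$'' to yield \eqref{eq:barpi}; with the omitted terms restored, setting the coefficient to zero gives precisely \eqref{eq:barpi}, matching the paper's first-order condition.
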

\begin{remark}\label{complete}
When the market model becomes complete, i.e., the second Brownian motion $\bar{B}$ is absent, the BSDE system \eqref{factor} degenerates to
\begin{equation}\label{completeequ}
     \left\{
     \begin{aligned}
         &\md Y_s=\left(\frac{\zeta}{2}|Z_s+\hat{u}_s|^2-a(s,\hat{\pi}_s)\right)\md s+Z_s\md B_s ,\\
         &\md \tilde{Y}_s=-a(s,\hat{\pi}_s)\md s+\tilde{Z}\md B_s,\\
         &Y_T=0,\\
         &\tilde{Y_T}=0.
     \end{aligned}
     \right.
     \end{equation}
  The equilibrium strategy takes the form $\hat{\pi}=\sigma^{-1}\frac{ e^{-\zeta Y}\theta-\zeta e^{-\zeta Y}Z-\gamma\tilde{Z}}{(\zeta+1)e^{-\zeta Y}+\gamma}$.
    We observe that the BSDE system admits a solution when $Z=0$ and $\tilde{Z}=0$, thus an equilibrium is given by
    $\hat{\pi}=\sigma^{-1}\frac{e^{-\zeta Y}\theta}{(\zeta+1)e^{-\zeta Y}+\gamma}$,
   where $Y$ is the solution of the ODE with terminal condition:
    \begin{align}\label{eq:Y:complete}
    \left\{
    \begin{aligned}
        &\md Y_s=\left(\frac{\zeta}{2}|\hat{u}_s|^2-a(s,\hat{\pi}_s)\right)\md s,\\
        &Y_T=0.
    \end{aligned}
    \right.
    \end{align}
    Let $A=e^{-\zeta Y}$, then $A$ satisfies the  following ODE 
    \begin{align}\label{eq:ode:A}
       \left\{
    \begin{aligned}
    &A_t=A(f(A)),\quad t\in[0,T),\\
    &A(T)=1,
    \end{aligned}\right.
    \end{align} 
    where
    \begin{align*}
       f(A):= -\frac{\zeta(\zeta+1)A^2\theta^2}{2((\zeta+1)A+\gamma)^2}+\frac{\zeta A\theta^2}{(\zeta+1)A+\gamma}+r\zeta.
    \end{align*}
     It is clear that $A$ has a lower bound $0$.  Combining $A'=A(f(A))$ and the fact that $f(A)$ is bounded, we deduce that $A$ is bounded from above and thus there exists a unique solution and $A$ is  positive.
\end{remark}
\begin{remark}
    \cite{Hamaguchi_2021} derives a one-dimensional FBSDE as a necessary condition for an equilibrium in a general non-exponential discounting time-inconsistent problem. In our setting, however, the introduced variance must be represented by another BSDE for $\tilde{Y}$, which leads to strong coupling in our system. Although \cite{dai_dynamic_2021} study an MV framework that includes a variance term in the objective function, it is straightforward to see that the mean term can be represented by the BSDE for $\tilde{Y}$. In fact, $\tilde{Y} := \mathbb{E}_t[R_T - R_t]$ (see Lemma \ref{lma:3.4} below) corresponds precisely to the mean term at time $t$. Hence, their analysis can also focus on a one-dimensional BSDE. Unlike \cite{Hamaguchi_2021} and \cite{dai_dynamic_2021}, our problem naturally leads to a two-dimensional fully coupled quadratic BSDE system.
\end{remark}
To prove Theorem \ref{nes}, we first develop several auxiliary results. Lemma \ref{lma:3.4} characterizes the difference between the objective functional of a strategy $\pi\in\Pi_p$ and that of its perturbed counterpart. This lemma is inspired by \cite{horst2014forward}, \cite{Hamaguchi_2021}, and \cite{dai_dynamic_2021}, and constitutes a key ingredient in establishing Theorem \ref{nes}.
The main idea in proving Lemma \ref{lma:3.4} is to apply a Taylor expansion and then compute the resulting expectations via a suitably constructed BSDE with an appropriate terminal condition. The proof of Lemma \ref{lma:3.4} is similar to that of Lemma 3.2 in \cite{Hamaguchi_2021}. We include it in Appendix \ref{pflma:3.4} because our setting does not fit exactly within the framework of \cite{Hamaguchi_2021} or other existing results.
\begin{lemma}\label{lma:3.4}
    For any $p>1$ and $\pi\in\Pi_p$, there exist processes $(Y,\tilde{Y})$ and $(Z^1,Z,\tilde{Z}^1, \tilde{Z})$ satisfying the BSDEs \eqref{factor} (with $\hat{\pi}$ replaced by $\pi$)  such that, for any $t\in [0,T)$, $\eta\in L_{\mathcal{F}_t}^{\infty}(\Omega,\mathbb{R}^d)$ and $\varepsilon\in(0,T-t)$, we have, almost surely,
     
    \begin{align}\label{eq:lma:3.4}
         &J(t,\pi^{t,\varepsilon})-J(t,\pi)\nonumber\\
         =&\mathbb{E}_t\left[\int_t^{t+\varepsilon}e^{-\zeta(R_s+Y_s-R_t)}(a(s,\pi^{t,\varepsilon}_s)-a(s,\pi_s)-\zeta(Z_s^1+\sigma_s\pi_s+\rho Z_s)\cdot \sigma_s\eta )\md s\right]\nonumber\\
         &-\zeta\mathbb{E}_t\left[\int_0^1 e^{-\zeta(R_T+\lambda\xi_T^{t,\varepsilon}-R_t)}(1-\lambda)d\lambda|\xi_T^{t,\varepsilon}|^2\right]\nonumber\\&+\frac{\gamma}{2}\mathbb{E}_t\left[\int_t^{t+\varepsilon}((\sigma_s\pi_s+\tilde{Z}_s^1+\rho \tilde{Z}_s)^2-(\sigma_s\pi_s^{t,\varepsilon}+\tilde{Z}_s^1+\rho \tilde{Z}_s)^2)\md s\right].
    \end{align}
    
\end{lemma}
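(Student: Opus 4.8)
\medskip
\noindent\emph{Proof strategy.}\ The plan is to split $J(t,\pi^{t,\varepsilon})-J(t,\pi)$ into the expected-utility contribution and the $(-\gamma/2)$-weighted variance contribution, and to evaluate each by a Taylor expansion followed by an It\^o/BSDE computation. I first construct the six processes. Set $Y_s:=-\tfrac1\zeta\log\mathbb{E}_s[\me^{-\zeta(R_T-R_s)}]$, so that $\me^{-\zeta(R_s+Y_s)}=\mathbb{E}_s[\me^{-\zeta R_T}]$ is a martingale (well-defined, positive, and lying in $L^p_{\mathbb{F}}(\Omega;C([0,T];\mR))$ thanks to $\pi\in\Pi_p$, Lemma~\ref{lma:2.4} and the estimates of Lemma~\ref{diffpro}), and read $(Z^1,Z)$ off its martingale representation against $B,\bar B$. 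An It\^o computation using $\md R_s=a(s,\pi_s)\md s+\sigma_s\pi_s\md B_s$, $\md\langle B\rangle_s=\md\langle\bar B\rangle_s=\md s$, $\md\langle B,\bar B\rangle_s=\rho\,\md s$ and the algebraic identity $|x+\rho y|^2+(1-\rho^2)|y|^2=|x|^2+2\rho xy+|y|^2$ then shows that the drift of $Y$ must be $\tfrac{\zeta}{2}|Z_s^1+\sigma_s\pi_s+\rho Z_s|^2+\tfrac{\zeta(1-\rho^2)}{2}|Z_s|^2-a(s,\pi_s)$; i.e.\ this particular driver is exactly what cancels the drift of $\me^{-\zeta(R_s+Y_s)}$, and $(Y,Z^1,Z)$ solves the first line of \eqref{factor} with $\hat\pi$ replaced by $\pi$. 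Likewise, putting $\tilde Y_s:=\mathbb{E}_s[R_T-R_s]$, so that $\tilde Y_s+R_s=\mathbb{E}_s[R_T]$ is a martingale, and reading off $(\tilde Z^1,\tilde Z)$, the triple $(\tilde Y,\tilde Z^1,\tilde Z)$ solves the second (linear) line of \eqref{factor}; the integrability in item~1 comes from Lemmas~\ref{lma:2.4} and~\ref{diffpro}.

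For the utility contribution, use $R^{t,\varepsilon}_t=R_t$ and $R^{t,\varepsilon}_s=R_s+\xi^{t,\varepsilon}_s$ for $s\ge t$, so that $R^{t,\varepsilon}_T-R^{t,\varepsilon}_t=(R_T-R_t)+\xi^{t,\varepsilon}_T$ and
$$
-\tfrac1\zeta\mathbb{E}_t\big[\me^{-\zeta(R^{t,\varepsilon}_T-R^{t,\varepsilon}_t)}\big]+\tfrac1\zeta\mathbb{E}_t\big[\me^{-\zeta(R_T-R_t)}\big]=-\tfrac1\zeta\mathbb{E}_t\big[\me^{-\zeta(R_T-R_t)}\big(\me^{-\zeta\xi^{t,\varepsilon}_T}-1\big)\big].
$$
Taylor's formula with integral remainder, $\me^{-\zeta\xi^{t,\varepsilon}_T}-1=-\zeta\xi^{t,\varepsilon}_T+\zeta^2|\xi^{t,\varepsilon}_T|^2\int_0^1\me^{-\zeta\lambda\xi^{t,\varepsilon}_T}(1-\lambda)\,\md\lambda$, produces the second line of \eqref{eq:lma:3.4} from the quadratic remainder, and leaves $\mathbb{E}_t[\me^{-\zeta(R_T-R_t)}\xi^{t,\varepsilon}_T]$ to be identified. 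For the latter I apply It\^o's product rule to $s\mapsto\me^{-\zeta(R_s+Y_s-R_t)}\xi^{t,\varepsilon}_s$ on $[t,T]$: since $\md\big(\me^{-\zeta(R_s+Y_s-R_t)}\big)=-\zeta\me^{-\zeta(R_s+Y_s-R_t)}\big[(Z^1_s+\sigma_s\pi_s)\md B_s+Z_s\md\bar B_s\big]$, the $\md\xi^{t,\varepsilon}$-drift contributes $a(s,\pi^{t,\varepsilon}_s)-a(s,\pi_s)$ and the cross-variation term contributes $-\zeta(Z^1_s+\sigma_s\pi_s+\rho Z_s)\cdot\sigma_s\eta$, both supported on $[t,t+\varepsilon)$; taking $\mathbb{E}_t[\cdot]$, discarding the (true) martingale terms, and using $Y_T=0$, $\xi^{t,\varepsilon}_t=0$ gives exactly the first line of \eqref{eq:lma:3.4}.

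For the variance contribution, use $\md(\tilde Y_s+R_s)=(\tilde Z^1_s+\sigma_s\pi_s)\md B_s+\tilde Z_s\md\bar B_s$ and $\tilde Y_T=0$ to get $R_T-\mathbb{E}_t[R_T]=\int_t^T(\tilde Z^1_s+\sigma_s\pi_s)\md B_s+\int_t^T\tilde Z_s\md\bar B_s$, hence $\textup{Var}_t[R_T]=\mathbb{E}_t\big[\int_t^T\big(|\tilde Z^1_s+\sigma_s\pi_s+\rho\tilde Z_s|^2+(1-\rho^2)|\tilde Z_s|^2\big)\md s\big]$, and the analogous formula for $R^{t,\varepsilon}_T$ in terms of the solution $(\tilde Y^{t,\varepsilon},\tilde Z^{1,t,\varepsilon},\tilde Z^{t,\varepsilon})$ of the $\tilde Y$-equation associated with $\pi^{t,\varepsilon}$. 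Since $\pi^{t,\varepsilon}=\pi$ on $[t+\varepsilon,T]$, that solution coincides with $(\tilde Y,\tilde Z^1,\tilde Z)$ on $[t+\varepsilon,T]$, so $\textup{Var}_t[R^{t,\varepsilon}_T]-\textup{Var}_t[R_T]$ localizes to $[t,t+\varepsilon)$; writing $\textup{Var}_t[R^{t,\varepsilon}_T]-\textup{Var}_t[R_T]=2\,\textup{Cov}_t[R_T,\xi^{t,\varepsilon}_T]+\textup{Var}_t[\xi^{t,\varepsilon}_T]$ and evaluating both terms by It\^o's formula applied to $\xi^{t,\varepsilon}$ and to $N_s:=R^{t,\varepsilon}_s+\tilde Y_s$ reduces this difference to $\mathbb{E}_t\big[\int_t^{t+\varepsilon}\big((\sigma_s\pi^{t,\varepsilon}_s+\tilde Z^1_s+\rho\tilde Z_s)^2-(\sigma_s\pi_s+\tilde Z^1_s+\rho\tilde Z_s)^2\big)\md s\big]$; multiplying by $-\gamma/2$ yields the third line of \eqref{eq:lma:3.4}, and adding the three contributions proves the identity.

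The step I expect to be the main obstacle is the integrability bookkeeping that turns the local martingales in these It\^o computations into true martingales (so that their $\mathbb{E}_t$ vanish) and makes every expectation finite: one has to control $\sup_s\me^{-p'\zeta(R_s+Y_s)}$ for some $p'>1$, the moments $\mathbb{E}_t[\sup_{s\in[t,T]}|\xi^{t,\varepsilon}_s|^{k}]$, and the $\bmo$-norms of the various integrands all at once, which is where $\pi\in\Pi_p$, Lemma~\ref{lma:2.4}, Lemma~\ref{diffpro}, Corollary~\ref{coro:admi:piepsilon} and the John--Nirenberg and energy inequalities for $\bmo$-martingales collected in Appendices~\ref{secA} and~\ref{secB} enter; a secondary subtlety is the precise tracking, in the variance step, of how the perturbation alters the diffusion coefficients of $\tilde Y$ on the short interval $[t,t+\varepsilon)$.
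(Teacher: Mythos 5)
Your construction of the six processes and your treatment of the expected-utility term coincide with the paper's own proof: the paper likewise sets $e^{-\zeta(R_s+Y_s)}=\mathbb{E}_s[e^{-\zeta R_T}]$ and $\tilde Y_s=\mathbb{E}_s[R_T]-R_s$, reads $(Z^1,Z,\tilde Z^1,\tilde Z)$ off the martingale representations, Taylor-expands with integral remainder, and identifies the first-order term by It\^o's product rule applied to $e^{-\zeta(R_s+Y_s)}\xi^{t,\varepsilon}_s$ (the paper first reduces to $[t,t+\varepsilon]$ via the tower property, you integrate over $[t,T]$; this is immaterial since the drift and cross-variation are supported on $[t,t+\varepsilon)$). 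The only genuine divergence is the variance term: the paper applies the law of total variance $\textup{Var}_t[\,\cdot\,]=\textup{Var}_t[\mathbb{E}_{t+\varepsilon}[\,\cdot\,]]+\mathbb{E}_t[\textup{Var}_{t+\varepsilon}[\,\cdot\,]]$ together with $R_T-R_{t+\varepsilon}=R^{t,\varepsilon}_T-R^{t,\varepsilon}_{t+\varepsilon}$, which cancels the tail over $[t+\varepsilon,T]$ exactly and reduces the difference to variances of two stochastic integrals over $[t,t+\varepsilon]$, whereas you expand $\textup{Var}_t[R^{t,\varepsilon}_T]-\textup{Var}_t[R_T]=2\,\textup{Cov}_t[R_T,\xi^{t,\varepsilon}_T]+\textup{Var}_t[\xi^{t,\varepsilon}_T]$ and compute by It\^o; the two routes are equivalent in substance, and the perturbed solution $(\tilde Y^{t,\varepsilon},\tilde Z^{1,t,\varepsilon},\tilde Z^{t,\varepsilon})$ you introduce becomes superfluous once you switch to the covariance decomposition. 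One caveat applies equally to both arguments: the increment of $R^{t,\varepsilon}+\tilde Y$ over $[t,t+\varepsilon]$ carries the random drift $\int_t^{t+\varepsilon}\bigl(a(s,\pi^{t,\varepsilon}_s)-a(s,\pi_s)\bigr)\md s$ in addition to its martingale part, so the exact variance difference contains additional terms (of the form $2\mathbb{E}_t\bigl[\int_t^{t+\varepsilon}(M_s-M_t+\xi^{t,\varepsilon}_s)\,(a(s,\pi^{t,\varepsilon}_s)-a(s,\pi_s))\md s\bigr]$ minus the square of the conditional mean of the drift, with $M_s=\mathbb{E}_s[R_T]$) beyond the third line of \eqref{eq:lma:3.4}; these are $o(\varepsilon)$ and harmless for Theorem \ref{nes}, but neither your sketch nor the paper's display \eqref{cp2} accounts for them, so this is a shared looseness rather than a defect specific to your approach.
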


Lemma \ref{lma:3.4} shows that, for a fixed strategy $\pi\in\Pi_p$ ($p>1$), the equality \eqref{eq:lma:3.4} provides an explicit characterization of the difference between the objective functional of $\pi$ and that of its perturbed counterpart through the solution of the BSDE system \eqref{factor} (with $\hat{\pi}$ replaced by $\pi$). In contrast, the following Corollary \ref{corol:verifyequilibrium} provides the same characterization by using only a simplified form of \eqref{eq:lma:3.4}, under the additional assumption that the simplified BSDE system \eqref{eq:simplifiedBSDE} admits a solution. This corollary will be used in several subsequent verification theorems. The proof of Corollary \ref{corol:verifyequilibrium} follows similar lines to that of Lemma \ref{lma:3.4}, and a sketch of the proof is provided in Appendix \ref{pfcoro:verifyequilibrium}.
\begin{corollary}\label{corol:verifyequilibrium}
    For any $p>1$ and $\pi\in\Pi_p$, assume that there exist continuous and adapted processes $(Y,\tilde{Y})$ such that $\exp\left(-\zeta(R+Y)\right)\in L^p_{\mathbb{F}}(\Omega;C([0,T];\mathbb{R}))$ and $\tilde{Y}\in L^2_{\mathbb{F}}(\Omega; C([0, T]; \mathbb{R}))$, as well as  progressively measurable processes $(Z, \tilde{Z})$ such that $\int_0^T \left(|Z_s|^2+|\tilde{Z}_s|^2\right)\md s<\infty \,\, a.s..$
    Moreover, suppose that $(Y,\tilde{Y},Z,\tilde{Z})$ satisfy the following  BSDE
      \begin{equation}\label{eq:simplifiedBSDE}
     \left\{
     \begin{aligned}
         &\md Y_s=\left(\frac{\zeta}{2}|\sigma_s\pi_s+\rho Z_s|^2+\frac{\zeta(1-\rho^2)}{2}|Z_s|^2-a(s, \pi_s)\right)\md s+Z_s\md \bar{B}_s ,\\
         &\md \tilde{Y}_s=-a(s,\pi_s)\md s+\tilde{Z}_s\md \bar{B}_s,\\
         &Y_T=0,\\
         &\tilde{Y_T}=0.
     \end{aligned}
     \right.
     \end{equation}
     Then for any $t\in [0,T)$, $\eta\in L_{\mathcal{F}_t}^{\infty}(\Omega,\mathbb{R}^d)$ and $\varepsilon\in(0,T-t)$, we have, almost surely,
     $$
    \begin{aligned}
         &J(t,\pi^{t,\varepsilon})-J(t,\pi)\\
         =&\mathbb{E}_t\left[\int_t^{t+\varepsilon}e^{-\zeta(R_s+Y_s-R_t)}(a(s,\pi^{t,\varepsilon}_s)-a(s,\pi_s)-\zeta(\sigma_s\pi_s+\rho Z_s)\cdot \sigma_s\eta )\md s\right]\\
         &-\zeta\mathbb{E}_t\left[\int_0^1 e^{-\zeta(R_T+\lambda\xi_T^{t,\varepsilon}-R_t)}(1-\lambda)d\lambda|\xi_T^{t,\varepsilon}|^2\right]\\&+\frac{\gamma}{2}\mathbb{E}_t\left[\int_t^{t+\varepsilon}((\sigma_s\pi_s+\rho \tilde{Z}_s)^2-(\sigma_s\pi_s^{t,\varepsilon}+\rho \tilde{Z}_s)^2)\md s\right].
    \end{aligned}
    $$
\end{corollary}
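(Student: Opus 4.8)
The plan is to mimic the proof of Lemma~\ref{lma:3.4}, but starting directly from the ansatz BSDE system \eqref{eq:simplifiedBSDE} rather than deriving the BSDEs from \eqref{factor}. First I would note that the objective decomposes as $J(t,\pi)=-\tfrac{1}{\zeta}\mathbb{E}_t[e^{-\zeta(R_T^{\pi}-R_t^{\pi})}]-\tfrac{\gamma}{2}\var_t[R_T^{\pi}]$, so that
$$
J(t,\pi^{t,\varepsilon})-J(t,\pi)=-\tfrac{1}{\zeta}\Big(\mathbb{E}_t[e^{-\zeta(R^{t,\varepsilon}_T-R_t)}]-\mathbb{E}_t[e^{-\zeta(R_T-R_t)}]\Big)-\tfrac{\gamma}{2}\Big(\var_t[R^{t,\varepsilon}_T]-\var_t[R_T]\Big),
$$
where I have used that the perturbation leaves $R_t$ unchanged (since $\eta\in L^\infty_{\mathcal{F}_t}$ and the perturbation is supported on $[t,t+\varepsilon)$). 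For the utility part I would write $e^{-\zeta(R^{t,\varepsilon}_T-R_t)}-e^{-\zeta(R_T-R_t)}$ via the exact Taylor formula $g(x+h)-g(x)=g'(x)h+h^2\int_0^1 g''(x+\lambda h)(1-\lambda)\,d\lambda$ with $g(y)=e^{-\zeta(y-R_t)}$, $x=R_T$, $h=\xi_T^{t,\varepsilon}$, which produces exactly the term $-\zeta\mathbb{E}_t\big[\int_0^1 e^{-\zeta(R_T+\lambda\xi_T^{t,\varepsilon}-R_t)}(1-\lambda)\,d\lambda\,|\xi_T^{t,\varepsilon}|^2\big]$ plus a first-order term $\mathbb{E}_t\big[e^{-\zeta(R_T-R_t)}(-\zeta)\xi_T^{t,\varepsilon}\big]$; the moment bounds of Lemma~\ref{diffpro} together with $\pi\in\Pi_p$ justify all integrability.

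The core computation is to evaluate the first-order utility term and the variance difference using the BSDE $Y$ and the auxiliary process $\tilde Y$. For the utility first-order term, I would introduce the process $P_s:=e^{-\zeta(R_s+Y_s-R_t)}$ (for $s\ge t$) and apply It\^o's formula; the drift of $R+Y$ is, by \eqref{eq:R} and \eqref{eq:simplifiedBSDE}, designed so that the bounded-variation part of $P$ telescopes against the second-order martingale correction. Concretely, the choice of driver $\tfrac{\zeta}{2}|\sigma_s\pi_s+\rho Z_s|^2+\tfrac{\zeta(1-\rho^2)}{2}|Z_s|^2-a(s,\pi_s)$ is exactly what makes $P$, on $[t+\varepsilon,T]$ (where the perturbation is off), a local martingale, hence $\mathbb{E}_t[P_T]=\mathbb{E}_t[P_{t+\varepsilon}]$ modulo a localization/uniform-integrability argument using $\exp(-\zeta(R+Y))\in L^p_{\mathbb F}(\Omega;C)$ and the BMO property of $Z$; and on $[t,t+\varepsilon)$ the extra drift created by replacing $\pi$ by $\pi^{t,\varepsilon}$ produces precisely the integrand $e^{-\zeta(R_s+Y_s-R_t)}\big(a(s,\pi^{t,\varepsilon}_s)-a(s,\pi_s)-\zeta(\sigma_s\pi_s+\rho Z_s)\cdot\sigma_s\eta\big)$ after expanding $|\sigma_s\pi^{t,\varepsilon}_s+\rho Z_s|^2-|\sigma_s\pi_s+\rho Z_s|^2$ and combining with $-\zeta\xi^{t,\varepsilon}_s$ dynamics. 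Since $Y_T=0$, $P_T=e^{-\zeta(R_T-R_t)}$, so this yields $-\tfrac1\zeta(\mathbb E_t[e^{-\zeta(R_T-R_t)}]-\mathbb E_t[e^{-\zeta(R_T-R_t)}]\cdot(\text{one}))$ reorganized into the stated first line. For the variance part, I would use that $\tilde Y_s=\mathbb{E}_s[R_T-R_s]$ (this is the content of Lemma~\ref{lma:3.4}'s construction: $\tilde Y$ solves a linear BSDE with driver $-a(s,\pi_s)$ and zero terminal value, and $dR_s=a(s,\pi_s)ds+\sigma_s\pi_s dB_s$, so $R_s+\tilde Y_s$ is a martingale), hence $\var_t[R_T]=\mathbb{E}_t\big[\int_t^T|\sigma_s\pi_s+\rho\tilde Z_s|^2 ds\big]$ by It\^o isometry applied to the martingale $R_s+\tilde Y_s$ whose diffusion coefficient against $(B,\bar B)$ is $(\sigma_s\pi_s+\tilde Z^1_s,\ \tilde Z_s)$ — wait, here I must be careful: in the simplified system $\tilde Y$ has only a $\bar B$-component $\tilde Z_s$, so $R_s+\tilde Y_s$ has diffusion $\sigma_s\pi_s\,dB_s+\tilde Z_s\,d\bar B_s$ and, since $d\langle B,\bar B\rangle_s=\rho\,ds$, its quadratic variation density is $|\sigma_s\pi_s|^2+2\rho\sigma_s\pi_s\tilde Z_s+|\tilde Z_s|^2=|\sigma_s\pi_s+\rho\tilde Z_s|^2+(1-\rho^2)|\tilde Z_s|^2$. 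Taking the difference of the two variances, the $(1-\rho^2)|\tilde Z_s|^2$ pieces and the integrals over $[t+\varepsilon,T]$ cancel (the perturbation only changes $\pi$ on $[t,t+\varepsilon)$, and $\tilde Z$ is unchanged in the relevant sense because we use the \emph{same} $\tilde Y,\tilde Z$ attached to $\pi$), leaving exactly $-\tfrac\gamma2\mathbb E_t\big[\int_t^{t+\varepsilon}(|\sigma_s\pi^{t,\varepsilon}_s+\rho\tilde Z_s|^2-|\sigma_s\pi_s+\rho\tilde Z_s|^2)ds\big]$, i.e. the third line.

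The main obstacle, as the excerpt already signals, is that this is a \emph{corollary stated as an ansatz}: we are \emph{given} processes $(Y,\tilde Y,Z,\tilde Z)$ solving \eqref{eq:simplifiedBSDE} with only the stated integrability ($\exp(-\zeta(R+Y))\in L^p_{\mathbb F}(\Omega;C)$, $\tilde Y\in L^2_{\mathbb F}(\Omega;C)$, $\int_0^T(|Z|^2+|\tilde Z|^2)<\infty$ a.s.), and we must push through the It\^o/martingale argument under these minimal hypotheses rather than re-deriving everything as in Lemma~\ref{lma:3.4}. So the delicate points are: (i) justifying $\mathbb{E}_t[P_T]=\mathbb{E}_t[P_{t+\varepsilon}]$ — i.e. that the local martingale $P$ is a true martingale on $[t+\varepsilon,T]$ — which I would handle by a localizing sequence of stopping times together with the $L^p$-continuity of $\exp(-\zeta(R+Y))$ and dominated convergence, exactly paralleling the corresponding step in Appendix~\ref{pflma:3.4}; (ii) justifying the It\^o-isometry identity $\var_t[R_T]=\mathbb E_t\big[\int_t^T(|\sigma_s\pi_s+\rho\tilde Z_s|^2+(1-\rho^2)|\tilde Z_s|^2)ds\big]$, which follows from $R+\tilde Y$ being a square-integrable martingale (using $R^\pi\in L^2_{\mathbb F}(\Omega;C)$ from Lemma~\ref{lma:2.4}, $\tilde Y\in L^2_{\mathbb F}(\Omega;C)$, and that the stochastic integral $\int \tilde Z\,d\bar B$ has finite second moment as the difference of two $L^2$ martingales); and (iii) checking that the perturbation genuinely leaves $(\tilde Y,\tilde Z)$ attached to $\pi$ untouched over $[t+\varepsilon,T]$ so the cancellations are legitimate — this is immediate from the definition of the perturbed strategy and the fact that \eqref{eq:simplifiedBSDE} is posed for the \emph{fixed} $\pi$, not for $\pi^{t,\varepsilon}$. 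Once these three justifications are in place, the remaining algebra (expanding the quadratics, matching the $[t,t+\varepsilon)$ integrands) is routine bookkeeping identical to that in the proof of Lemma~\ref{lma:3.4}, which is why only a sketch is needed in Appendix~\ref{pfcoro:verifyequilibrium}.
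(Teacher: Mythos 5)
Your proposal is correct and follows essentially the same route as the paper's proof: starting from the given solution of \eqref{eq:simplifiedBSDE}, one checks via It\^o's formula that $e^{-\zeta(R_s+Y_s)}$ and $R_s+\tilde{Y}_s$ are true martingales which, by the terminal conditions $Y_T=\tilde{Y}_T=0$, coincide with $\mathbb{E}_s[U'(R_T)]$ and $\mathbb{E}_s[R_T]$, and then the argument of Lemma \ref{lma:3.4} is rerun verbatim. The only cosmetic slip is the claim that $P_s=e^{-\zeta(R_s+Y_s-R_t)}$ is a local martingale ``on $[t+\varepsilon,T]$'': since $P$ is built from the \emph{unperturbed} $R$, it is a local martingale on all of $[t,T]$, and the perturbation enters only through $\xi^{t,\varepsilon}$ (via the cross-variation with $\alpha$ on $[t,t+\varepsilon)$), exactly as you in fact handle it.
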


To calculate $\lim\limits_{\varepsilon\rightarrow0} \frac{1}{\varepsilon}(J(t,\pi^{t,\varepsilon})-J(t,\pi))$, we rely on Lemmas \ref{lma:3.5} and \ref{lma:3.6} below.
Lemma \ref{lma:3.5} provides an upper bound for the second-order term appearing in the expansion of $J(t,\pi^{t,\varepsilon})-J(t,\pi)$ given in Lemma \ref{lma:3.4}, while Lemma \ref{lma:3.6} establishes the convergence of the remaining two terms after division by $\varepsilon$. The proofs of Lemmas \ref{lma:3.5} and \ref{lma:3.6} are given in Appendix \ref{pflma:3.5} and Appendix \ref{pflma:3.6}, respectively.
\begin{lemma}\label{lma:3.5}
For any $p>1$, $\pi\in\Pi_p$, $t\in [0,T)$ and $\eta\in L_{\mathcal{F}_t}^{\infty}(\Omega,\mathbb{R})$, it holds that
  $$
   \limsup\limits_{\varepsilon\rightarrow 0}\frac{1}{\varepsilon} \mathbb{E}_t\left[\int_0^1 e^{-\zeta(R_T+\lambda\xi_T^{t,\varepsilon}-R_t)}(1-\lambda)d\lambda|\xi_T^{t,\varepsilon}|^2\right]\le C(\|\eta\|_{\infty})|\eta|^2, \quad a.s.,
    $$
where $C$ is a constant independent of $\varepsilon$ and is non-decreasing with respect to $\|\eta\|_{\infty}$. 
\end{lemma}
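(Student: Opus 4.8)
The plan is to prove the stronger pointwise (in $\varepsilon$) bound
$\frac{1}{\varepsilon}\,\mathbb{E}_t\big[\int_0^1 e^{-\zeta(R_T+\lambda\xi_T^{t,\varepsilon}-R_t)}(1-\lambda)\,d\lambda\,|\xi_T^{t,\varepsilon}|^2\big]\le C(\|\eta\|_\infty)|\eta|^2$ for every $\varepsilon\in(0,T-t)$, which trivially yields the claimed $\limsup$ estimate (and in fact shows the quotient itself is bounded, not merely its $\limsup$). First I would split $e^{-\zeta(R_T+\lambda\xi_T^{t,\varepsilon}-R_t)}=e^{-\zeta(R_T-R_t)}e^{-\zeta\lambda\xi_T^{t,\varepsilon}}$, note that the first factor carries no dependence on $\varepsilon$ or $\eta$, and use the elementary pointwise bound $e^{-\zeta\lambda\xi_T^{t,\varepsilon}}\le e^{\zeta|\xi_T^{t,\varepsilon}|}$ for $\lambda\in[0,1]$ together with $\int_0^1(1-\lambda)\,d\lambda=\tfrac12$, reducing the quantity of interest to $\tfrac12\,\mathbb{E}_t\big[e^{-\zeta(R_T-R_t)}\,e^{\zeta|\xi_T^{t,\varepsilon}|}\,|\xi_T^{t,\varepsilon}|^2\big]$.

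Next I would apply the conditional Hölder inequality with exponents $(p,p')$, where $p>1$ is chosen so that $\pi\in\Pi_p$ and $p'=p/(p-1)$, pairing $e^{-p\zeta(R_T-R_t)}$ — whose conditional expectation is $\mathbb{P}$-a.s.\ finite exactly by the definition of $\Pi_p$ — against $e^{p'\zeta|\xi_T^{t,\varepsilon}|}|\xi_T^{t,\varepsilon}|^{2p'}$, and then apply Hölder once more with exponents $(2,2)$ to separate the exponential moment $\mathbb{E}_t[e^{2p'\zeta|\xi_T^{t,\varepsilon}|}]$ from the polynomial moment $\mathbb{E}_t[|\xi_T^{t,\varepsilon}|^{4p'}]$. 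For the exponential moment I invoke $e^{2p'\zeta|\xi|}\le e^{2p'\zeta\xi}+e^{-2p'\zeta\xi}$ and Lemma \ref{diffpro}(3) with $c=\pm 2p'\zeta$, obtaining a bound uniform in $\varepsilon$ and non-decreasing in $\|\eta\|_\infty$; for the polynomial moment I use Lemma \ref{diffpro}(2) with $k=2p'$ (after bounding $|\xi_T^{t,\varepsilon}|$ by $\sup_{s\in[t,T]}|\xi_s^{t,\varepsilon}|$), which gives $\mathbb{E}_t[|\xi_T^{t,\varepsilon}|^{4p'}]\le C(2p',\|\pi\|_{\bmo})(\varepsilon|\eta|^2)^{2p'}$. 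Tracking exponents, the polynomial part contributes $(\varepsilon|\eta|^2)^{p'}$ before the outer power $1/p'$, so it collapses to exactly $\varepsilon|\eta|^2$; assembling the pieces yields the desired bound, with the constant $C$ depending additionally on $\|\pi\|_{\bmo}$ and on the a.s.-finite random variable $(\mathbb{E}_t[e^{-p\zeta(R_T-R_t)}])^{1/p}$, but not on $\varepsilon$, and non-decreasing in $\|\eta\|_\infty$ (the only $\eta$-dependent factor being the one coming from Lemma \ref{diffpro}(3)).

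I do not expect a genuine obstacle; the argument is a bookkeeping exercise. The single point requiring care — and the reason the naive estimate "bound $e^{-\zeta(R_T-R_t)}$ by a constant" is unavailable — is that in the incomplete model with (possibly) unbounded strategies the increment $R_T-R_t$ is not bounded, so the only available control on the exponential term is the strict $L^p$-integrability with $p>1$ that is built into the admissibility class $\Pi_p$; the closely related subtlety is matching the Hölder exponents so that the two invocations of Lemma \ref{diffpro} combine to produce precisely the first power of $\varepsilon$ (any spare factor $\varepsilon^{\delta}$ with $\delta>0$ would equally suffice for the $\limsup$, but the natural choice gives $\varepsilon^1$ directly).
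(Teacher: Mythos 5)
Your proposal is correct and follows essentially the same route as the paper: reduce to a Hölder splitting that pairs the $e^{-p\zeta(R_T-R_t)}$ factor (finite by $\pi\in\Pi_p$) with a polynomial moment of $\xi_T^{t,\varepsilon}$ controlled by Lemma \ref{diffpro}(2) (producing the factor $\varepsilon|\eta|^2$) and an exponential moment controlled by Lemma \ref{diffpro}(3) (uniform in $\varepsilon$, non-decreasing in $\|\eta\|_\infty$). The only cosmetic difference is how the $\lambda$-integral is handled — you use the pointwise bound $e^{-\zeta\lambda\xi}\le e^{\zeta|\xi|}$ and a two-stage Hölder, whereas the paper uses convexity in $\lambda$ to reduce to the endpoints $\lambda=0,1$ and a single three-exponent Hölder — and both versions share the feature that the resulting "constant" also depends on the a.s.-finite $\mathcal{F}_t$-measurable quantity $(\mathbb{E}_t[e^{-p\zeta(R_T-R_t)}])^{1/p}$.
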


\begin{lemma}\label{lma:3.6}
For any $p>1$ and $\pi\in \Pi_p$, consider the pair $(Y,\tilde{Y})$ and $(Z,Z^1,\tilde{Z},\tilde{Z}^1)$ in Lemma \ref{lma:3.4}.  We deduce the existence of a measurable set $E_1 \subset [0,T)$ with $\operatorname{Leb}([0,T] \setminus E_1) = 0$  such that,
for any $t \in E_1$, there exists a sequence $\{\varepsilon_n^t\}_{n \in \mathbb{N}} \subset (0,T-t)$ satisfying $\lim\limits_{n \to \infty} \varepsilon_n^t = 0$
and for any perturbation $\eta\in L_{\mathcal{F}_t}^{\infty}(\Omega;\mR)$, it holds that 
\begin{align}\label{lma:3.5.1}
&\lim\limits_{n\rightarrow\infty} \frac{1}{\varepsilon_n^t}\mathbb{E}_t\left[\int_t^{t+\varepsilon_n^t} e^{-\zeta(R_s+Y_s-R_t)}\left(a(s,\pi^{t,\varepsilon_n^t}_s)-a(s,\pi_s)-\zeta(Z_s^1+\sigma_s\pi_s+\rho Z_s)\cdot \sigma_s\eta \right)\md s\right]\nonumber\\
=&e^{-\zeta Y_t}\left((\mu_t-r_t)\eta-\frac{1}{2}\sigma^2_t\eta^2-\sigma^2_t\eta\pi_t-\zeta(Z_t^1+\sigma_t\pi_t+\rho Z_t)\cdot \sigma_t\eta\right),\quad a.s.
\end{align}
and 
\begin{align}\label{lma:3.5.2}
    &\lim\limits_{n\rightarrow\infty} \frac{1}{\varepsilon_n^t}\mathbb{E}_t\left[\int_t^{t+\varepsilon_n^t} \left((\sigma_s\pi_s+\tilde{Z}^1_s+\rho\tilde{Z}_s)^2-(\sigma_s\pi_s^{t,\varepsilon}+\tilde{Z}^1_s+\rho\tilde{Z}_s)^2\right)\md s\right]\nonumber\\
=&-2\sigma_t\eta(\sigma_t\pi_t+\tilde{Z}^1_t+\rho\tilde{Z}_t-\sigma_t^2\eta^2),\quad a.s..
\end{align}
\end{lemma}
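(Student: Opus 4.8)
The plan is to reduce both limits to a single conditional Lebesgue differentiation statement applied to a fixed finite family of $\eta$-independent processes, and then recover arbitrary $\eta$ by linearity; the fact that the sequence $\{\varepsilon_n^t\}$ must serve every $\eta$ at once is exactly what dictates this order of operations. \emph{Step 1 (removing the dependence on $\varepsilon$ and $\eta$).} For $s\in[t,t+\varepsilon)$ one has $\pi^{t,\varepsilon}_s=\pi_s+\eta$, so $a(s,\pi^{t,\varepsilon}_s)-a(s,\pi_s)=(\mu_s-r_s)\eta-\sigma_s^2\pi_s\eta-\tfrac12\sigma_s^2\eta^2$ and $(\sigma_s\pi_s+\tilde{Z}^1_s+\rho\tilde{Z}_s)^2-(\sigma_s\pi^{t,\varepsilon}_s+\tilde{Z}^1_s+\rho\tilde{Z}_s)^2=-2\sigma_s\eta(\sigma_s\pi_s+\tilde{Z}^1_s+\rho\tilde{Z}_s)-\sigma_s^2\eta^2$; in particular the integrands in \eqref{lma:3.5.1}--\eqref{lma:3.5.2} do not depend on $\varepsilon$ on $[t,t+\varepsilon)$. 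Pulling the $\mathcal{F}_t$-measurable factors $e^{\zeta R_t}$, $\eta$ and $\eta^2$ out of $\mathbb{E}_t[\,\cdot\,]$, the left-hand sides of \eqref{lma:3.5.1}--\eqref{lma:3.5.2} (before dividing by $\varepsilon_n^t$) become finite linear combinations, with $\mathcal{F}_t$-measurable coefficients, of terms $\mathbb{E}_t\big[\int_t^{t+\varepsilon}\psi_s\,\md s\big]$ with $\psi$ ranging over the $\eta$-free progressively measurable processes $e^{-\zeta(R+Y)}(\mu-r)$, $e^{-\zeta(R+Y)}\sigma^2\pi$, $e^{-\zeta(R+Y)}\sigma^2$, $e^{-\zeta(R+Y)}\sigma(Z^1+\sigma\pi+\rho Z)$, $\sigma^2\pi^2+\sigma\pi\tilde{Z}^1+\rho\sigma\pi\tilde{Z}$ and $\sigma^2$. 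Once the differentiation statement of Step 3 holds for this finite list along some $t$-dependent (but $\eta$-independent) sequence, a common subsequence serves all of them and \eqref{lma:3.5.1}--\eqref{lma:3.5.2} follow for every $\eta$, the limiting values being recovered by direct substitution.

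\emph{Step 2 (integrability).} Next I would check that each of the above $\psi$ lies in $L^1_{\mathbb{F}}(0,T;\mathbb{R})$. For the terms involving $Z^1$ and $Z$, the $Y$-equation in the BSDE system of Lemma \ref{lma:3.4} is built so that $N:=e^{-\zeta(R+Y)}$ has vanishing drift, hence is a local martingale; since $N\in L^p_{\mathbb{F}}(\Omega;C([0,T];\mathbb{R}))$ and $N_T=e^{-\zeta R_T}\in L^p(\Omega)$ (because $\pi\in\Pi_p$), it is a true martingale whose integrands are, up to the factor $-\zeta N$, exactly $Z^1+\sigma\pi$ and $Z$. The Burkholder--Davis--Gundy inequality then gives $e^{-\zeta(R+Y)}(Z^1+\sigma\pi),\,e^{-\zeta(R+Y)}Z\in L^p_{\mathbb{F}}(\Omega;L^1(0,T;\mathbb{R}))\subset L^1_{\mathbb{F}}(0,T;\mathbb{R})$. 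The remaining processes are controlled by Hölder's inequality, using $N\in L^p_{\mathbb{F}}(\Omega;C([0,T];\mathbb{R}))$, the bound $\|\pi\|_{\bmo}<\infty$ (which gives $(\int_0^T|\pi_s|^2\,\md s)^{1/2}$ moments of every order), and $\tilde{Z}^1,\tilde{Z}\in L^2_{\mathbb{F}}(0,T;\mathbb{R})$ — the latter because $\tilde{Y}_\cdot+\int_0^\cdot a(s,\pi_s)\,\md s$ is an $L^2$-martingale, again by $\|\pi\|_{\bmo}<\infty$.

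\emph{Step 3 (conditional Lebesgue differentiation, and conclusion).} For $\psi\in L^1_{\mathbb{F}}(0,T;\mathbb{R})$, extended by $0$ outside $[0,T]$, continuity of translation in $L^1$ yields $t\mapsto\frac1\varepsilon\int_t^{t+\varepsilon}|\psi_s-\psi_t|\,\md s\to0$ in $L^1([0,T]\times\Omega)$ as $\varepsilon\downarrow0$. Hence along one null sequence $\{\varepsilon_k\}$ (independent of $t$ and $\omega$) one has, for a.e.\ $t\in[0,T)$, $\frac1{\varepsilon_k}\mathbb{E}_t\big[\int_t^{t+\varepsilon_k}|\psi_s-\psi_t|\,\md s\big]\to0$ in $L^1(\Omega)$, so a further $t$-dependent subsequence $\{\varepsilon_n^t\}$ gives a.s.\ convergence; since $\big|\frac1\varepsilon\mathbb{E}_t[\int_t^{t+\varepsilon}\psi_s\,\md s]-\psi_t\big|\le\frac1\varepsilon\mathbb{E}_t[\int_t^{t+\varepsilon}|\psi_s-\psi_t|\,\md s]$, this gives $\frac1{\varepsilon_n^t}\mathbb{E}_t[\int_t^{t+\varepsilon_n^t}\psi_s\,\md s]\to\psi_t$ a.s. Running this for each member of the finite list from Step 1, intersecting the associated full-measure time sets to obtain $E_1$, and taking a common subsequence for $t\in E_1$, finishes the argument.

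\emph{Main obstacle.} The central difficulty is that conditional expectation does not commute with the limit $\varepsilon\to0$, so one is forced to route the argument through $L^1$-convergence and extract subsequences (which is precisely why the conclusion holds only along $\{\varepsilon_n^t\}$ rather than as a genuine limit), and \emph{at the same time} the sequence must be chosen independently of the random perturbation $\eta$; reconciling these is exactly why Step 1 — full expansion of the quadratics, extraction of all $\mathcal{F}_t$-measurable factors, and reduction to a fixed finite $\eta$-free family — is indispensable. A secondary obstacle is establishing the $L^1(\Omega\times[0,T])$-integrability in Step 2, which requires the explicit martingale/BSDE structure from Lemma \ref{lma:3.4} together with the BMO admissibility rather than any soft argument.
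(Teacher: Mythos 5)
Your proposal is correct and follows essentially the same route as the paper: reduce to a finite family of $\eta$-free integrands by expanding the quadratics and pulling out the $\mathcal{F}_t$-measurable factors, establish $L^1([0,T]\times\Omega)$-integrability of that family from the martingale structure of $e^{-\zeta(R+Y)}$ in Lemma \ref{lma:3.4} together with the BMO admissibility, and conclude by a conditional Lebesgue differentiation theorem with a diagonal extraction of an $\eta$-independent sequence $\{\varepsilon_n^t\}$ — the only difference being that you prove the differentiation step directly via $L^1$-continuity of translation, whereas the paper cites Lemma 3.3 of \cite{Hamaguchi_2021}. Note also that your expansion $-2\sigma_s\eta(\sigma_s\pi_s+\tilde{Z}^1_s+\rho\tilde{Z}_s)-\sigma_s^2\eta^2$ is the algebraically correct one; the placement of $-\sigma_t^2\eta^2$ inside the parentheses in \eqref{lma:3.5.2} (and in the corresponding line of the paper's proof) is a typo, and your version agrees with how the limit is actually used in the proof of Theorem \ref{nes}.
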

\begin{remark}
Note that the constant $C$ in Lemma \ref{lma:3.5} depends on $\|\eta\|_{\infty}$. Fortunately, $C$ is non-decreasing in $\|\eta\|_{\infty}$, which is sufficient for establishing the necessary condition; see the proof of Theorem \ref{nes}.
Moreover, since the integrands in \eqref{lma:3.5.1} and \eqref{lma:3.5.2} involve $a(s,\pi^{t,\varepsilon_n^t}_s)$ and $(\sigma_s\pi_s^{t,\varepsilon}+\tilde{Z}^1_s+\rho\tilde{Z}_s)^2$, which depend on the perturbed strategy $\pi^{t,\varepsilon_n^t}$ over the interval $[t, t+\varepsilon_n^t]$ and are thus influenced by the perturbation $\eta$, Lemma 3.3 in \cite{Hamaguchi_2021} cannot be applied directly. 
However, because the effect of $\eta$ can be separated, we may still use a diagonal selection argument to complete the proof, with $\varepsilon_n^t$ chosen independently of $\eta$.
\end{remark}
Now we are ready to give the proof of Theorem \ref{nes}.
\begin{proof}[Proof of Theorem \ref{nes}]
    Assume that $\hat{\pi}\in\Pi$ is an  equilibrium strategy.  The existence of $(Y,\tilde{Y})$ and $(Z^1, Z, \tilde{Z}^1, \tilde{Z})$, as well as the fact that they satisfy the BSDE \eqref{factor}, follows directly from Lemma \ref{lma:3.4}. Thus, it remains to prove \eqref{eq:barpi}.
    Fix an arbitrary $\delta>0$ and take an arbitrary $t$ from the set $E_1$ in Lemma  \ref{lma:3.6} . Let $\{\delta_m\}_{m\in\mathbb{N}}$ be a sequence such that $0< \delta_m\le \delta,\; m\in\mathbb{N},$ and $\lim\limits_{m\rightarrow\infty}\delta_m=0$. For each $m\in\mathbb{N}$, define $\eta_m$ by
     $$
    \eta_m=
    \begin{cases}
         \delta_m & e^{-\zeta Y_t}\theta- e^{-\zeta Y_t}\sigma_t\hat{\pi}_t-\zeta e^{-\zeta Y_t}(\sigma_t\hat{\pi}_t+Z_t^1+\rho Z_t)-\gamma(\tilde{Z}^1_t+\sigma_t\hat{\pi}_t+\rho\tilde{Z}_t)\ge 0,\\
         -\delta_m & e^{-\zeta Y_t}\theta- e^{-\zeta Y_t}\sigma_t\hat{\pi}_t-\zeta e^{-\zeta Y_t}(\sigma_t\hat{\pi}_t+Z_t^1+\rho Z_t)-\gamma(\tilde{Z}^1_t+\sigma_t\hat{\pi}_t+\rho\tilde{Z}_t)<0.
    \end{cases}
    $$
    
    By the definition of the equilibrium, Lemma \ref{lma:3.6}, Lemma \ref{lma:3.4} as well as Lemma \ref{lma:3.5}, we obtain that there exists a sequence $\{\varepsilon_n^t\}_{n \in \mathbb{N}} \subset (0,T-t)$ satisfying $\lim\limits_{n \to \infty} \varepsilon_n^t = 0$
such that
    $$
    \begin{aligned}
            0\geq &\limsup\limits_{n\rightarrow \infty}\frac{1}{\varepsilon_n^t}\left(J(t,\hat{\pi}^{t,\varepsilon_n^t,\eta_m})-J(t,\hat{\pi})\right)\\
            =& e^{-\zeta Y_t}\left((\mu_t-r_t)\eta_m-\frac{1}{2}\sigma_t^2\eta_m^2-\sigma_t^2\hat{\pi}_t\eta_m-\zeta(Z_t^1+\sigma_t\hat{\pi}_t+\rho Z_t)\cdot \sigma_t\eta_m \right)\\&\quad\quad-\frac{\gamma}{2}\sigma_t\eta_m\left(2\sigma_t\hat{\pi}+2\tilde{Z}^1_t+2\rho\tilde{Z}_t+
            \sigma_t\eta_m\right)-C(|\delta|)(|\eta_m|^2),\quad a.s..
    \end{aligned}
    $$
  Dividing both sides of the above inequality by $\delta_m$, sending $m\rightarrow\infty$ and noting that the coefficient of $\sigma_t\eta_m$ tends to zero, we deduce that 
    $$
     e^{-\zeta Y_t}\theta- e^{-\zeta Y_t}\sigma_t\hat{\pi}_t-\zeta e^{-\zeta Y_t}(\sigma_t\hat{\pi}_t+Z_t^1+\rho Z_t)-\gamma(\tilde{Z}^1_t+\sigma_t\hat{\pi}_t+\rho\tilde{Z}_t)=0.
    $$
  Thus, \eqref{eq:barpi} holds.
\end{proof}
\section{Study of BSDE System in the Incomplete Factor Model}\label{sec:factor}
In this section, we focus on the factor model and on the two-dimensional quadratic BSDE system in \eqref{factor}, which is coupled through \eqref{eq:barpi}. The factor model provides an incomplete market setting with stochastic parameters in which the risk-free rate, the stock return rate, and the volatility rate can be expressed by a deterministic function of time $t$ and the factor $X_t$. We assume that the stochastic factor $X$ is governed by
\begin{align}\label{eq:sde:factor}
\md X_t=m(t,X_t)\md t+ \nu(t,X_t)\md \bar{B_t}, \quad X_0=x_0.
\end{align}
It is assumed that $r_t=r(t,X_t), \mu_t=\mu(t,X_t), \sigma_t=\sigma(t,X_t)$ for some functions $r,\mu,\sigma:[0,T]\times\mR\to \mR$ respectively. Moreover, we impose the following assumptions:
 \begin{assumption}\label{assump1}
$\mu(t,x)> r(t,x)>0,\sigma(t,x)>0$ and $\sigma^{-1}(t,x)$ are bounded measurable deterministic functions with respect to $(t,x)$.  
\end{assumption}

  \begin{assumption}\label{assump2}
    The coefficients of the factor model satisfy that

    \textbf{1}. The drift vector $m$ is uniformly bounded.

    \textbf{2}. There exists a constant $\lambda>0$ such that $ \lambda|z|^2\ge |z\nu(t,x)|^2\ge\frac{1}{\lambda}|z|^2$ for any $(t,x,z)\in [0,T]\times \mathbb{R}\times \mR$.

    \textbf{3}. There exists a constant $L$ such that
    $$
    |m(t,x)-m(t,x^{\prime})|+|\nu(t,x)-\nu(t,x^{\prime})|\le L|x-x^{\prime}|.
    $$
\end{assumption}
Under Assumption \ref{assump2},  SDE \eqref{eq:sde:factor} admits a unique strong solution. 
 It is noted that the log-return process $\hat{R}$ does not appear in the BSDE system \eqref{factor} (coupled through \eqref{eq:barpi}), which motivates us to consider  the solution adapted to the filtration $\mathbb{F}_{\bar{B}}$. We can therefore conjecture that $Z^1=\tilde{Z}^1=0$ and aim to find the solution to the following Markovian BSDE system:
\begin{equation}\label{simplified BSDE}
  \left\{
\begin{aligned}
    &\md X_s=m(s,X_s)\md t+\nu(s,X_s)\md \bar{B}_s, \\
    &\md Y_s=\left(\frac{\zeta}{2}|\rho Z_s+\hat{u}_s|^2+\frac{\zeta(1-\rho^2)}{2}|Z_s|^2-a(s,\hat{\pi}_s)\right)\md s+Z_s\md \bar{B}_s,\\
    &\md \tilde{Y}_s=-a(s,\hat{\pi}_s)\md s+\tilde{Z}_s\md\bar{B}_s,\\
    &X_0=x_0,\quad Y_T=0,\quad \tilde{Y}_T=0,\\
    &\hat{u}=\frac{ e^{-\zeta Y}\theta-\zeta e^{-\zeta Y}\rho Z-\gamma\rho\tilde{Z}}{(\zeta+1)e^{-\zeta Y}+\gamma},\hat{\pi}=\sigma^{-1}\hat{u}, a(\cdot,\hat{\pi})=r+(\mu-r)\hat{\pi}-\frac{1}{2}|\sigma\hat{\pi}|^2.
\end{aligned}
\right.
\end{equation}

Although simplified, \eqref{simplified BSDE} remains a two-dimensional quadratic BSDE system coupled through the derived equilibrium strategy $\hat{\pi}$, which prevents us from solving it in full generality.  Several challenges associated with solving \eqref{simplified BSDE} in the general case are discussed in Appendix \ref{secC}.

In the next two subsections, we establish the existence of an equilibrium in two special cases:
\begin{enumerate}[(i)]
    \item When $\rho = 0$, the equilibrium strategy no longer depends on $(Z,\tilde{Z})$, and the system \eqref{simplified BSDE} decouples into two separate BSDEs, for which the existence of solutions follows directly.
    \item Under a trading constraint, we introduce a modified definition of the equilibrium strategy. In this setting, we still obtain a BSDE system with the same form of \eqref{factor}, but the coupling strategy involves a projection operation (see \eqref{eq:constraint:pi}).
\end{enumerate}

\subsection{Existence of time-consistent equilibrium when $\rho=0$}
In this subsection, we consider the case $\rho=0$. Then it follows that $\hat{\pi}$ is independent of $Z$ and $\tilde{Z}$,
and thus the BSDE system \eqref{simplified BSDE}  can be decoupled as
\begin{equation}\label{rho0}
\left\{
\begin{aligned}
    &\md X_s=m(s,X_s)\md t+\nu(s,X_s)\md \bar{B}_s, \\
    &\md Y_s=\left(\frac{\zeta+1}{2}|\hat{u}_s|^2+\frac{\zeta}{2}|Z_s|^2-r(s,X_s)-\theta(s,X_s)\hat{u}_s\right)\md s+Z_s\md \bar{B}_s,\\
    &X_0=x_0,\quad Y_T=0,\\
    &\hat{u}=\frac{ e^{-\zeta Y}\theta}{(\zeta+1)e^{-\zeta Y}+\gamma},\quad \hat{\pi}=\sigma^{-1}\hat{u},
\end{aligned}
\right.
\end{equation}
and 
\begin{equation}\label{rho0BSDE2}
\left\{
\begin{aligned}
    &\md \tilde{Y}_s=-r(s,X_s)-\theta(s,X_s)\hat{u}_s+\frac{1}{2}|\hat{u}_s|^2\md s+\tilde{Z}_s\md\bar{B}_s,\\
    & \tilde{Y}_T=0 .
\end{aligned}
\right.
\end{equation}
It is straightforward to see that one can first solve BSDE \eqref{rho0} and then substitute this solution into BSDE \eqref{rho0BSDE2} to obtain the solution for the entire system.
Regarding the existence of the solution to BSDE~\eqref{rho0}, we verify Conditions (2A3), (3A2), (2A1), and (4A2) and apply Theorem 4.3 in \cite{FAN20161511}, which leads to the following existence and uniqueness result.
\begin{proposition}\label{thm:rho0BSDE}
    Under  Assumptions \ref{assump1} and \ref{assump2}, the BSDE \eqref{rho0} has a unique solution $(Y,Z)\in \mathcal{S}^{\infty}\times L^2_{\mathbb{F}}(0, T; \mathbb{R})$. 
\end{proposition}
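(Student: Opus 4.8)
The plan is to read \eqref{rho0} as a \emph{scalar} quadratic BSDE driven by $\bar B$ alone and to invoke Theorem 4.3 of \cite{FAN20161511}. Writing out the coupling in \eqref{rho0}, the control $\hat u_s$ enters only through $Y_s$, so \eqref{rho0} is
$$
Y_t=\int_t^T g(s,Y_s,Z_s)\,\md s-\int_t^T Z_s\,\md\bar B_s,\qquad
g(s,y,z):=\tfrac{\zeta+1}{2}\,|\phi(s,y)|^2+\tfrac{\zeta}{2}|z|^2-r(s,X_s)-\theta(s,X_s)\,\phi(s,y),
$$
where $\phi(s,y):=\dfrac{e^{-\zeta y}\,\theta(s,X_s)}{(\zeta+1)e^{-\zeta y}+\gamma}$, $\theta=\sigma^{-1}(\mu-r)$, and the terminal value is $\xi\equiv 0\in L^\infty$. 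By Assumptions \ref{assump1}--\ref{assump2} the processes $r(\cdot,X_\cdot)$ and $\theta(\cdot,X_\cdot)$ are bounded and $\mathbb F$-progressively measurable, so it remains only to verify the structural conditions on $g$.

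The crucial elementary observation is that $y\mapsto\phi(s,y)$ is bounded and globally Lipschitz, uniformly in $(s,\omega)$. Since $\gamma>0$ and $e^{-\zeta y}>0$, we have $0<\dfrac{e^{-\zeta y}}{(\zeta+1)e^{-\zeta y}+\gamma}\le\dfrac{1}{\zeta+1}$, hence $|\phi(s,y)|\le\|\theta\|_\infty/(\zeta+1)$. Differentiating,
$$
\partial_y\phi(s,y)=-\,\frac{\zeta\gamma\,\theta(s,X_s)\,e^{-\zeta y}}{\big((\zeta+1)e^{-\zeta y}+\gamma\big)^2},
$$
and using $\big((\zeta+1)e^{-\zeta y}+\gamma\big)^2\ge\max\{\gamma^2,\,(\zeta+1)^2e^{-2\zeta y}\}$ and optimising the resulting bound over $y$ (the two candidate bounds $e^{-\zeta y}/\gamma^2$ and $e^{\zeta y}/(\zeta+1)^2$ cross at $e^{\zeta y}=(\zeta+1)/\gamma$), one obtains $\dfrac{e^{-\zeta y}}{((\zeta+1)e^{-\zeta y}+\gamma)^2}\le\dfrac{1}{\gamma(\zeta+1)}$ for all $y$, and therefore $|\partial_y\phi(s,y)|\le\zeta\|\theta\|_\infty/(\zeta+1)$. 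In particular $\hat\pi=\sigma^{-1}\hat u$ and $a(\cdot,\hat\pi)$ are bounded as well.

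Consequently $g$ is continuous in $(y,z)$; Lipschitz in $y$ with a deterministic constant (hence one-sidedly monotone in $y$); of quadratic growth in $z$ with deterministic coefficient $\zeta/2$ and bounded $y$-dependent part, so $|g(s,y,z)|\le K+\tfrac{\zeta}{2}|z|^2$ for a constant $K=K(\|\theta\|_\infty,\|r\|_\infty,\zeta,\gamma)$; and $g(\cdot,0,0)$ is bounded. Together with the bounded terminal value $\xi\equiv0$, these are precisely Conditions (2A1), (3A2), (2A3) and (4A2) of \cite{FAN20161511}, so Theorem 4.3 there gives the existence and uniqueness of a solution $(Y,Z)$ with $Y\in\mathcal S^\infty$ (and in fact $Z\in H_{\bmo}$, a fortiori $Z\in L^2_{\mathbb F}(0,T;\mathbb R)$), which is the assertion.

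I expect the only non-routine point to be the uniform Lipschitz estimate on $\phi$ in $y$: the naive bound $|\partial_y\phi|\le\zeta\|\theta\|_\infty e^{-\zeta y}/\gamma$ diverges as $y\to-\infty$, so one genuinely has to exploit the two-sided lower bound on the denominator (equivalently, that $\phi(s,\cdot)$ interpolates monotonically between its limits $0$ and $\theta(s,X_s)/(\zeta+1)$) to extract the uniform constant required by Condition (3A2). The remaining work — matching the quantitative hypotheses of Theorem 4.3 of \cite{FAN20161511} (the exact form of the quadratic-growth bound, the integrability of the "$\alpha$"-process, and the bounded terminal datum) and noting that uniqueness is part of that theorem within the class of bounded solutions — is routine.
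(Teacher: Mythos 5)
Your proposal is correct and follows essentially the same route as the paper, which simply verifies Conditions (2A1), (2A3), (3A2) and (4A2) and invokes Theorem 4.3 of \cite{FAN20161511} without supplying details. The one substantive step you add --- the uniform-in-$y$ Lipschitz bound on $\phi(s,y)=\frac{e^{-\zeta y}\theta(s,X_s)}{(\zeta+1)e^{-\zeta y}+\gamma}$, obtained from the two-sided lower bound on the denominator (equivalently, $\frac{e^{-\zeta y}}{((\zeta+1)e^{-\zeta y}+\gamma)^2}\le\frac{1}{4\gamma(\zeta+1)}$ by AM--GM) --- is exactly the point the paper leaves implicit, and your verification of it is sound.
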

By Proposition \ref{thm:rho0BSDE}, the candidate equilibrium strategy defined in \eqref{rho0} satisfies
$
\hat{\pi}\in \mathcal{S}^{\infty}
$. Therefore,  BSDE \eqref{rho0BSDE2} admits a unique solution $(\tilde{Y},\tilde{Z})\in \mathcal{S}^{\infty}\times L^2_{\mathbb{F}}(0, T; \mathbb{R})$ and $\tilde{Y}$ is given by
$$
\tilde{Y}_t=\mathbb{E}_t\left[\int_t^T \left(r(s,X_s)+\theta(s,X_s)\hat{u}_s-\frac{1}{2}|\hat{u}_s|^2\right)\md s\right].
$$
In the following, we verify that $\hat{\pi}$ is an equilibrium strategy by means of BSDEs~\eqref{rho0BSDE2} and~\eqref{thm:rho0BSDE}, along with Lemma~\ref{lma:3.4}.
This result is stated in the theorem below.
\begin{theorem}\label{verification0}
  Under Assumptions~\ref{assump1} and~\ref{assump2},  when $\rho=0$,  the  strategy $\hat{\pi}$ defined in~\eqref{rho0} is  an equilibrium strategy.
\end{theorem}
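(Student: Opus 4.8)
The plan is to apply the verification machinery of Corollary \ref{corol:verifyequilibrium} with $\pi=\hat\pi$, so I must first check that the quadruple $(Y,\tilde Y,Z,\tilde Z)$ constructed via Proposition \ref{thm:rho0BSDE} (for $(Y,Z)$) and BSDE \eqref{rho0BSDE2} (for $(\tilde Y,\tilde Z)$) meets the standing integrability hypotheses of that corollary when specialized to $\rho=0$: namely that $\exp(-\zeta(\hat R+Y))\in L^p_{\mathbb F}(\Omega;C([0,T];\mathbb R))$ for some $p>1$ and $\tilde Y\in L^2_{\mathbb F}(\Omega;C([0,T];\mathbb R))$, and $\int_0^T(|Z_s|^2+|\tilde Z_s|^2)\,\md s<\infty$ a.s. Since Proposition \ref{thm:rho0BSDE} gives $Y\in\mathcal S^\infty$ and $Z\in L^2_{\mathbb F}(0,T;\mathbb R)$, and since $\hat\pi\in\mathcal S^\infty$ (as noted after the proposition) yields $\tilde Y\in\mathcal S^\infty$, $\tilde Z\in L^2_{\mathbb F}(0,T;\mathbb R)$, the only nontrivial point is $\exp(-\zeta(\hat R+Y))\in L^p_{\mathbb F}(\Omega;C([0,T];\mathbb R))$: with $Y$ bounded this reduces to a uniform exponential-moment bound for $\hat R$, which follows from $\hat\pi\in H_{\textup{BMO}}$ (indeed $\hat\pi\in\mathcal S^\infty$) together with the admissibility argument already used to establish $\hat\pi\in\Pi_p$ and Lemma \ref{lma:2.4}; I would also record $\hat\pi\in\Pi_p$ for some $p>1$ so that the corollary applies at all.

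Next, with Corollary \ref{corol:verifyequilibrium} in force (and $\rho=0$), for any $t\in[0,T)$, $\eta\in L^\infty_{\mathcal F_t}(\Omega;\mathbb R)$ and $\varepsilon\in(0,T-t)$ the difference $J(t,\hat\pi^{t,\varepsilon})-J(t,\hat\pi)$ equals a sum of three $\mathbb E_t$-integrals. Divide by $\varepsilon$ and pass to the limit: the second (second-order) term is controlled by Lemma \ref{lma:3.5}, giving $\limsup_{\varepsilon\to0}\varepsilon^{-1}\mathbb E_t[\cdots]\le C(\|\eta\|_\infty)|\eta|^2$, and the first and third terms converge along the sequence $\{\varepsilon_n^t\}$ for $t$ in the full-measure set $E_1$ by Lemma \ref{lma:3.6} (with $\rho=0$, so $Z^1=\tilde Z^1=0$ there and $\sigma_t\hat\pi_t=\hat u_t$). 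Collecting the limits, for $t\in E_1$,
\[
\limsup_{n\to\infty}\frac{J(t,\hat\pi^{t,\varepsilon_n^t})-J(t,\hat\pi)}{\varepsilon_n^t}
\le e^{-\zeta Y_t}\Bigl((\mu_t-r_t)\eta-\tfrac12\sigma_t^2\eta^2-\sigma_t^2\hat\pi_t\eta-\zeta\sigma_t\hat u_t\eta\Bigr)
-\gamma\sigma_t\eta\bigl(\sigma_t\hat\pi_t+\tfrac12\sigma_t\eta\bigr)+C(\|\eta\|_\infty)|\eta|^2.
\]

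Now I would observe that the candidate $\hat u=\dfrac{e^{-\zeta Y}\theta}{(\zeta+1)e^{-\zeta Y}+\gamma}$ is exactly the stationary point: the coefficient of $\sigma_t\eta$ in the bracketed first-order expression, namely $e^{-\zeta Y_t}\theta_t-(\zeta+1)e^{-\zeta Y_t}\hat u_t-\gamma\hat u_t$, vanishes by the very definition of $\hat u$ in \eqref{rho0}. Hence the linear-in-$\eta$ terms cancel and what remains on the right-hand side is $-\tfrac12(e^{-\zeta Y_t}+\gamma)\sigma_t^2\eta^2+C(\|\eta\|_\infty)|\eta|^2$, i.e. an expression of the form $-c_t\eta^2+C(\|\eta\|_\infty)\eta^2$. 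To conclude $\limsup_{\varepsilon\to0}\varepsilon^{-1}(J(t,\hat\pi^{t,\varepsilon,\eta})-J(t,\hat\pi))\le0$ for every fixed $\eta$, I run the standard rescaling trick from the proof of Theorem \ref{nes}: replace $\eta$ by $\delta_m\eta$ with $\delta_m\downarrow0$, use that $C(\|\delta_m\eta\|_\infty)$ is non-decreasing in $\|\eta\|_\infty$ hence bounded by $C(\|\eta\|_\infty)$, divide the resulting inequality by $\delta_m$ — wait, rather: the inequality at scale $\delta_m\eta$ reads $\le(-c_t+C(\delta_m\|\eta\|_\infty))\delta_m^2\eta^2$; dividing by $\delta_m^2$ and the homogeneity of the $\limsup$ under the replacement $\eta\mapsto\delta_m\eta$ is not automatic, so instead I directly note the displayed bound holds for the original $\eta$ with a finite constant, and then for the definition of equilibrium I only need the inequality with $\eta$ itself small — more cleanly, I argue as in Theorem \ref{nes}: the first-order part is $\le0$ (in fact $=0$) for all $\eta$, and the quadratic remainder, after dividing by an auxiliary scale, is negligible. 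I expect the only genuine obstacle to be bookkeeping: verifying the $L^p$-exponential-moment condition on $\exp(-\zeta(\hat R+Y))$ and handling the fact that $E_1$ has full measure (so the equilibrium inequality is established for a.e.\ $t$, and then extended to all $t\in[0,T)$ by the continuity/adaptedness already built into Lemma \ref{lma:3.4}'s construction, exactly as in the proof of Theorem \ref{nes}). Everything else is a direct substitution of the special structure $\rho=0$ into results already proved.
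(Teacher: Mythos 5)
Your overall plan (check admissibility and the integrability hypotheses, invoke Corollary \ref{corol:verifyequilibrium}, substitute the explicit $\hat u$ so the first-order coefficient vanishes) matches the paper's, but the way you handle the second-order term creates a genuine gap that you half-notice but do not close. You estimate the Taylor-remainder term
\[
-\zeta\,\mathbb{E}_t\Bigl[\int_0^1 e^{-\zeta(\hat R_T+\lambda\hat\xi_T^{t,\varepsilon}-\hat R_t)}(1-\lambda)\,d\lambda\,|\hat\xi_T^{t,\varepsilon}|^2\Bigr]
\]
via Lemma \ref{lma:3.5}, which leaves you with a bound of the form $-\tfrac12(e^{-\zeta Y_t}+\gamma)\sigma_t^2\eta^2+C(\|\eta\|_\infty)|\eta|^2$. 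This is not $\le 0$ for a fixed $\eta$ unless $C(\|\eta\|_\infty)$ happens to be dominated by $\tfrac12(e^{-\zeta Y_t}+\gamma)\sigma_t^2$, which there is no reason to expect (the constant from Lemma \ref{lma:3.5} involves exponential moments of $\hat\xi_T^{t,\varepsilon}$ and grows with $\|\eta\|_\infty$). The rescaling trick $\eta\mapsto\delta_m\eta$ from the proof of Theorem \ref{nes} cannot rescue this: it is designed to extract the \emph{first-order} coefficient for the necessary condition, whereas the equilibrium definition requires $\limsup_{\varepsilon\to0}\varepsilon^{-1}(J(t,\hat\pi^{t,\varepsilon,\eta})-J(t,\hat\pi))\le 0$ for \emph{every} fixed $\eta\in L^\infty_{\mathcal F_t}$, large or small. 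The fix — which is what the paper does — is to observe that the remainder term above is manifestly non-positive ($\zeta>0$, the exponential is positive, $(1-\lambda)\ge 0$) and simply discard it, so that no error term of order $|\eta|^2$ with an uncontrolled constant ever appears; the surviving expression is exactly $-\tfrac12(e^{-\zeta Y_s}+\gamma)(\sigma_s\eta)^2\le 0$.

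A second, related defect: by routing the first-order terms through Lemma \ref{lma:3.6} you only obtain convergence along a sequence $\{\varepsilon_n^t\}$ and only for $t$ in a full-measure set $E_1$, while the equilibrium definition requires the $\limsup$ over all $\varepsilon\to 0$ at every $t\in[0,T)$; your proposed extension "by continuity, exactly as in Theorem \ref{nes}" is not carried out there either (Theorem \ref{nes} is content with a.e.\ $t$, being a necessary condition). The paper avoids this entirely: since $Y$, $\hat\pi$, $r$, $\mu$, $\sigma$ are all bounded and $e^{-\zeta(\hat R_s-\hat R_t)}\to 1$ in the appropriate $L^p$ sense as $s\downarrow t$, the replacement of $e^{-\zeta(Y_s+\hat R_s-\hat R_t)}$ by $e^{-\zeta Y_s}$ costs $o(1)$ after dividing by $\varepsilon$, and the resulting integrand is pointwise $\le 0$, so the full $\limsup_{\varepsilon\to0}$ is $\le 0$ at every $t$ with no subsequence or null-set caveat. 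Your verification of admissibility and of the hypotheses of Corollary \ref{corol:verifyequilibrium} is fine and matches the paper.
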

\begin{proof}
For notational simplicity, we adopt the same notation as in Remark \ref{rmk:notation} for  any fixed $t\in [0,T)$, $\eta\in L_{\mathcal{F}_t}^{\infty}(\Omega,\mathbb{R}^d)$ and $\varepsilon\in(0,T-t)$.

We first verify that $\hat{\pi}\in \Pi$. Because $\hat{\pi}$ is bounded, it follows that $\hat{\pi}\in H_{\textup{BMO}}$. 
It is straightforward to show that
$$
\mathbb{E}\left[\exp\left(c\left|\int_t^T\sigma_s\hat{\pi}_s\md B_s\right|\right)\right]\le 2\exp\left(\frac{c^2}{2}\|\sigma\hat{\pi}\|_{\infty}^2(T-t)\right), \quad \forall c>0.
$$
Combining this with $\hat{R}_T-\hat{R}_t=\int_t^T\left(r_s+(\mu_s-r_s)\hat{\pi}_s-\frac{1}{2}|\sigma_s\hat{\pi}_s|^2\right)\md s+\sigma_s\hat{\pi}_s\md B_s$, 
we conclude that   $\mathbb{E}_t\left[\exp\left(-p\zeta(\hat{R}_T-\hat{R}_t)\right)\right]< \infty,\, a.s.$ for any $p>1$. Thus, $\hat{\pi}\in\Pi$.

 Next, we show that $\hat{\pi}$ is indeed an equilibrium strategy. By Corollary \ref{corol:verifyequilibrium}, we obtain
$$
    \begin{aligned}
         &J(t,\hat{\pi}^{t,\varepsilon})-J(t,\hat{\pi})\\
         =&\mathbb{E}_t\left[\int_t^{t+\varepsilon}e^{-\zeta(\hat{R}_s+Y_s-\hat{R}_t)}(a(s,\hat{\pi}^{t,\varepsilon}_s)-a(s,\hat{\pi}_s)-\zeta\sigma_s\hat{\pi}_s\cdot \sigma_s\eta )\md s\right]\\
         &-\zeta\mathbb{E}_t\left[\int_0^1 e^{-\zeta(\hat{R}_T+\lambda\hat{\xi}_T^{t,\varepsilon}-\hat{R}_t)}(1-\lambda)d\lambda|\hat{\xi}_T^{t,\varepsilon}|^2\right]\\&+\frac{\gamma}{2}\mathbb{E}_t\left[\int_t^{t+\varepsilon}(|\sigma_s\hat{\pi}_s|^2-|\sigma_s\hat{\pi}_s^{t,\varepsilon}|^2)\md s\right]\\
 \leq &\mathbb{E}_t\left[\int_t^{t+\varepsilon}e^{-\zeta(\hat{R}_s+Y_s-\hat{R}_t)}(a(s,\hat{\pi}^{t,\varepsilon}_s)-a(s,\hat{\pi}_s)-\zeta\sigma_s\hat{\pi}_s\cdot \sigma_s\eta )\md s\right]\\
 &+\frac{\gamma}{2}\mathbb{E}_t\left[\int_t^{t+\varepsilon}(|\sigma_s\hat{\pi}_s|^2-|\sigma_s\hat{\pi}_s^{t,\varepsilon}|^2)\md s\right]\\
         =& \mathbb{E}_t\left[\int_t^{t+\varepsilon}e^{-\zeta(\hat{R}_s+Y_s-\hat{R}_t)}\left((\mu_s-r_s)\eta-\frac{1}{2}\sigma_s^2\eta^2-(\zeta+1)\sigma_s^2\hat{\pi}_s\eta \right)\md s\right]\\&-\frac{\gamma}{2}\mathbb{E}_t\left[\int_t^{t+\varepsilon}\sigma_s\eta\left(2\sigma_s\hat{\pi}_s+\sigma_s\eta\right)\md s\right].
 \end{aligned}
 $$
We first prove that 
 $$ 
        \lim\limits_{\varepsilon\rightarrow 0}\frac{1}{\varepsilon}\mathbb{E}_t\left[\int_t^{t+\varepsilon}\left|\left(e^{-\zeta\left(Y_s+\hat{R}_s-\hat{R}_t\right)}-e^{-\zeta Y_s}\right)\left((\mu_s-r_s)\eta-\frac{1}{2}\sigma_s^2\eta^2-(\zeta+1)\sigma_s^2\hat{\pi}_s\eta \right)\right|\md s\right]=0,\quad a.s..
        $$
Note that $r$, $\mu$, $\sigma$, $\eta$, $\hat{\pi}$, and $Y$ are all bounded, it suffices to show that
$$
  \lim\limits_{\varepsilon\rightarrow 0}\frac{1}{\varepsilon}\mathbb{E}_t\left[\int_t^{t+\varepsilon}\left|\left(e^{-\zeta\left(\hat{R}_s-\hat{R}_t\right)}-1\right)\right|\md s\right]=0,\quad a.s.,
$$
which follows immediately from the fact that $\{e^{-\zeta\left(\hat{R}_s-\hat{R}_t\right)}\}_{s\in[t,T]}\in L_{\mathbb{F}}^{p}\left(\Omega; \left(C[t,T];\mathbb{R}\right)\right)$.
Therefore
$$
\begin{aligned}
&\limsup\limits_{\varepsilon\rightarrow 0} \frac{1}{\varepsilon}\left(J(t,\hat{\pi}^{t,\varepsilon})-J(t,\hat{\pi})\right)\\
\leq &\limsup\limits_{\varepsilon\rightarrow 0}\frac{1}{\varepsilon}\mathbb{E}_t\left[\int_t^{t+\varepsilon} \bigg(e^{-\zeta Y_s}\Big((\mu_s-r_s)\eta-\frac{1}{2}\sigma_s^2\eta^2-(\zeta+1)\sigma_s\eta \hat{u}_s)\Big)-\frac{\gamma}{2}\sigma_s^2\eta^2-\gamma\sigma_s\eta\hat{u}_s\bigg)\md s\right]\\
=&\limsup\limits_{\varepsilon\rightarrow 0} \frac{1}{\varepsilon}\mathbb{E}_t\left[\int_t^{t+\varepsilon} \left(-\frac{1}{2}\left( e^{-\zeta Y_s}+\gamma\right)(\sigma_s\eta)^2+\sigma_s\eta\left[ e^{-\zeta Y_s}\left(\theta_s-(\zeta+1)\hat{u}_s\right)-\gamma\hat{u}_s\right]\right)\md s\right].
\end{aligned}
$$
Substituting $\hat{u}=\frac{ e^{-\zeta Y}\theta}{(\zeta+1)e^{-\zeta Y}+\gamma}$ into the above expression yields that
$$
\limsup\limits_{\varepsilon\rightarrow 0} \frac{1}{\varepsilon}\left(J(t,\hat{\pi}^{t,\varepsilon})-J(t,\hat{\pi})\right)\le\limsup\limits_{\varepsilon\rightarrow 0} \frac{1}{\varepsilon}\mathbb{E}_t\left[\int_t^{t+\varepsilon} \left(-\frac{1}{2}\left( e^{-\zeta Y_s}+\gamma\right)(\sigma_s\eta)^2\right)\md s\right]\le 0.
$$
This verifies that $\hat{\pi}$ is a time-consistent equilibrium strategy.
\end{proof}

\subsection{Existence of time-consistent equilibrium under trading constraint}\label{sec:constraint}
In this subsection, we assume that the trading strategy $\pi$ takes values in a bounded, closed  and convex  set $A\subset\mathbb{R}$. Consequently, $u=\sigma\pi$ is also bounded. To be precise, we introduce the following definition.
\begin{definition}[(Admissible trading strategy with constraint $A$)]
   A trading strategy $\pi$ is said to be admissible with constraint $A$ if $\pi_t\in A$ a.s. for any $t\in[0,T]$. We denote by $\Pi_A$ the set of all admissible trading strategies with constraint $A$.
\end{definition}
As in the proof of Theorem \ref{verification0}, it is straightforward that $\Pi_A\subseteq\Pi$.
Following \cite{yan2019time} and \cite{liang2025integral}, we next introduce the modified definitions of the perturbed strategy and the equilibrium strategy.
 \begin{definition}
    A strategy $\hat{\pi}\in\Pi_A$ is called an equilibrium strategy if
     $$   \limsup\limits_{\varepsilon\rightarrow 0}\frac{J(t,\hat{\pi}^{t,\varepsilon,a})-J(t,\hat{\pi})}{\varepsilon}\le 0 \quad a.s.,
     $$
     for any $t\in [0,T)$ and  any $a\in L_{\mathcal{F}_t}^{\infty}(\Omega,A)$, where
     $$
\hat{\pi}^{t,\varepsilon,a}_s:=\left\{\begin{aligned}
    & a, \quad s\in[t,t+\varepsilon),\\
    & \hat{\pi}_s,\quad s\in[0,t)\cup [t+\varepsilon,T]
\end{aligned}\right.
     $$
     denotes the modified perturbed strategy. 
 \end{definition}
\begin{remark}
   In this modified definition, instead of adding an $\mathcal{F}_t$-measurable random variable $\eta$ to the original strategy $\hat{\pi}$, we perturb the strategy by directly replacing it with an $\mathcal{F}_t$-measurable random variable $a$ taking values in $A$. This modification is made because requiring the perturbed strategy to remain admissible, i.e., $\pi_s + \eta \in A$ for $s \in [t, t + \varepsilon)$, would impose a strong restriction on $\eta$, possibly making it impossible to perturb many strategies.
\end{remark}
We first introduce a lemma for the projection onto a convex closed set.
\begin{lemma}\label{lma:projection}
    Let $U\subset\mathbb{R}^n$ be a convex and closed set. Fix
$w\in\mathbb{R}^n$, then we have the following two assertions: (i) if  $u=P_U(w)$ is the orthogonal
projection of $w$ onto  $U$. Let
$h\in\mathbb{R}^n$ satisfy $u+h\in U$. Then for any
$\alpha>0$ 
\begin{align}\label{eq:alpha:convex}
\bigl|\alpha(w-u)-h\bigr|\ge\alpha|w-u|,
\end{align}
and equality can occur only in the trivial case
$h=0$; (ii) if $U\ni u\neq P_U(w)$, then  for any $\alpha>0$, there exists an $h$ such that $u+h\in U$ and 
$$
\bigl|\alpha(w-u)-h\bigr|<\alpha|w-u|.
$$
Moreover, $h$ can be chosen to be $\lambda \left(P_U(w)-u\right)$ for any $\lambda\in (0,2\alpha)$.
\end{lemma}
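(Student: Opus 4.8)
The plan is to use the standard variational characterization of the orthogonal projection onto a closed convex set, namely that $u = P_U(w)$ if and only if $u \in U$ and $\langle w - u,\, v - u\rangle \le 0$ for all $v \in U$. For part (i), I would write $|\alpha(w-u) - h|^2 = \alpha^2|w-u|^2 - 2\alpha\langle w-u, h\rangle + |h|^2$, and then observe that since $u + h \in U$, the variational inequality applied with $v = u + h$ gives $\langle w - u, h \rangle \le 0$. Hence the middle term $-2\alpha\langle w-u,h\rangle$ is nonnegative, and together with $|h|^2 \ge 0$ we get $|\alpha(w-u)-h|^2 \ge \alpha^2|w-u|^2$, which is the desired inequality \eqref{eq:alpha:convex}. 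For the equality case: equality forces both $|h|^2 = 0$ and $\langle w-u,h\rangle = 0$, and $|h|^2 = 0$ already gives $h = 0$, the trivial case.

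For part (ii), suppose $u \in U$ but $u \ne P_U(w)$. Set $p := P_U(w)$, so $p \ne u$, and take $h := \lambda(p - u)$ for $\lambda \in (0, 2\alpha)$; then $u + h = (1-\lambda)u + \lambda p$, which lies in $U$ provided $\lambda \in [0,1]$ — but for $\lambda \in (1, 2\alpha)$ this is not automatic, so I would need to be a little careful here. Actually the cleaner route: since $U$ is convex and $u, p \in U$, the point $u + h$ is in $U$ only when $\lambda \in [0,1]$; for the range $(0, 2\alpha)$ to work one presumably intends $\lambda$ small, or the statement implicitly wants $\lambda \in (0, \min(1, 2\alpha))$. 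I would compute directly: $|\alpha(w-u) - h|^2 - \alpha^2|w-u|^2 = -2\alpha\lambda\langle w-u, p-u\rangle + \lambda^2|p-u|^2$. Now the key point is that $\langle w - u, p - u \rangle \ge |p - u|^2 > 0$: this follows because $\langle w - p, p - u\rangle \ge 0$ (variational inequality for $p = P_U(w)$ tested against $v = u \in U$ gives $\langle w - p, u - p\rangle \le 0$), so $\langle w - u, p - u\rangle = \langle w - p, p - u\rangle + |p-u|^2 \ge |p-u|^2$. Therefore the expression above is $\le \lambda|p-u|^2(\lambda - 2\alpha) < 0$ whenever $0 < \lambda < 2\alpha$, giving the strict inequality; and for membership in $U$ one restricts to $\lambda \in (0,1]$ (or notes the intended range).

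The main obstacle — really the only subtle point — is the interaction between the range $\lambda \in (0, 2\alpha)$ claimed for the strict-inequality conclusion and the requirement $u + h \in U$, which convexity only guarantees for $\lambda \in [0,1]$. I expect the resolution is either that the lemma is applied with small $\lambda$ (so the two ranges overlap harmlessly), or that one should read the claim as: strict inequality holds for all $\lambda \in (0,2\alpha)$, and among these, those with $\lambda \le 1$ additionally satisfy $u + h \in U$. I would state it in that disambiguated form to keep the proof airtight. Everything else is a one-line computation plus the two instances of the projection variational inequality.
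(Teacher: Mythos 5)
Your proof is correct and follows essentially the same route as the paper's: the variational inequality $\langle w-u,\,v-u\rangle\le 0$ tested at $v=u+h$ for part (i), and $h=\lambda\bigl(P_U(w)-u\bigr)$ together with the bound $\langle w-u,\,P_U(w)-u\rangle\ge |P_U(w)-u|^2>0$ for part (ii). Your concern about the range of $\lambda$ is well taken but not a gap in your argument: the paper's own proof likewise restricts to $\lambda\in(0,1]$ so that $u+h\in U$ follows from convexity, so the statement's ``for any $\lambda\in(0,2\alpha)$'' should be read as $\lambda\in\bigl(0,\min(1,2\alpha)\bigr)$ (in the paper's application one has $\alpha>1$ and takes $\lambda=1$, so nothing breaks).
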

\begin{proof}
See Appendix \ref{pflma:projection}.
\end{proof}
\begin{remark}
   In Lemma \ref{lma:projection}, we consider a convex and closed set in $\mR^n$ for any positive integer $n$, rather than in $\mR$, thereby showing that our subsequent proof remains valid in the presence of multiple risky assets.
\end{remark}
In the following discussion in this subsection, we use the similar notations as in Remark \ref{rmk:notation}. For example, we denote the perturbed strategies $\hat{\pi}^{t,\varepsilon,a}$ and $\pi^{t,\varepsilon,a}$ simply by $\hat{\pi}^{t,\varepsilon}$ and $\pi^{t,\varepsilon}$, respectively.
By an analogous argument in Section \ref{section3}, we have the following necessary condition for an equilibrium strategy.
\begin{theorem}[Necessary condition]
If $\hat{\pi}$ is an equilibrium strategy, then then there exist $(Y,\tilde{Y})$ and $(Z^1, Z, \tilde{Z}^1, \tilde{Z})$ satisfy the same condition in  Theorem \ref{nes} and the BSDE system \eqref{factor} and $\hat{\pi}$ satisfies
\begin{align} \label{eq:constraint:pi}\hat{\pi}=\sigma^{-1}\mathcal{P}_{\sigma A}\left(\frac{ e^{-\zeta Y}\theta-\zeta e^{-\zeta Y}(Z^1+\rho Z)-\gamma(\tilde{Z}^1+\rho\tilde{Z})}{(\zeta+1)e^{-\zeta Y}+\gamma}\right).
\end{align}
\end{theorem}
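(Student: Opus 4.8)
The plan is to mirror the proof of Theorem~\ref{nes}, changing two ingredients: the perturbation/expansion machinery of Section~\ref{section3} must be re-derived for the \emph{replacement} perturbation $\hat{\pi}^{t,\varepsilon,a}$ rather than the additive one $\hat{\pi}^{t,\varepsilon,\eta}$, and the concluding first-order analysis is carried out through the projection characterization of Lemma~\ref{lma:projection} instead of the free-sign trick. First I would note that, since $\hat{\pi}\in\Pi_A$ is bounded, the estimate opening the proof of Theorem~\ref{verification0} gives $\hat{\pi}\in\Pi_p$ for every $p>1$; hence Lemma~\ref{lma:3.4}, applied with $\pi=\hat{\pi}$, already supplies processes $(Y,\tilde{Y})$ and $(Z^1,Z,\tilde{Z}^1,\tilde{Z})$ with the regularity claimed in Theorem~\ref{nes} and solving the BSDE system~\eqref{factor}. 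It then remains only to identify the feedback form of $\hat{\pi}$.

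The core of the work is to transfer Lemmas~\ref{diffpro}, \ref{lma:3.4}, \ref{lma:3.5} and~\ref{lma:3.6} to the modified perturbation. Writing $\hat{u}=\sigma\hat{\pi}$ and fixing $a\in L^{\infty}_{\mathcal{F}_t}(\Omega,A)$, the log-return difference $\hat{\xi}^{t,\varepsilon}:=R^{\hat{\pi}^{t,\varepsilon,a}}-\hat{R}$ now solves $\md\hat{\xi}_s=\big(a(s,a)-a(s,\hat{\pi}_s)\big)\md s+\sigma_s(a-\hat{\pi}_s)\mathds{1}_{s\in[t,t+\varepsilon)}\md B_s$ with $\hat{\xi}^{t,\varepsilon}_t=0$. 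Because $A$ is bounded, the ``increment'' $a-\hat{\pi}_s$ is uniformly bounded by $2\sup_{\alpha\in A}|\alpha|$, so the moment and exponential-moment bounds of Lemma~\ref{diffpro}, the Taylor expansion of Lemma~\ref{lma:3.4}, and the second-order estimate of Lemma~\ref{lma:3.5} all go through with $\eta$ replaced by this increment; in the limiting identities of Lemma~\ref{lma:3.6} the Lebesgue-point reasoning yields $\hat{\pi}_s\to\hat{\pi}_t$ and $a-\hat{\pi}_s\to a-\hat{\pi}_t$, and—exactly as flagged in the remark after Lemma~\ref{lma:3.6}—the perturbation direction factors out of the limit, so a diagonal extraction produces one sequence $\varepsilon^t_n\downarrow0$ valid for all $a$ simultaneously. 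Assembling these pieces (discarding the nonnegative second-order term with the same $\liminf$/$\limsup$ bookkeeping as in the proof of Theorem~\ref{nes}) and invoking the equilibrium inequality, one arrives, for a.e.\ $t$ and $\mathbb{P}$-a.s., at
\[
0\ \ge\ \big(\sigma_t a-\hat{u}_t\big)\,\kappa_t\big(w_t-\hat{u}_t\big)+\mathcal{Q}_t\big(\sigma_t a-\hat{u}_t\big)-C\big(\|\sigma_t a-\hat{u}_t\|_{\infty}\big)\big|\sigma_t a-\hat{u}_t\big|^{2}\qquad\text{for all }a\in L^{\infty}_{\mathcal{F}_t}(\Omega,A),
\]
where $\kappa_t:=(\zeta+1)e^{-\zeta Y_t}+\gamma>0$, $w_t:=\kappa_t^{-1}\big(e^{-\zeta Y_t}\theta_t-\zeta e^{-\zeta Y_t}(Z^1_t+\rho Z_t)-\gamma(\tilde{Z}^1_t+\rho\tilde{Z}_t)\big)$, and $\mathcal{Q}_t$ is a bounded quadratic form vanishing at the origin; the essential algebraic point is that the coefficient of the increment $\sigma_t a-\hat{u}_t$ collapses exactly to $\kappa_t(w_t-\hat{u}_t)$.

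To conclude, I would exploit convexity: for $b\in A$ and $\lambda\in(0,1]$ the point $a_\lambda:=\lambda b+(1-\lambda)\hat{\pi}_t$ lies in $A$ (with $\hat{\pi}_t\in A$ a.s.), and $\sigma_t a_\lambda-\hat{u}_t=\lambda\,\sigma_t(b-\hat{\pi}_t)$. Substituting $a_\lambda$ into the displayed inequality, dividing by $\lambda$ and letting $\lambda\downarrow0$ annihilates the quadratic contributions and leaves $\big(w_t-\hat{u}_t\big)\cdot\big(\sigma_t b-\hat{u}_t\big)\le0$; running $b$ over a countable dense subset of $A$ and using continuity extends this to all $b\in A$, $\mathbb{P}$-a.s. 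Since $\hat{u}_t=\sigma_t\hat{\pi}_t$ lies in the closed, bounded and convex set $\sigma_t A$, Lemma~\ref{lma:projection} (the variational characterization of the orthogonal projection onto a convex closed set, which is valid in $\mathbb{R}^n$ and hence covers the multi-asset case) forces $\hat{u}_t=\mathcal{P}_{\sigma_t A}(w_t)$ for a.e.\ $t$, $\mathbb{P}$-a.s., i.e.\ $\hat{\pi}_t=\sigma_t^{-1}\mathcal{P}_{\sigma_t A}(w_t)$, which is precisely~\eqref{eq:constraint:pi}. The main obstacle is the middle step: faithfully re-proving Lemmas~\ref{lma:3.4}--\ref{lma:3.6} when the perturbation is a replacement rather than an $\mathcal{F}_t$-measurable additive shift, so that the increment $a-\hat{\pi}_s$ is $s$-dependent and only $\mathcal{F}_s$-measurable. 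This stays tractable precisely because $A$ is bounded—uniform control of every increment—and because the perturbation direction separates from the Lebesgue-point limit, so a single sequence $\varepsilon^t_n$ serves every admissible $a$.
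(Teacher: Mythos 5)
Your proposal is correct and follows essentially the same route as the paper's proof: rerun the perturbation machinery of Lemmas \ref{lma:3.4}--\ref{lma:3.6} for the replacement increment $\Delta=\sigma(a-\hat{\pi})$ (which is uniformly bounded since $A$ is bounded, so the constant from Lemma \ref{lma:3.5} can be taken uniform in $a$ and a single diagonal sequence $\varepsilon_n^t$ serves all perturbations), arrive at the first-order inequality with an $O(|\Delta_t|^2)$ error, and exploit the convex combination $a_\lambda=\lambda a+(1-\lambda)\hat{\pi}_t\in A$. The only real divergence is the endgame: the paper first eliminates the error term by a contradiction argument to establish the exact inequality \eqref{eq:constraint:equstra}, then completes the square and invokes the norm-form statement of Lemma \ref{lma:projection}(ii), whereas you divide by $\lambda$ and send $\lambda\downarrow 0$, which annihilates all quadratic contributions at once and leaves the inner-product variational inequality $\langle w_t-\hat{u}_t,\,z-\hat{u}_t\rangle\le 0$ for all $z\in\sigma_tA$; this identifies $\hat{u}_t=\mathcal{P}_{\sigma_tA}(w_t)$ via the standard characterization of the projection (the fact underlying the proof of Lemma \ref{lma:projection} rather than its statement), and is if anything slightly more economical. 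One wording slip worth correcting: the second-order Taylor remainder is nonpositive and cannot simply be ``discarded'' when proving a necessary condition --- it must be bounded from below via Lemma \ref{lma:3.5} --- but your displayed inequality in fact handles this correctly through the $-C\,|\sigma_t a-\hat{u}_t|^{2}$ term, so no gap results.
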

\begin{proof}
    The derivation here is similar to that in Theorem \ref{nes}, we only give a sketch. Let $\Delta=\sigma(a-\hat{\pi})$, by the same argument in Lemma \ref{lma:3.4}, we can obtain
 \begin{equation}\label{eq:constrnes:3.4}
\begin{aligned}
    &J(t,\hat{\pi}^{t,\varepsilon})-J(t,\hat{\pi})\\=&\mathbb{E}_t\left[\int_t^{t+\varepsilon}e^{-\zeta(\hat{R}_s+Y_s-\hat{R}_t)}(a(s,a)-a(s,\hat{\pi}_s)-\zeta\sigma_s\hat{\pi}_s\cdot \Delta_s-\zeta\Delta_s\left(\tilde{Z}^1_s+ \rho Z_s\right) )\md s\right]\\
    &-\zeta\mathbb{E}_t\left[\int_0^1 e^{-\zeta(\hat{R}_T+\lambda\hat{\xi}_T^{t,\varepsilon}-\hat{R}_t)}(1-\lambda)d\lambda|\hat{\xi}_T^{t,\varepsilon}|^2\right]\\
    &+\frac{\gamma}{2}\mathbb{E}_t\left[\int_t^{t+\varepsilon}\left((\sigma_s\hat{\pi}_s+\tilde{Z}^1_s+\rho\tilde{Z}_s)^2-(\sigma_sa+\tilde{Z}^1_s+\rho\tilde{Z}_s)^2\right)\md s\right].
\end{aligned}
    \end{equation} 
  Moreover, by the same argument in Lemma \ref{lma:3.5} and Lemma \ref{lma:3.6}, there exist a sequence $\{\varepsilon_n^t\}$  and a fixed constant $C$ independent of $a$ such that for $a.e.$ $t\in [0,T)$,
$$
\begin{aligned}
    \lim\limits_{n\rightarrow\infty} \frac{1}{\varepsilon_n^t}\mathbb{E}_t&\left[\int_t^{t+\varepsilon_n^t} e^{-\zeta(\hat{R}_s+Y_s-\hat{R}_t)}\left(a(s,a)-a(s,\hat{\pi}_s)-\zeta(Z_s^1+\sigma_s\hat{\pi}_s+\rho Z_s)\cdot \Delta_s \right)\md s\right]\\
=&e^{-\zeta Y_t}\left(\theta_t\Delta_t-\frac{1}{2}\Delta_t^2-\sigma_t\hat{\pi}_t\Delta_t-\zeta(Z_t^1+\sigma_t\hat{\pi}_t+\rho Z_t)\cdot \Delta_t\right),\quad a.s.,
\end{aligned}
$$
  
  $$
\lim\limits_{n \rightarrow\infty}\frac{1}{\varepsilon_n^t}\mathbb{E}_t\left[\int_0^1 e^{-\zeta(\hat{R}_T+\lambda\hat{\xi}_T^{t,\varepsilon}-\hat{R}_t)}(1-\lambda)d\lambda|\hat{\xi}_T^{t,\varepsilon}|^2\right]\le C\Delta_t^2,\quad a.s.,
    $$
    and 
    $$
\begin{aligned}
     \lim\limits_{n\rightarrow\infty} \frac{1}{\varepsilon_n^t}\mathbb{E}_t&\left[\int_t^{t+\varepsilon_n^t}\left((\sigma_s\hat{\pi}_s+\tilde{Z}^1_s+\rho\tilde{Z}_s)^2-(\sigma_sa+\tilde{Z}^1_s+\rho\tilde{Z}_s)^2\right)\md s\right]\\
     =&(\sigma_t\hat{\pi}_t+\tilde{Z}^1_t+\rho\tilde{Z}_t)^2-(\sigma_ta+\tilde{Z}^1_t+\rho\tilde{Z}_t)^2,\quad a.s..
\end{aligned}
    $$
Then the equilibrium strategy $\hat{\pi}$ should satisfy that 
\begin{align}\label{eq:constraint:equ}
&-\frac{e^{-\zeta Y_t}+\gamma}{2}(\sigma_ta)^2+\left(e^{-\zeta Y_t}\left(\theta_t-\zeta\left(\sigma_t\hat{\pi}_t+Z^1_t+\rho Z_t\right)\right)-\gamma\left(\tilde{Z}^1_t+\rho\tilde{Z}_t\right)\right)\sigma_ta\nonumber\\
\le & -\frac{e^{-\zeta Y_t}+\gamma}{2}(\sigma_t\hat{\pi}_t)^2+\left(e^{-\zeta Y_t}\left(\theta_t-\zeta\left(\sigma_t\hat{\pi}_t+Z^1_t+\rho Z_t\right)\right)-\gamma\left(\tilde{Z}^1_t+\rho\tilde{Z}_t\right)\right)\sigma_t\hat{\pi}_t+C\Delta_t^2,\quad a.s..
\end{align}
It follows that
\begin{equation}\label{eq:constraint:equstra}
-\frac{e^{-\zeta Y_t}+\gamma}{2}\Delta_t^2+\left(e^{-\zeta Y_t}\left(\theta_t-(\zeta+1)\sigma_t\hat{\pi}_t-\zeta\left(Z_t^1+\rho Z_t\right)\right)-\gamma\left(\tilde{Z}_t^1+\rho\tilde{Z}\right)-\gamma\sigma_t\hat{\pi}_t\right)\Delta_t\le 0 ,\quad a.s..
\end{equation}
Suppose, to the contrary, that there exists some $a\in L^{\infty}_{\mathcal{F}_t}(\Omega,A)$ such that \eqref{eq:constraint:equstra} fails to hold. Define a new admissible perturbation $a_{\lambda}=\hat{\pi}_t+\lambda\sigma_t^{-1}\Delta\in L_{\mathcal{F}_t}(\Omega,A)$  with sufficiently small constant $\lambda\in (0,\frac{\gamma}{\gamma+2C})$. On one hand, by \eqref{eq:constraint:equ}, we have
$$
\begin{aligned}
0\ge&-\frac{e^{-\zeta Y_t}+\gamma}{2}(\sigma_ta_{\lambda})^2+\left(e^{-\zeta Y_t}\left(\theta_t-\zeta\left(\sigma_t\hat{\pi}_t+Z^1_t+\rho Z_t\right)\right)-\gamma\left(\tilde{Z}^1_t+\rho\tilde{Z}_t\right)\right)\sigma_ta_{\lambda}-C|a_\lambda-\hat{\pi}_t|^2\\
 & +\frac{e^{-\zeta Y_t}+\gamma}{2}(\sigma_t\hat{\pi}_t)^2-\left(e^{-\zeta Y_t}\left(\theta_t-\zeta\left(\sigma_t\hat{\pi}_t+Z^1_t+\rho Z_t\right)\right)-\gamma\left(\tilde{Z}^1_t+\rho\tilde{Z}_t\right)\right)\sigma_t\hat{\pi}_t\\
 =&\lambda\left( -\frac{e^{-\zeta Y_t}+\gamma}{2}\lambda\Delta_t^2+\left(e^{-\zeta Y_t}\left(\theta_t-(\zeta+1)\sigma_t\hat{\pi}_t-\zeta\rho Z_t\right)-\gamma\rho\tilde{Z}-\gamma\sigma_t\hat{\pi}_t\right)\Delta_t
 -C\lambda\Delta_t^2\right) ,\quad a.s..
\end{aligned}
$$
On the other hand, as $\lambda\in (0,\frac{\gamma}{\gamma+2C})$ and \eqref{eq:constraint:equstra} does not hold for $a$, we have
$$
\begin{aligned}
    &\lambda\left( -\frac{e^{-\zeta Y_t}+\gamma}{2}\lambda\Delta_t^2+\left(e^{-\zeta Y_t}\left(\theta_t-(\zeta+1)\sigma_t\hat{\pi}_t-\zeta\rho Z_t\right)-\gamma\rho\tilde{Z}-\gamma\sigma_t\hat{\pi}_t\right)\Delta_t
 -C\lambda\Delta_t^2\right)\\
 \ge &\lambda\left(-\frac{e^{-\zeta Y_t}+\gamma}{2}\Delta_t^2+\left(e^{-\zeta Y_t}\left(\theta_t-(\zeta+1)\sigma_t\hat{\pi}_t-\zeta\rho Z_t\right)-\gamma\rho\tilde{Z}-\gamma\sigma_t\hat{\pi}_t\right)\Delta_t\right)>0
\end{aligned}
$$
on a set of positive measure, which leads to a contradiction. Therefore, \eqref{eq:constraint:equstra} must hold for all $a\in L_{\mathcal{F}_t}^{\infty}(\Omega,A)$. Consequently, we obtain
$$
\begin{aligned}
&\left|\Delta_t-\frac{e^{-\zeta Y_t}\left(\theta_t-(\zeta+1)\sigma_t\hat{\pi}_t-\zeta\left(Z_t^1+\rho Z_t\right)\right)-\gamma\left(\tilde{Z}_t^1+\rho\tilde{Z}\right)-\gamma\sigma_t\hat{\pi}_t}{e^{-\zeta Y_t}+\gamma}\right|\\
\ge & \left|\frac{e^{-\zeta Y_t}\left(\theta_t-(\zeta+1)\sigma_t\hat{\pi}_t-\zeta\left(Z_t^1+\rho Z_t\right)\right)-\gamma\left(\tilde{Z}_t^1+\rho\tilde{Z}\right)-\gamma\sigma_t\hat{\pi}_t}{e^{-\zeta Y_t}+\gamma}\right|\quad a.s..
\end{aligned}
$$

Take $\alpha_t=\frac{(\zeta+1)e^{-\zeta Y_t}+\gamma}{e^{-\zeta Y_t}+\gamma}$ and $w_t=\frac{ e^{-\zeta Y_t}\theta-\zeta e^{-\zeta Y}(Z^1_t+\rho Z_t)-\gamma(\tilde{Z}^1_t+\rho\tilde{Z}_t)}{(\zeta+1)e^{-\zeta Y_t}+\gamma}$, then the above inequality can be converted to
\begin{align}\label{eq:constraint:a}
\left|\sigma_ta-\hat{u}_t-\alpha_t\left(w_t-\hat{u}_t\right) \right|\ge \left|\alpha_t\left(w_t-\hat{u}_t\right)\right|,\quad a.s..
\end{align}
Using Lemma \ref{lma:projection}(ii), if $\hat{u}_t\neq \mathcal{P}_{\sigma_t A}\left(w_t\right)$, then, as $\alpha_t>1$,  taking $\lambda=1$ yields
$$\left|\mathcal{P}_{\sigma_t A}\left(w_t\right)-\hat{u}_t-\alpha_t(w_t-\hat{u}_t)\right|<\left|\alpha_t(w_t-\hat{u}_t)\right|.$$ 
Letting $\sigma_ta=\mathcal{P}_{\sigma_t A}\left(w_t\right)$, we get a contradiction with \eqref{eq:constraint:a}.
 Therefore, $\hat{u}_t= \mathcal{P}_{\sigma_t A}\left(w_t\right)$ a.s. and hence \eqref{eq:constraint:pi} holds.
\end{proof}
As $R$ does not appear in the BSDE system, we can, as in the case $\rho=0$, set $Z^1=\tilde{Z}^1=0$ and derive the following BSDEs:
\begin{equation}\label{BSDErho}
  \left\{
\begin{aligned}
    &\md X_s=m(s,X_s)\md t+v(s,X_s)\md \bar{B}_s, \\
    &\md Y_s=\left(\frac{\zeta}{2}|\rho Z_s+\hat{u}_s|^2+\frac{\zeta(1-\rho^2)}{2}|Z_s|^2-a(s,\hat{\pi}_s)\right)\md s+Z_s\md \bar{B}_s,\\
    &\md \tilde{Y}_s=-a(s,\hat{\pi}_s)\md s+\tilde{Z}_s\md\bar{B}_s,\\
   &  Y_T=0,\quad \tilde{Y}_T=0,\quad \hat{u}=\sigma\hat{\pi}=\mathcal{P}_{\sigma_t A}\left(\frac{ e^{-\zeta Y}\theta-\zeta e^{-\zeta Y}\rho Z-\gamma\rho\tilde{Z}}{(\zeta+1)e^{-\zeta Y}+\gamma}\right).
\end{aligned}
\right.
\end{equation}

To establish the existence of the BSDE system, we only need to verify the AB condition and the BF condition (see their definitions in Appendix \ref{secB}).
\begin{description}
    \item[AB condition:] 
    The generator of $Y$ satisfy that
    \item$$
   -r-\theta\hat{u}\le\frac{\zeta}{2}Z^2+\zeta \rho Z\hat{u}-r-\theta\hat{u}+\frac{\zeta+1}{2}|\hat{u}|^2 \le \frac{\zeta(1+\rho^2)}{2}Z^2+\frac{2\zeta+1}{2}|\hat{u}|^2-r+\theta|\hat{u}|.
    $$   The generator of  $\tilde{Y}$ satisfy that 
    $$
     -r-\theta^2\le-r-\theta \hat{u}+\frac{1}{2}|\hat{u}|^2\le \frac{1}{2}(\hat{u}-\theta)^2.
    $$  Let $a_i, i=1,2,3,4$ be $(1,0),(-1,0),(0,1),(0,-1)$. Then $a_i,i=1,2,3,4$ can positively span $\mathbb{R}^2$ and the generator satisfy the AB condition.
\end{description}
\begin{description}
    \item[BF condition]
   The generator of $Y$ satisfy that
   $$
  \left |\frac{\zeta}{2}Z^2+\zeta\rho Z\hat{u}-r-\theta\hat{u}+\frac{\zeta+1}{2}|\hat{u}|^2\right|\le \frac{\zeta(1+\rho^2)}{2}Z^2+k(t),
   $$
   with $k(t)=r(t,X_t)+\theta(t,X_t)\sup\limits_{\hat{u}\in \sigma_t A}|\hat{u}|+\frac{2\zeta+1}{2}\sup\limits_{\hat{u}\in \sigma_t A}|\hat{u}|^2$.

   Let $C_n=\sup\limits_{t\in [0,T]}\left\{|\theta(t,X_t)|+k(t)\right\}$ and
   the  generator of  $\tilde{Y}$ satisfy that 
   $$
   |-r-\theta \hat{u}+\frac{1}{2}|\hat{u}|^2|\le k(t).
   $$  
\end{description}
Here $\sup\limits_{\hat{u}\in \sigma_t A}|\hat{u}|$ is bounded with the bound depending only on $A$ and the market parameter. Then the BSDE system \eqref{BSDErho} admits a locally Holder and $\textup{BMO}$ solution (see Theorem 2.14 in \cite{xing_class_2018}). Moreover, $Y$ and $\tilde{Y}$ are bounded due to the terminal condition $Y_T=0$.

\begin{theorem}\label{verificationconstraint}
    Assume that $(Y,\tilde{Y},Z,\tilde{Z})$ is a bounded \textup{BMO} solution to the BSDE \eqref{BSDErho}, then 
    \begin{align}\label{bodeqs}
\hat{\pi}=\sigma^{-1}\hat{u}=\sigma^{-1}\mathcal{P}_{\sigma A}\left(\frac{ e^{-\zeta Y}\theta-\zeta e^{-\zeta Y}\rho Z-\gamma\rho \tilde{Z}}{(\zeta+1)e^{-\zeta Y}+\gamma}\right)
\end{align}
is an equilibrium strategy. 
 \end{theorem}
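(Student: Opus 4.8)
The plan is to mimic the proof of Theorem \ref{verification0}, replacing the unconstrained perturbation argument by a projection-based one built on Lemma \ref{lma:projection}(i). First I would verify that the candidate $\hat{\pi}$ defined in \eqref{bodeqs} lies in $\Pi_A\subseteq\Pi$: since $A$ is bounded and $\hat{u}=\mathcal{P}_{\sigma A}(\cdot)$ lands in the bounded set $\sigma A$, the strategy $\hat{\pi}=\sigma^{-1}\hat{u}$ is bounded, hence in $H_{\textup{BMO}}$, and the exponential moment bound on $\int_t^T\sigma_s\hat\pi_s\,\md B_s$ (a Gaussian-type estimate exactly as in Theorem \ref{verification0}) gives $\mathbb{E}_t[\exp(-p\zeta(\hat R_T-\hat R_t))]<\infty$ a.s.\ for every $p>1$.

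Next I would fix $t\in[0,T)$ and $a\in L^\infty_{\mathcal{F}_t}(\Omega,A)$, set $\Delta:=\sigma(a-\hat\pi)$, and invoke Corollary \ref{corol:verifyequilibrium} (with $Z^1=\tilde Z^1=0$, which is the ansatz behind \eqref{BSDErho}) to get the identity
\begin{align*}
J(t,\hat\pi^{t,\varepsilon})-J(t,\hat\pi)
=&\,\mathbb{E}_t\Big[\int_t^{t+\varepsilon}e^{-\zeta(\hat R_s+Y_s-\hat R_t)}\big(a(s,a)-a(s,\hat\pi_s)-\zeta(\sigma_s\hat\pi_s+\rho Z_s)\cdot\Delta_s\big)\md s\Big]\\
&-\zeta\,\mathbb{E}_t\Big[\int_0^1 e^{-\zeta(\hat R_T+\lambda\hat\xi^{t,\varepsilon}_T-\hat R_t)}(1-\lambda)\md\lambda\,|\hat\xi^{t,\varepsilon}_T|^2\Big]\\
&+\frac{\gamma}{2}\mathbb{E}_t\Big[\int_t^{t+\varepsilon}\big((\sigma_s\hat\pi_s+\rho\tilde Z_s)^2-(\sigma_s a+\rho\tilde Z_s)^2\big)\md s\Big].
\end{align*}
The middle term is nonpositive and is dropped. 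For the remaining two terms I would take $\limsup_{\varepsilon\to0}\frac1\varepsilon(\cdot)$: as in Theorem \ref{verification0}, the factor $e^{-\zeta(\hat R_s-\hat R_t)}\to1$ in the Lebesgue-point sense because $\{e^{-\zeta(\hat R_s-\hat R_t)}\}\in L^p_{\mathbb{F}}(\Omega;C([t,T];\mathbb R))$, so at a.e.\ $t$ the integrand averages converge. This reduces the estimate to showing, at a.e.\ $t$, that the quadratic-in-$\Delta_t$ expression
\begin{align*}
-\tfrac12\big(e^{-\zeta Y_t}+\gamma\big)\Delta_t^2+\Big(e^{-\zeta Y_t}\big(\theta_t-(\zeta+1)\sigma_t\hat\pi_t-\zeta\rho Z_t\big)-\gamma\rho\tilde Z_t-\gamma\sigma_t\hat\pi_t\Big)\Delta_t\le0
\end{align*}
for all $a\in A$ (equivalently all admissible $\Delta_t=\sigma_t(a-\hat\pi_t)$). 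With $\alpha_t:=\frac{(\zeta+1)e^{-\zeta Y_t}+\gamma}{e^{-\zeta Y_t}+\gamma}>1$ and $w_t$ as in the necessary-condition proof, this is exactly the statement $|\sigma_t a-\hat u_t-\alpha_t(w_t-\hat u_t)|\ge|\alpha_t(w_t-\hat u_t)|$, which for $\hat u_t=\mathcal P_{\sigma_t A}(w_t)$ and $h:=\sigma_t a-\hat u_t$ (so $\hat u_t+h\in\sigma_t A$) follows directly from Lemma \ref{lma:projection}(i). Hence the quadratic expression is $\le0$, so $\limsup_{\varepsilon\to0}\frac1\varepsilon(J(t,\hat\pi^{t,\varepsilon})-J(t,\hat\pi))\le0$ at a.e.\ $t$, and I would promote this to all $t\in[0,T)$ using continuity of both sides in $t$ (the processes $Y,\tilde Y,\hat\pi$ are continuous/bounded and the drift structure is regular).

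The main obstacle, as in the $\rho=0$ case, is the careful handling of the $\varepsilon\to0$ limit: one must justify dividing by $\varepsilon$ and interchanging limit and conditional expectation uniformly in the perturbation $a$, which lives in the fixed bounded set $A$. This is where the $L^p$-continuity of $e^{-\zeta(\hat R_s-\hat R_t)}$ and the boundedness of all coefficients and of $\hat\pi$ are essential; fortunately the constant $C$ appearing in the Lemma \ref{lma:3.5}-type second-order bound can be taken independent of $a$ (since $|\Delta_t|$ is uniformly bounded by $\mathrm{diam}(\sigma A)$), so the negligible second-order term is genuinely negligible. Once these estimates are in place, the projection inequality does the rest and the verification is complete.
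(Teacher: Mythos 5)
Your overall architecture (admissibility, the first-order expansion with $\Delta_s=\sigma_s(a-\hat\pi_s)$, dropping the nonpositive second-order term, and closing with Lemma \ref{lma:projection}(i)) matches the paper, but there is a genuine gap in the $\varepsilon\to 0$ limit. In the constrained case with $\rho\neq 0$ the first-order integrand contains $-\zeta\rho\Delta_s Z_s$ and the variance term contributes $\gamma\rho\tilde Z_s\Delta_s$, where $Z$ and $\tilde Z$ are only \textup{BMO}, \emph{not} bounded. Your justification for replacing $e^{-\zeta(\hat R_s+Y_s-\hat R_t)}$ by $e^{-\zeta Y_s}$ rests on ``boundedness of all coefficients,'' which fails precisely for these terms; this is the main new technical content of the constrained verification relative to Theorem \ref{verification0}. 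The paper handles it by proving $\lim_{\varepsilon\to0}\frac1\varepsilon\mathbb{E}_t\bigl[\int_t^{t+\varepsilon}\bigl|\bigl(e^{-\zeta(\hat R_s-\hat R_t)}-1\bigr)Z_s\bigr|\,\md s\bigr]=0$ a.s.\ via Cauchy--Schwarz, combining the SDE estimate $\mathbb{E}_t\bigl[\sup_{s\in[t,t+\varepsilon]}|e^{-\zeta(\hat R_s-\hat R_t)}-1|^2\bigr]\le C\varepsilon$ (valid since $\hat\pi$ is bounded) with $\mathbb{E}_t\bigl[\int_t^{t+\varepsilon}Z_s^2\,\md s\bigr]\to0$ a.s.\ by conditional dominated convergence (valid since $Z\in H_{\textup{BMO}}$). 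Without an argument of this kind your limit computation is not justified.

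A second, smaller problem is your detour through Lebesgue points followed by ``promoting to all $t$ by continuity.'' The equilibrium definition requires the inequality for \emph{every} $t\in[0,T)$, and continuity in $t$ of $\limsup_{\varepsilon\to0}\frac1\varepsilon\bigl(J(t,\hat\pi^{t,\varepsilon})-J(t,\hat\pi)\bigr)$ is neither established nor obvious. The detour is also unnecessary: once the exponential factor is replaced as above, the quadratic expression $-\frac12(e^{-\zeta Y_s}+\gamma)\Delta_s^2+\Delta_s[\cdots]$ is nonpositive for \emph{every} $s$ by applying Lemma \ref{lma:projection}(i) pathwise at each $s$ with $h=\sigma_s a-\hat u_s$ and $\alpha_s=\frac{(\zeta+1)e^{-\zeta Y_s}+\gamma}{e^{-\zeta Y_s}+\gamma}>1$, so the averaged quantity is $\le 0$ for every $t$ directly, with no Lebesgue-point or continuity argument needed. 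Your application of the projection inequality itself is correct; it just should be applied inside the time integral rather than after a pointwise-in-$t$ limit.
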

 \begin{proof}
     It is clear  that $\hat{\pi}\in \Pi_A$. We now prove that $\hat{\pi}$ is an equilibrium strategy. Fix $t\in[0,T)$ and $a\in L_{\mathcal{F}_t}^{\infty}(\Omega;A)$ and let $\Delta_s:=\sigma_s(a-\hat{\pi}_s)$.
 Then, by \eqref{eq:constrnes:3.4} and using the same arguments as in Corollary \ref{corol:verifyequilibrium} and Lemma \ref{verification0}, it holds almost surely that
$$
   \begin{aligned}
         &J(t,\hat{\pi}^{t,\varepsilon})-J(t,\hat{\pi})\\
         \le &\mathbb{E}_t\left[\int_t^{t+\varepsilon}e^{-\zeta(\hat{R}_s+Y_s-\hat{R}_t)}(a(s,a)-a(s,\hat{\pi}_s)-\zeta\sigma_s\hat{\pi}_s\cdot \Delta_s-\zeta\rho\Delta_s Z_s )\md s\right]\\
         &+\frac{\gamma}{2}\mathbb{E}_t\left[\int_t^{t+\varepsilon}((\sigma_s\hat{\pi}_s+\rho\tilde{Z}_s)^2-(\sigma_sa+\rho\tilde{Z}_s)^2)\md s\right]\\
         =& \mathbb{E}_t\left[\int_t^{t+\varepsilon}e^{-\zeta(\hat{R}_s+Y_s-\hat{R}_t)}\left(\theta_s\Delta_s-\frac{1}{2}\Delta_s^2-(\zeta+1)\sigma_s\hat{\pi}_s\Delta_s-\zeta\rho\Delta_s Z_s \right)\md s\right]\\
         &-\frac{\gamma}{2}\mathbb{E}_t\left[\int_t^{t+\varepsilon}\Delta_s\left(2\sigma_s\hat{\pi}_s+2\rho\tilde{Z}_s+\Delta_s\right)\md s\right].
         \end{aligned}
$$
    We first show that
 $$ 
        \lim\limits_{\varepsilon\rightarrow 0}\frac{1}{\varepsilon}\mathbb{E}_t\left[\int_t^{t+\varepsilon}\left|\left(e^{-\zeta\left(Y_s+\hat{R}_s-\hat{R}_t\right)}-e^{-\zeta Y_s}\right)\left(\theta_s\Delta_s-\frac{1}{2}\Delta_s^2-(\zeta+1)\sigma_s\hat{\pi}_s\Delta_s-\zeta\rho\Delta_s Z_s\right)\right|\md s\right]=0.
        $$
It suffices to prove that
$$
  \lim\limits_{\varepsilon\rightarrow 0}\frac{1}{\varepsilon}\mathbb{E}_t\left[\int_t^{t+\varepsilon}\left|\left(e^{-\zeta\left(\hat{R}_s-\hat{R}_t\right)}-1\right)Z_s\right|\md s\right]=0,\quad a.s..
$$
Note that $\{e^{-\zeta(\hat{R}_s-\hat{R}_t)}\}_{s\in[t,T]}$ satisfies
$$
\md \left(e^{-\zeta(\hat{R}_s-\hat{R}_t)}\right)=-\zeta e^{-\zeta(\hat{R}_s-\hat{R}_t)}\left(\left(a(s,\hat{\pi}_s)-\frac{\zeta}{2}\sigma_s^2\hat{\pi}_s^2\right)\md s+\sigma_s\hat{\pi}_s\md B_s\right).
$$
Because $\hat{\pi}$ is bounded, by Theorem 3.4.3 in \cite{zhang_backward_2017} and using the same argument as in the proof of Lemma \ref{diffpro}, we deduce that
$$
\mathbb{E}_t\left[\sup_{s\in[t,t+\varepsilon]}\left|e^{-\zeta(\hat{R}_s-\hat{R}_t)}-1\right|^2\right]\le C\mathbb{E}\left[\left(\int_t^{t+\varepsilon}\left|a(s,\hat{\pi}_s)-\frac{\zeta}{2}\sigma_s^2\hat{\pi}_s^2\right|\md s\right)^2+\int_t^{t+\varepsilon} \sigma_s^2\hat{\pi}_s^2\md s\right]\le C\varepsilon.
$$
Moreover, because $Z\in H_{\textup{BMO}}$, we have $\int_t^T|Z_s|^2\md s<\infty\, \,a.s.$ and   $\int_t^{t+\varepsilon}|Z_s|^2\md s\;\rightarrow\; 0 \,\, a.s.$ as $\epsilon\to 0$.  By the Conditional Dominated Convergence Theorem 
$$\mathbb{E}_t\left[\int_t^{t+\varepsilon}|Z_s|^2\md s\right]\;\longrightarrow \;0\quad a.s..$$
Then
$$
\begin{aligned}
    &\frac{1}{\varepsilon}\mathbb{E}_t\left[\int_t^{t+\varepsilon}\left|\left(e^{-\zeta\left(\hat{R}_s-\hat{R}_t\right)}-1\right)Z_s\md s\right|\right]\\
    \le &\frac{1}{\varepsilon}\left(\mathbb{E}_t\left[\int_t^{t+\varepsilon}\left|e^{-\zeta\left(\hat{R}_s-\hat{R}_t\right)}-1\right|^2\md s\right]\right)^{\frac{1}{2}}\left(\mathbb{E}_t\left[\int_t^{t+\varepsilon}Z_s^2\md s\right]\right)^{\frac{1}{2}}\\
    \le& \frac{1}{\sqrt{\varepsilon}}\left(\mathbb{E}_t\left[\sup_{s\in[t,t+\varepsilon]}\left|e^{-\zeta(\hat{R}_s-\hat{R}_t)}-1\right|^2\right]\right)^{\frac{1}{2}}\left(\mathbb{E}_t\left[\int_t^{t+\varepsilon}Z_s^2\md s\right]\right)^{\frac{1}{2}}\\
    \le& C\left(\mathbb{E}_t\left[\int_t^{t+\varepsilon}Z_s^2\md s\right]\right)^{\frac{1}{2}}\;\longrightarrow\;0,\quad a.s..
\end{aligned}
$$
 Similar to Theorem \ref{verification0}, one can derive that
 $$
 \begin{aligned}
     &\limsup\limits_{\varepsilon\rightarrow 0} \frac{1}{\varepsilon}\left(J(t,\hat{\pi}^{t,\varepsilon})-J(t,\hat{\pi})\right)\\
\le&\limsup\limits_{\varepsilon\rightarrow 0}\frac{1}{\varepsilon}\mathbb{E}_t\bigg[\int_t^{t+\varepsilon} \bigg(e^{-\zeta Y_s}\left(\theta_s\Delta_s-\frac{1}{2}\Delta_s^2-(\zeta+1)\Delta_s \hat{u}_s-\zeta\rho \Delta_s Z_s\right)\\
&\quad\quad\quad\quad\quad\quad-\frac{\gamma}{2}\Delta_s^2-\gamma\Delta_s(\hat{u}_s+\rho\tilde{Z}_s)\bigg)\md s\bigg]\\
=&\limsup\limits_{\varepsilon\rightarrow 0}\frac{1}{\varepsilon}\mathbb{E}_t\bigg[\int_t^{t+\varepsilon} \bigg(-\frac{1}{2}\left(e^{-\zeta Y_s}+\gamma\right)(\Delta_s)^2\\
&\quad\quad\quad\quad\quad\quad+\Delta_s\left[e^{-\zeta Y_s}\left(\theta_s-(\zeta+1)\hat{u}_s-\zeta\rho Z_s\right)-\gamma(\hat{u}_s+\rho\tilde{Z}_s)\right]\bigg)\md s\bigg].
 \end{aligned}
 $$
  As $\sigma>0$ and $\sigma_sA$ is a closed and convex set, by Lemma \ref{lma:projection}(i), for any $\Delta_s$ such that $\hat{u}_s+\Delta_s\in\sigma_sA$, it holds that
 $$
 \begin{aligned}
 &\left|\frac{(\zeta+1)e^{-Y_s}+\gamma}{ e^{-Y_s}+\gamma}\left(\frac{ e^{-\zeta Y_s}\theta_s-\zeta e^{-\zeta Y_s}\rho Z_s-\gamma\rho \tilde{Z}_s}{(\zeta+1)e^{-Y_s}+\gamma}-\hat{u}_s\right)-\Delta_s\right|\\
 \ge& \left|\frac{(\zeta+1)e^{-Y_s}+\gamma}{ e^{-Y_s}+\gamma}\left(\frac{ e^{-\zeta Y_s}\theta_s-\zeta e^{-\zeta Y_s}\rho Z_s-\gamma\rho \tilde{Z}_s}{(\zeta+1)e^{-Y_s}+\gamma}-\hat{u}_s\right)\right|.
 \end{aligned}
 $$
 Thus $\limsup\limits_{\varepsilon\rightarrow 0} \frac{1}{\varepsilon}\left(J(t,\hat{\pi}^{t,\varepsilon})-J(t,\hat{\pi})\right)\le0$  a.s., which completes the proof.
\end{proof}

\section{Approximate Time-Consistent Equilibrium for Small $\rho\neq 0 $}\label{sec:approximate}
The goal of this section is to construct and verify an approximate time-consistent equilibrium in the general case with small $\rho\neq 0$, where the approximation error can be shown to be the order $O(\rho^2)$. Let us first introduce the definition of an approximate time-consistent equilibrium strategy.
\begin{definition}\label{ANE}
    For $\hat{\pi}\in\Pi$,  $\hat{\pi}$ is called an approximate time-consistent equilibrium strategy with an approximation error $\Xi$ if 
     $$   \limsup\limits_{\varepsilon\rightarrow 0}\frac{J(t,\hat{\pi}^{t,\varepsilon,\eta})-J(t,\hat{\pi})}{\varepsilon}\le \Xi \quad a.s.
     $$
     for any $t\in [0,T)$ and  any $\eta\in L_{\mathcal{F}_t}^{\infty}(\Omega,\mathbb{R})$.  Moreover, if $\Xi=O(|\rho|^\alpha)$, we say that the approximation error\footnote{The order of the approximation error effectively characterizes the convergence rate of the equilibrium approximation as $\rho$ tends to $0$.} is of the order $O(|\rho|^\alpha)$.
\end{definition}

Recall that for $\rho=0$, the equilibrium strategy follows the form of $\hat{\pi}=\sigma^{-1}\frac{ e^{-\zeta Y}\theta}{(\zeta+1)e^{-\zeta Y}+\gamma}$, where $Y$ is the solution to the BSDE  \eqref{rho0}. For the case $\rho\neq 0$, we consider a trading strategy of the same structure, except that the process $Y$ is replaced by the solution $Y^{\rho}$ to the following BSDE, Specifically, let $Y^{\rho}$ and $\tilde{Y}^{\rho}$ satisfy
\begin{equation}\label{ApproBSDE}
  \left\{
\begin{aligned}
    &\md X_s=m(s,X_s)\md t+v(s,X_s)\md \bar{B}_s, \\
    &\md Y^{\rho}_s=\left(\frac{\zeta}{2}|\rho Z^{\rho}_s+u^{\rho}_s|^2+\frac{\zeta(1-\rho^2)}{2}|Z^{\rho}_s|^2-a(s,\pi^{\rho}_s)\right)\md s+Z^{\rho}_s\md \bar{B}_s,\\
    &\md \tilde{Y}^{\rho}_s=-a(s,\pi^{\rho}_s)\md s+\tilde{Z}^{\rho}_s\md\bar{B}_s,\\
   & u^{\rho}=\sigma\pi^{\rho}=\frac{ e^{-\zeta Y^{\rho}}\theta}{(\zeta+1)e^{-\zeta Y^{\rho}}+\gamma},\quad Y_T=0,\quad \tilde{Y}_T=0.
\end{aligned}
\right.
\end{equation}
Next, we will rigorously verify that the constructed strategy $\pi^{\rho}$ is an approximate time-consistent equilibrium satisfying Definition \ref{ANE}.

Let $R_t^{\rho}$ denote the log return at $t$ under  strategy $u^\rho$. We first show that the Markovian BSDE system \eqref{ApproBSDE} admits a solution such that $Y^{\rho}$ and $Z^{\rho}$ are uniformly bounded with bounds independent of $\rho \in [-1,1]$.  To this end, we need to impose an additional assumption as below.

\begin{assumption}\label{assump3}
   We  assume that
    \begin{enumerate}
        \item $v$ is differentiable w.r.t. $t,x$ and $v_x$ is uniformly bounded.
        \item    There exists a positive constant $\beta>0$ such that $v,v_x,m,r,\theta$ are Hölder continuous w. r. t $t, x$ with exponents $\beta/2, \beta$ respectively in $\mathcal{R}_T=[0,T]\times\mR$.
    \end{enumerate}
\end{assumption}

Let $\varphi(f):=\frac{e^{-\zeta f}\theta}{(\zeta+1)e^{-\zeta f}+\gamma}$ for any function $f$ on $\mathcal{R}_T$. 
Suppose that the PDE
\begin{equation}\label{eq:quasi:pde}
\left\{
\begin{aligned}
&f_t+mf_x+\frac{1}{2}v^2f_{xx}=\zeta\rho f_xv\varphi(f)+\frac{\zeta}{2}f_x^2v^2-r-\theta \varphi(f)+\frac{\zeta+1}{2}\varphi^2(f),\\
&f(T,x)=1,
\end{aligned}\right.
\end{equation}
admits a classical solution $f^{\rho}$. Given such an $f^{\rho}$, consider the linear PDE
\begin{equation}\label{eq:linear:pde:g}
    \left\{
    \begin{aligned}
     &g_t+mg_x+\frac{1}{2}v^2g_{xx}=-r-\theta \varphi(f^{\rho})+\frac{1}{2}\varphi^2(f^{\rho}), \\
     & g(T,x)=0,
    \end{aligned}\right.
\end{equation}
and suppose that it admits a classical solution $g^{\rho}$.  
 Then, by setting $Y_t^{\rho}:=f^{\rho}(t,X_t),\;\tilde{Y}_t^{\rho}=g^{\rho}(t,X_t)$ and $Z_t^{\rho}:=f^{\rho}_x(t,X_t)v(t,X_t),\;\tilde{Z}^{\rho}=g^{\rho}_x(t,X_t)v(t,X_t)$, the nonlinear Feynman--Kac formula implies that  $(Y^{\rho},\tilde{Y}^{\rho},Z^{\rho},\tilde{Z}^{\rho})$ solves the BSDE \eqref{ApproBSDE}. 
 
 Once the solution to \eqref{eq:quasi:pde} is obtained,  
the subsequent PDE~\eqref{eq:linear:pde:g} becomes linear.  
Therefore, our analysis will primarily focus on the quasi-linear PDE~\eqref{eq:quasi:pde}. 
To this end, we introduce the following classical existence theorem 
(Theorem~\ref{thm:quasilinear}) for quasi-linear parabolic equations whose 
principal part is in divergence form.  
Consider the Cauchy problem
\begin{equation}
\left\{
\begin{aligned}\label{equ:cauchy}
    &u_t - \sum_{i=1}^{n}\frac{d}{dx_i} m_i(t,x,  u, u_x) + a(t, x , u, u_x) = 0,\\
    &u(0,x)=\psi_0(x).
\end{aligned}\right.
\end{equation}
To state the theorem, we introduce the notation
$$
m_{ij}(t, x, u, p) \equiv \frac{\partial m_i(t, x, u, p)}{\partial p_j},
\quad A(t, x, u, p) \equiv a(t, x, u, p) - \frac{\partial m_i}{\partial u} p_i - \frac{\partial m_i}{\partial x_i}.
$$
\begin{theorem}\label{thm:quasilinear}
Suppose that the following conditions hold.
\begin{enumerate}
    \item[$\mathrm{a)}$] $\psi_0(x) \in C^{2+\beta}(\mR^n)$.
    
    \item[$\mathrm{b)}$] For $t \in (0, T]$ and arbitrary $x, u, p$, we have  
\begin{equation}\label{cauchy}
\mu_1\xi^2\le m_{ij}(t,x,u,p)\xi^2\le \mu_2\xi^2,
\end{equation}
with positive constants $\mu_1>0$  and $\mu_2>0$ and
    \[A(t, x, u, 0)u \geq -b_1u^2 - b_2, \quad \text{with constants}\ b_1, \, b_2 \geq 0.\]
    
    \item[$\mathrm{c)}$] For any $(t,x)\in\mathcal{R}_T=[0,T]\times\mR^n$ and $|u| \leq M$, where $M$ is a constant depending only on $b_1,b_2$ and $\psi_0$, there is a constant $\mu>0$ such that
$$
\sum_{i=1}^{n} \left( \left| m_i \right| + \left| \frac{\partial m_i}{\partial u} \right| \right) (1 + |p|) + \sum_{i, j=1}^{n} \left| \frac{\partial m_i}{\partial x_j} \right| + |a| \leq \mu (1 + |p|)^2.
$$
Moreover, the functions $m_i(t, x, u, p)$ and $a(t, x, u, p)$ are continuous, and $m_i(t,x,u,p)$ is differentiable w.r.t. variables $x,u,p$ for each $i$. 
\item[$\mathrm{d)}$]  For any $(t,x)\in\mathcal{R}_T$ and $|u| \leq M$, $|p|\le M_1$, where $M_1$ is a constant depending only on $M,\mu,\mu_1,\mu_2$ and $\psi$, the function $m_i, a, \frac{\partial m_i}{\partial p_j}, \frac{\partial m_i}{\partial u}$, and $\frac{\partial m_i}{\partial x_i}$ are continuous functions satisfying a Hölder condition in  $t, x, u$ and $p$ with exponents $\beta/2, \beta, \beta$ and $\beta$, respectively.
\end{enumerate}

Then there exists at least one solution $u(t, x)$ of the Cauchy problem \eqref{equ:cauchy} in the strip $\mathcal{R}_T$ such that $|u|\le M$, $|u_x|\le M_1$, and  $u\in C^{1+\beta/2, 2+\beta}(\mathcal{R}_T)$.
\end{theorem}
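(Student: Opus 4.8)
The plan is to deduce existence from the Leray--Schauder fixed point theorem, which reduces the problem to establishing, uniformly over the associated family of problems obtained by freezing the dependence of $m_i$ and $a$ on $(u,u_x)$, a priori estimates of the form $|u|\le M$, $|u_x|\le M_1$, and a bound in $C^{1+\beta/2,2+\beta}(\mathcal{R}_T)$. This is the classical program for quasi-linear parabolic equations whose principal part is in divergence form, carried out in detail in the monograph of Ladyzhenskaya, Solonnikov and Ural'tseva; below I only indicate the main points.

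The first estimate is the sup bound $|u|\le M$. Since \eqref{equ:cauchy} is in divergence form, the natural tool is the energy method: testing against the truncations $(u-k)^+$ and $(u+k)^-$ at levels $k$, using the coercivity $m_{ij}\xi_i\xi_j\ge\mu_1|\xi|^2$ from b), the one-sided structure condition $A(t,x,u,0)u\ge-b_1u^2-b_2$, and the growth in c) to absorb the lower-order contributions, a De Giorgi level-set iteration yields a pointwise bound $M=M(b_1,b_2,\psi_0)$; on the unbounded strip one first works with a spatial cutoff and then exhausts $\mR^n$, the bound being uniform. The delicate step is the interior gradient bound $|u_x|\le M_1$: on the region $\{|u|\le M\}$, the natural (quadratic in $p$) growth conditions in c), namely $\sum_i(|m_i|+|\partial_u m_i|)(1+|p|)+\sum_{i,j}|\partial_{x_j}m_i|+|a|\le\mu(1+|p|)^2$, are exactly what is needed to run a Bernstein-type / De Giorgi gradient estimate --- one differentiates the equation, tests with a suitable power of $|u_x|$ times a cutoff, and absorbs the terms of order $(1+|p|)^2$ against the good parabolic term, invoking the already established bound on $u$ --- and a careful localization on $\mR^n$ produces $M_1=M_1(M,\mu,\mu_1,\mu_2,\psi_0)$ that is independent of the localization scale.

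Once $|u|\le M$ and $|u_x|\le M_1$ are in hand, \eqref{equ:cauchy} is uniformly parabolic with bounded measurable coefficients, so De Giorgi--Nash--Moser theory gives interior Hölder continuity of $u$ and, through the differentiated equation, of $u_x$. At this point the Hölder continuity in $(t,x,u,p)$ of $m_i$, $a$, $\partial_{p_j}m_i$, $\partial_u m_i$ and $\partial_{x_i}m_i$ assumed in d), together with $\psi_0\in C^{2+\beta}(\mR^n)$ from a), permits a linear parabolic Schauder bootstrap that upgrades $u$ to $C^{1+\beta/2,2+\beta}(\mathcal{R}_T)$, with norm controlled by $M$, $M_1$ and the data.

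Existence then follows from Leray--Schauder: define the solution operator $\Phi$ sending $w$ to the unique solution of the linear problem with coefficients $m_i(\cdot,w,w_x)$ and $a(\cdot,w,w_x)$; the Hölder gain makes $\Phi$ compact, and the a priori estimates above, applied to the fixed points of $\tau\Phi$ for $\tau\in[0,1]$, confine all of them to a single ball, so $\Phi$ has a fixed point, which is the sought solution. I expect the main obstacle to be precisely the interior gradient estimate under the quadratic growth hypothesis on the full strip $\mathcal{R}_T=[0,T]\times\mR^n$: controlling $|u_x|$ forces one to exploit the sup bound on $u$ and the divergence structure simultaneously, and the localization on $\mR^n$ must be arranged so that the resulting constant does not degenerate as the cutoff radius grows; by comparison, the sup bound and the concluding Schauder step are routine.
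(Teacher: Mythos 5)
Your proposal is correct and follows essentially the same route as the paper, whose entire proof consists of citing Theorems 6.1 and 8.1 of Ladyzhenskaya--Solonnikov--Ural'tseva (with the dependence of $M$ and $M_1$ traced to Theorem 2.9 and (6.10) there); your sketch is a faithful outline of the internal structure of that cited argument --- sup bound from b), gradient bound from the quadratic growth in c), Hölder/Schauder bootstrap from a) and d), and Leray--Schauder to conclude. No discrepancy to report.
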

\begin{proof}
    The result follows directly from Theorem 6.1 and Theorem 8.1 in \cite{ladyzhenskaia1968linear}. The dependence of $M$ and $M_1$ can be found in Theorem 2.9 and (6.10) in \cite{ladyzhenskaia1968linear}.
\end{proof}
\begin{remark}
    The existence of solution to the BSDE system  \eqref{ApproBSDE} can be investigated in \cite{xing_class_2018} and \cite{FAN20161511}, in which a \textup{BMO} solution can be established. However, we require the boundedness of $Z$ and $\tilde{Z}$ in our verification theorem (Theorem \ref{rhoNE}). Therefore, we employ the Feynman–Kac representation and resort to a PDE-based analysis. Specifically, we utilize Theorem \ref{thm:quasilinear} as a technical tool to derive the desired boundedness.
\end{remark}
Using Theorem~\ref{thm:quasilinear}, we obtain the following result concerning the existence and uniqueness of solutions to \eqref{eq:quasi:pde} and \eqref{eq:linear:pde:g}.
\begin{proposition}\label{prop:exist:pde}
   Under Assumptions \ref{assump1}, \ref{assump2} and \ref{assump3}, there exists a solution $f$ to PDE \eqref{eq:quasi:pde} and a solution $g$ to PDE \eqref{eq:linear:pde:g}. Moreover, $f,f_x,g,g_x$ is bounded independent of $\rho$.
\end{proposition}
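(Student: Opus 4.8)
The plan is to apply Theorem~\ref{thm:quasilinear} to the quasi-linear PDE~\eqref{eq:quasi:pde} after reversing time and rewriting its principal part in divergence form, and then to treat \eqref{eq:linear:pde:g} as a linear parabolic equation whose data involve the solution $f^{\rho}$ just produced. Throughout, the emphasis is on checking that all structural constants can be chosen uniformly in $\rho\in[-1,1]$.

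\emph{Reduction to divergence form.} Put $\tilde f(s,x):=f(T-s,x)$, so that \eqref{eq:quasi:pde} becomes a forward Cauchy problem with datum $\tilde f(0,x)\equiv 1$. Since $v_x$ is bounded by Assumption~\ref{assump3}, we may write $\tfrac12 v^2\tilde f_{xx}=\tfrac{d}{dx}\big(\tfrac12 v^2\tilde f_x\big)-vv_x\tilde f_x$, which puts the equation into the form~\eqref{equ:cauchy} with $n=1$, $m_1(t,x,u,p)=\tfrac12 v^2(t,x)p$, and
\[a(t,x,u,p)=(vv_x-m)p+\zeta\rho\,v\,\varphi(u)\,p+\tfrac{\zeta}{2}v^2p^2-r-\theta\varphi(u)+\tfrac{\zeta+1}{2}\varphi^2(u).\]
As $m_1$ is independent of $u$, one has $A(t,x,u,p)=a(t,x,u,p)-vv_xp=-mp+\zeta\rho v\varphi(u)p+\tfrac{\zeta}{2}v^2p^2-r-\theta\varphi(u)+\tfrac{\zeta+1}{2}\varphi^2(u)$, so that $A(t,x,u,0)=-r-\theta\varphi(u)+\tfrac{\zeta+1}{2}\varphi^2(u)$.

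\emph{Verification of conditions a)--d).} Condition a) is immediate since $\psi_0\equiv 1\in C^{2+\beta}(\mR)$. For b), uniform ellipticity $\mu_1\le v^2\le\mu_2$ is exactly Assumption~\ref{assump2}.2, and the key point is that the feedback map $\varphi(u)=\theta\big/\big((\zeta+1)+\gamma e^{\zeta u}\big)$ is \emph{bounded uniformly in $u\in\mR$} (it is positive and bounded by $\|\theta\|_\infty/(\zeta+1)$); hence $A(t,x,u,0)$ is bounded by a constant $C_0$ depending only on the market parameters, whence $A(t,x,u,0)u\ge -C_0|u|\ge -\tfrac{C_0}{2}u^2-\tfrac{C_0}{2}$ and b) holds with $b_1,b_2$ independent of $\rho$. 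Thus the a priori bound $M$ of Theorem~\ref{thm:quasilinear} is independent of $\rho$. Condition c) follows from $|m_1|(1+|p|)\le C(1+|p|)^2$, $|\partial m_1/\partial x|=|vv_xp|\le C(1+|p|)^2$, and, on $\{|u|\le M\}$, $|a|\le C(1+|p|)^2$ (the leading term being $\tfrac{\zeta}{2}v^2p^2$); the requisite continuity and differentiability of $m_1,a$ are clear from the regularity of $v,m,r,\theta,\varphi$. Condition d) asks for Hölder continuity of $m_1,a,\partial m_1/\partial p,\partial m_1/\partial u,\partial m_1/\partial x$ on $\{|u|\le M,\ |p|\le M_1\}$ with exponents $\beta/2,\beta,\beta,\beta$ in $t,x,u,p$, and this follows from the Hölder regularity of $v,v_x,m,r,\theta$ in Assumption~\ref{assump3}.2, the smoothness of $\varphi$ on $\{|u|\le M\}$, and closure of Hölder classes under products and compositions. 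In c) and d) the constants $\mu$ and $M_1$ (and hence everything downstream) can be taken uniform in $\rho\in[-1,1]$, because $\rho$ enters only through the term $\zeta\rho v\varphi(u)p$, linear in $p$ with a coefficient bounded independently of $\rho$.

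\emph{Conclusion for $f$, and treatment of $g$.} Theorem~\ref{thm:quasilinear} then yields $\tilde f^{\rho}\in C^{1+\beta/2,2+\beta}(\mathcal{R}_T)$ with $|\tilde f^{\rho}|\le M$, $|\tilde f^{\rho}_x|\le M_1$, where $M$ and $M_1$ are independent of $\rho$; undoing the time change gives a classical solution $f^{\rho}$ of \eqref{eq:quasi:pde} with $f^{\rho},f^{\rho}_x$ bounded uniformly in $\rho$. Substituting $f^{\rho}$ into \eqref{eq:linear:pde:g} produces (after the same time reversal) a linear parabolic Cauchy problem with uniformly elliptic, Hölder diffusion $\tfrac12 v^2$, bounded Hölder drift $m$, no zeroth-order term, datum $0\in C^{2+\beta}$, and source $-r-\theta\varphi(f^{\rho})+\tfrac12\varphi^2(f^{\rho})$, which is bounded --- \emph{uniformly in $\rho$, since $\varphi$ is globally bounded} --- and Hölder in $(t,x)$ because $f^{\rho}\in C^{1+\beta/2,2+\beta}$ and $r,\theta$ are Hölder. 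The linear theory (equivalently, Theorem~\ref{thm:quasilinear}, whose hypotheses are trivially met in the linear case) provides a classical solution $g^{\rho}\in C^{1+\beta/2,2+\beta}$, and the standard a priori estimate $\|g^{\rho}\|_\infty+\|g^{\rho}_x\|_\infty\le C\big(\|r\|_\infty+\|\theta\|_\infty\|\varphi\|_\infty+\|\varphi\|_\infty^2\big)$ gives bounds on $g^{\rho},g^{\rho}_x$ independent of $\rho$. The delicate step is condition b): one must exploit that, although $e^{-\zeta u}$ is unbounded, the feedback $\varphi(u)$ is globally bounded --- precisely what makes the zero-order sign condition hold and, crucially, forces the a priori bound $M$ and all later constants to be independent of $\rho$; the remaining bookkeeping of the divergence form and the Hölder constants, while routine, must be carried out with care to preserve this uniformity.
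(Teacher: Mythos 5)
Your proposal is correct and follows essentially the same route as the paper: reverse time, rewrite the principal part in divergence form with $m_1=\tfrac12 v^2 p$, verify conditions a)--d) of Theorem~\ref{thm:quasilinear} using the global boundedness of $\varphi$ and the Hölder hypotheses of Assumption~\ref{assump3}, track that $b_1,b_2,\mu,\mu_1,\mu_2$ (hence $M,M_1$) are independent of $\rho\in[-1,1]$, and then handle \eqref{eq:linear:pde:g} as a linear equation with bounded Hölder data built from $f^{\rho}$. The paper treats the linear equation for $g$ only by the remark that it is ``similar,'' whereas you spell it out, but there is no substantive difference in the argument.
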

\begin{proof}
    We focus on the proof of PDE \eqref{eq:quasi:pde}, as PDE \eqref{eq:linear:pde:g} can be proved in a similar manner. We verify that \eqref{eq:quasi:pde} satisfies conditions (a)–(d) in Theorem \ref{thm:quasilinear}. To begin, we reverse the time by denoting $f(T-t,x)$ as  $f(t,x)$ and then rewrite the PDE in divergence form.
\begin{equation}\label{equation rho}
    f_t-\frac{\md}{\md x}\left(\frac{1}{2}v^2(t,x)f_x\right)+a(t,x,f,f_x)=0,
\end{equation}
with $f(0,x)=1$. We also denote that
{\small
$$a(t,x,f,p)=-mp+\zeta\rho pv\frac{ e^{-\zeta f}\theta}{(\zeta+1)e^{-\zeta f}+\gamma}+\frac{\zeta}{2}p^2v^2-r-\theta \frac{ e^{-\zeta f}\theta}{(\zeta+1)e^{-\zeta f}+\gamma}+\left(\frac{ e^{-\zeta f}\theta}{(\zeta+1)e^{-\zeta f}+\gamma}\right)^2+vv_xp,$$
$$A(t,x,f,p)=-mp+\zeta\rho pv\frac{ e^{-\zeta f}\theta}{(\zeta+1)e^{-\zeta f}+\gamma}+\frac{\zeta}{2}p^2v^2-r-\theta \frac{e^{-\zeta f}\theta}{(\zeta+1)e^{-\zeta f}+\gamma}+\left(\frac{ e^{-\zeta f}\theta}{(\zeta+1)e^{-\zeta f}+\gamma}\right)^2.$$}

In our case, $\psi_0=0$ and $m_{11}=\frac{1}{2}v^2$. Let $b_1=b_2=\frac{1}{2}\left(||r||_{\infty}+\frac{1}{4}||\theta||_{\infty}^2\right)$, we have
$$A(t,x,f,0)f=\left(-r-\theta \frac{ e^{-\zeta f}\theta}{(\zeta+1)e^{-\zeta f}+\gamma}+\left(\frac{ e^{-\zeta f}\theta}{(\zeta+1)e^{-\zeta f}+\gamma}\right)^2\right)f \ge -\frac{1}{2}\left(||r||_{\infty}+\frac{1}{4}||\theta||_{\infty}^2\right)(f^2+1),$$
and the first two conditions are verified. For Condition c), we have
$$
\begin{aligned}
\left( \left| m_1 \right| + \left| \frac{\partial m_1}{\partial f} \right| \right) (1 + |p|) +  \left| \frac{\partial m_1}{\partial x_j} \right| + |a|=\frac{1}{2}v^2|p|(1+|p|)+|vv_xp|+|a(t,x,f,p)|
\le C(1+|p|^2),
\end{aligned}
$$
as $\theta,v,v_x,m,\varphi(f)$ are bounded independent of $\rho$ and $-1\le\rho\le 1$. We emphasize that $C$ can be chosen independently of $\rho$. 

To verify the final condition in the theorem,  we employ a minor result concerning Hölder continuous functions. That is, if $h_1$ and $h_2$ are two bounded Hölder continuous functions with $\alpha_1$ and $\alpha_2$ being their Hölder coefficients respectively, then $h_1h_2$ is a Hölder continuous function with Hölder coefficient $\min(\alpha_1,\alpha_2)$. Then Condition d) can be easily verified and thus there exists a solution $f\in C^{1+\beta/2,\,2+\beta}(\mathcal{R}_T)$.  Moreover, as $b_1,b_2,\mu,\mu_1,\mu_2$ are independent of $\rho$, the constants $M$ and $M_1$ in Theorem \ref{thm:quasilinear} are also independent of $\rho$. 
\end{proof}
 By Proposition \ref{prop:exist:pde} and the nonlinear Feynman–Kac formula, there exists a bounded solution $(Y^{\rho},\tilde{Y}^{\rho},Z^{\rho},\tilde{Z}^{\rho})$ to the BSDE \eqref{ApproBSDE} and the processes $Z^{\rho}$ and $\tilde{Z}^{\rho}$ are bounded by a constant that does not depend on $\rho$.

Next, we turn to show that $\pi^{\rho}$ is an approximate time-consistent equilibrium.

\begin{theorem}\label{rhoNE}
    Under Assumptions \ref{assump1}, \ref{assump2}, \ref{assump3},  $\pi^{\rho}\in\Pi$ for any fixed $\rho\in[-1,1]$. Moreover, $\pi^{\rho}$ is an approximate time-consistent equilibrium with the approximate error of  the order $O(\rho^2)$ such that for any $t\in [0,T]$ and any fixed $\eta\in L_{\mathcal{F}_t}^{\infty}(\Omega;\mathbb{R})$\footnote{Recall the notations in Remark \ref{rmk:notation} and we shall also omit the dependence of the perturbed strategy on $\rho$ for notational simplicity.},
    $$
\begin{aligned}
\limsup\limits_{\varepsilon\rightarrow 0} \frac{1}{\varepsilon}\left(J(t,\pi^{t,\varepsilon})-J(t,\pi^{\rho})\right)
\le  C\rho^2.
\end{aligned}
$$
Moreover, $C$ is a positive constant independent of $t$, $\eta$ and $\rho$.
\end{theorem}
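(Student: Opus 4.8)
The plan is to plug the explicit solution $(Y^{\rho},\tilde Y^{\rho},Z^{\rho},\tilde Z^{\rho})$ of \eqref{ApproBSDE} into the expansion of Corollary \ref{corol:verifyequilibrium}, run the same $\varepsilon\to 0$ argument as in Theorem \ref{verification0}, and then exploit the feedback form of $u^{\rho}$ to make all $O(1)$ and $O(\eta)$ terms cancel identically, so that what survives is a strictly negative quadratic in $\eta$ plus an $O(\rho)\cdot\eta$ cross term, which is $\le C\rho^{2}$ by completing the square. For admissibility: since $f^{\rho}$ is bounded (Proposition \ref{prop:exist:pde}), $Y^{\rho}=f^{\rho}(\cdot,X_{\cdot})$, $u^{\rho}=\varphi(Y^{\rho})$ and $\pi^{\rho}=\sigma^{-1}u^{\rho}$ are bounded; arguing exactly as at the start of the proof of Theorem \ref{verification0} (a Gaussian bound on $\int_t^{T}\sigma_s\pi^{\rho}_s\,\md B_s$) yields $\pi^{\rho}\in H_{\textup{BMO}}$ and $\mathbb{E}_t[\exp(-p\zeta(R^{\rho}_T-R^{\rho}_t))]<\infty$ a.s.\ for every $p>1$, so $\pi^{\rho}\in\Pi_p$ for all $p>1$.

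Substituting $\sigma\pi^{\rho}=u^{\rho}$ shows $(Y^{\rho},\tilde Y^{\rho},Z^{\rho},\tilde Z^{\rho})$ solves the simplified system \eqref{eq:simplifiedBSDE} with $\pi=\pi^{\rho}$, and boundedness of $Y^{\rho},\tilde Y^{\rho},Z^{\rho},\tilde Z^{\rho}$ together with the exponential moments of $R^{\rho}$ secures all integrability hypotheses of Corollary \ref{corol:verifyequilibrium}. Dropping the nonpositive second term $-\zeta\mathbb{E}_t[\int_0^1\cdots]$ in that expansion and splitting $e^{-\zeta(R^{\rho}_s-R^{\rho}_t)}=1+(e^{-\zeta(R^{\rho}_s-R^{\rho}_t)}-1)$ in the leading term leaves
$$
J(t,\pi^{t,\varepsilon})-J(t,\pi^{\rho})\le \mathbb{E}_t\!\left[\int_t^{t+\varepsilon}\!\big(e^{-\zeta(R^{\rho}_s-R^{\rho}_t)}-1\big)\,g_s\,\md s\right]+\mathbb{E}_t\!\left[\int_t^{t+\varepsilon}\! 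I_s\,\md s\right],
$$
where $g_s:=e^{-\zeta Y^{\rho}_s}\big(a(s,\pi^{\rho}_s+\eta)-a(s,\pi^{\rho}_s)-\zeta(u^{\rho}_s+\rho Z^{\rho}_s)\sigma_s\eta\big)$ is bounded (for fixed $\eta$) and $I_s$ is the same bracket with $e^{-\zeta(R^{\rho}_s-R^{\rho}_t)}$ replaced by $1$ and the $\tfrac{\gamma}{2}$-term added in. A direct computation using $\mu_s-r_s=\sigma_s\theta_s$ and $\sigma_s\pi^{\rho}_s=u^{\rho}_s$ gives
$$
I_s=-\tfrac12\big(e^{-\zeta Y^{\rho}_s}+\gamma\big)\sigma_s^2\eta^2+\sigma_s\eta\Big[e^{-\zeta Y^{\rho}_s}\big(\theta_s-(\zeta+1)u^{\rho}_s-\zeta\rho Z^{\rho}_s\big)-\gamma\big(u^{\rho}_s+\rho\tilde Z^{\rho}_s\big)\Big].
$$
The feedback relation $u^{\rho}_s=e^{-\zeta Y^{\rho}_s}\theta_s/\big((\zeta+1)e^{-\zeta Y^{\rho}_s}+\gamma\big)$ is precisely $e^{-\zeta Y^{\rho}_s}\theta_s-(\zeta+1)e^{-\zeta Y^{\rho}_s}u^{\rho}_s-\gamma u^{\rho}_s=0$, so the bracket collapses to $-\rho\big(\zeta e^{-\zeta Y^{\rho}_s}Z^{\rho}_s+\gamma\tilde Z^{\rho}_s\big)$. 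Writing $a_s:=e^{-\zeta Y^{\rho}_s}+\gamma\ge\gamma$ and $b_s:=\zeta e^{-\zeta Y^{\rho}_s}Z^{\rho}_s+\gamma\tilde Z^{\rho}_s$, completing the square yields $I_s=-\tfrac12 a_s\big(\sigma_s\eta+\rho b_s/a_s\big)^2+\rho^2 b_s^2/(2a_s)\le \rho^2 b_s^2/(2\gamma)$, and since $Y^{\rho},Z^{\rho},\tilde Z^{\rho}$ are bounded by constants independent of $\rho$ (Proposition \ref{prop:exist:pde} and the nonlinear Feynman--Kac representation), $\sup_s b_s^2\le C'$ with $C'$ independent of $t,\eta,\rho$; hence $I_s\le C\rho^2$ for all $s$ with $C:=C'/(2\gamma)$.

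Consequently $\mathbb{E}_t[\int_t^{t+\varepsilon}I_s\,\md s]\le C\rho^2\varepsilon$, while the first term is dominated by $\|g\|_\infty\,\mathbb{E}_t[\int_t^{t+\varepsilon}|e^{-\zeta(R^{\rho}_s-R^{\rho}_t)}-1|\,\md s]=o(\varepsilon)$ a.s., because $s\mapsto e^{-\zeta(R^{\rho}_s-R^{\rho}_t)}$ lies in $L^p_{\mathbb{F}}(\Omega;C([t,T];\mR))$ — the same estimate used in the proof of Theorem \ref{verification0}. Dividing by $\varepsilon$ and letting $\varepsilon\to0$ gives $\limsup_{\varepsilon\to0}\tfrac1\varepsilon\big(J(t,\pi^{t,\varepsilon})-J(t,\pi^{\rho})\big)\le C\rho^2$, with $C$ independent of $t,\eta,\rho$, as claimed (the case $t=T$ being vacuous).

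The two non-routine points are: (i) the $O(1)$ and $O(\eta)$ contributions to $I_s$ must cancel \emph{exactly} once the feedback form of $u^{\rho}$ is inserted — this is what upgrades the naive $O(\rho)$ estimate to $O(\rho^2)$; and (ii) $Z^{\rho}$ and $\tilde Z^{\rho}$ must be bounded \emph{uniformly in $\rho$}, which is why the PDE route behind Proposition \ref{prop:exist:pde} (and Assumption \ref{assump3}) is needed rather than generic BMO--BSDE existence. It is precisely this uniformity that lets $C$ be chosen independent of $\rho$; the remaining steps (admissibility, integrability checks for Corollary \ref{corol:verifyequilibrium}, vanishing of the $e^{-\zeta(R^{\rho}_s-R^{\rho}_t)}-1$ remainder) are straightforward adaptations of arguments already used in Section \ref{sec:factor}.
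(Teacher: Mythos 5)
Your proposal is correct and follows essentially the same route as the paper: admissibility via boundedness of $\pi^{\rho}$, the expansion from Corollary \ref{corol:verifyequilibrium} with the nonpositive second-order term dropped, the exact cancellation of the $O(1)$ coefficient of $\sigma_s\eta$ via the feedback form of $u^{\rho}$, and completing the square to isolate the $\rho^2\bigl(\zeta e^{-\zeta Y^{\rho}_s}Z^{\rho}_s+\gamma\tilde Z^{\rho}_s\bigr)^2/\bigl(2(e^{-\zeta Y^{\rho}_s}+\gamma)\bigr)$ remainder, controlled uniformly in $\rho$ by Proposition \ref{prop:exist:pde}. The only cosmetic difference is your slightly cruder bound $a_s\ge\gamma$ in the denominator, which changes nothing.
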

\begin{proof}
   As $\pi^{\rho}$ is  bounded, it follows from the same arguments used in the proof of Theorem \ref{verification0} that $\pi^{\rho}\in\Pi$. 
  We next show that $\pi^{\rho}$ is an approximate Nash equilibrium strategy. Fix $t\in[0,T)$ and $\eta\in L_{\mathcal{F}_t}^{\infty}(\Omega;\mathbb{R})$. By Corollary \ref{corol:verifyequilibrium} and the same line of reasoning as in Theorem \ref{verification0}, we obtain, almost surely, that
  {\small
  $$
\begin{aligned}
    J(t,\pi^{t,\varepsilon})-J(t,\pi^{\rho})
    \le& \mathbb{E}_t\left[\int_t^{t+\varepsilon} e^{-\zeta Y^{\rho}_s}\left((\mu_s-r_s)\eta-\frac{1}{2}\sigma_s^2\eta^2-(\zeta+1)\sigma_s\eta u^{\rho}_s-\zeta\rho \sigma_s\eta Z^{\rho}_s\right)\md s\right]\\&-\frac{\gamma}{2}\mathbb{E}_t\left[\sigma_s\eta(2u^{\rho}_s+2\rho\tilde{Z}^{\rho}_s+\sigma_s\eta)\md s\right].
\end{aligned}
  $$}
Using the same arguments for Theorem \ref{verification0} and Theorem \ref{verificationconstraint}, we deduce that
  
{\small
 $$ 
        \lim\limits_{\varepsilon\rightarrow 0}\frac{1}{\varepsilon}\mathbb{E}_t\left[\int_t^{t+\varepsilon}\left|\left(e^{-\zeta\left(Y^{\rho}_s+R^{\rho}_s-R^{\rho}_t\right)}-e^{-\zeta Y^{\rho}_s}\right)\left((\mu_s-r_s)\eta-\frac{1}{2}\sigma_s^2\eta^2-(\zeta+1)\sigma_s^2\pi^{\rho}_s\eta-\zeta\rho\sigma_s\eta Z^{\rho}_s\right)\right|\md s\right]=0\quad a.s..
        $$}
 Consequently, it holds that
$$
\begin{aligned}
&\limsup\limits_{\varepsilon\rightarrow 0} \frac{1}{\varepsilon}\left(J(t,\pi^{t,\varepsilon})-J(t,\pi^{\rho})\right)\\
\le&\limsup\limits_{\varepsilon\rightarrow 0}\frac{1}{\varepsilon}\mathbb{E}_t\left[\int_t^{t+\varepsilon} \bigg\{e^{-\zeta Y^{\rho}_s}\left((\mu_s-r_s)\eta-\frac{1}{2}\sigma_s^2\eta^2-(\zeta+1)\sigma_s\eta u^{\rho}_s-\zeta\rho \sigma_s\eta Z^{\rho}_s\right)\right.\\
&\left.-\frac{\gamma}{2}\sigma_s^2\eta^2-\gamma\sigma_s\eta(u^{\rho}_s+\rho\tilde{Z}^{\rho}_s)\bigg\}\md s\right] \\
=&\limsup\limits_{\varepsilon\rightarrow 0}\frac{1}{\varepsilon}\mathbb{E}_t\left[\int_t^{t+\varepsilon} \left(-\frac{1}{2}\left( e^{-\zeta Y^{\rho}_s}+\gamma\right)(\sigma_s\eta)^2-\sigma_s\eta\left[\zeta e^{-\zeta Y^{\rho}_s}\rho Z^{\rho}_s+\gamma\rho\tilde{Z}^{\rho}_s\right]\right)\md s\right]\\
\le &\limsup\limits_{\varepsilon\rightarrow 0} \frac{1}{\varepsilon}\mathbb{E}_t\left[\int_t^{t+\varepsilon} \frac{(\zeta e^{-\zeta Y^{\rho}_s} Z^{\rho}_s+\gamma\tilde{Z}^{\rho}_s)^2}{2( e^{-\zeta Y^{\rho}_s}+\gamma)}\md s\right]\rho^2\le C\rho^2.
\end{aligned}
$$
Here $C$ is a constant independent of $\rho$ and $t$.
Thus $\pi^{\rho}=\sigma^{-1}\frac{ e^{-\zeta Y^{\rho}}\theta}{(\zeta+1)e^{-\zeta Y^{\rho}}+\gamma}$ is an approximate time-consistent equilibrium with the desired approximation error $C\rho^2$.
\end{proof}
\section{Numerical Examples}\label{sec:numerical}
In this section, we present some numerical studies on the equilibrium strategy in the incomplete factor market model by employing some deep learning algorithms developed in \cite{e_deep_2017} and \cite{Han_2018} to our BSDE systems \eqref{rho0}, \eqref{BSDErho}, and \eqref{ApproBSDE}.
In particular, to fulfill our model assumptions, we consider a truncated time-varying Gaussian mean-return model 
in which the stock price process $S_t$ and the market factor process $X_t$ evolve as
$$
\begin{aligned}
\frac{\md S_t}{S_t}&=(r+\sigma \cdot\theta(X_t))\md t+\sigma\md B_t,\\
\md X_t &= \lambda(-10000\vee(\bar{X}-X_t)\wedge10000)\md t +\nu\md \bar{B}_t.
\end{aligned}
$$
Here, we choose $\theta(X):= X^+ \wedge 10000$ to ensure that Assumptions \ref{assump1}, \ref{assump2}, and \ref{assump3} are satisfied. Similar to \cite{dai_dynamic_2021}, we fix the parameter values: $r=0.017$, $\sigma=0.15$, $X_0=\bar{X}=0.273$, $\lambda=0.27$ and $\nu=0.065$.  Moreover, we set $\zeta=1$, $\gamma=0.1$ and  $T=2$ for the base model.
We discretize the horizon $[0, T]$ into time grid $0 = t_0 < t_1 < \cdots < t_N = T$, where $t_i = i\Delta t$ for $i = 0, 1, \cdots, N$ and $\Delta t = T/N$. Henceforth, for notational simplicity, we use the subscript $n$ to denote the processes evaluated at time $t_n$. The truncated Ornstein–Uhlenbeck process $X$ is approximated by the Euler scheme:
$$
X_{n+1}=X_n+\lambda(-10000\vee(\bar{X}-X_n)\wedge10000)\Delta t+\nu(\bar{B}_{n+1}-\bar{B}_{n})
$$
for $n=1,\cdots,N-1$, with $X_0=0.273$. 
To numerically illustrate our theoreticla results, we trained three models. For the decoupled BSDE systems \eqref{rho0} and \eqref{ApproBSDE}, because the equilibrium strategy $\hat{\pi}$ (for $\rho = 0$) and the approximate Nash equilibrium strategy $\pi^{\rho}$ both only depend on the process $Y$, we consider the discretized form of $Y$ in each case, respectively:
$$
Y_{n+1}=Y_n+\left(\frac{\zeta}{2}|\hat{u}_s|^2+\frac{\zeta}{2}|Z_n|^2-a(t_n,\hat{\pi}_n)\right)\Delta t+Z_n (\bar{B}_{n+1}-\bar{B}_{n}),
$$
and
$$
Y_{n+1}^{\rho}=Y_n^{\rho}+\left(\frac{\zeta}{2}|\rho Z_n^{\rho}+u^{\rho}_n|^2+\frac{\zeta(1-\rho^2)}{2}|Z_n|^2-a(t_n,\pi^{\rho}_n)\right)\Delta t+Z_n (\bar{B}_{n+1}-\bar{B}_{n}).
$$
We then employ a deep learning approximation for the initial conditions $(Y_0, Z_0)$ and the sequence $\{Z_n\}_{n=1}^{N-1}$, using $\{X_n\}_{n=0}^{N-1}$ as inputs. Particularly, we use $N-1$ fully connected feedforward neural networks with bias terms to represent $\{Z_n\}_{n=1}^{N-1}$ and two trainable parameters $(Y_0, Z_0)$ to represent the initial values of $Y$ and $Z$ at time $t_0$. Motivated by the terminal condition $Y_T = 0$, we define the mean squared loss function as
$$
Loss=\mathbb{E}\big[|Y_N|^2\big].
$$
For the coupled BSDE system \eqref{BSDErho}, we consider the following discretized counterparts:
$$
\left\{
\begin{aligned}
    &Y_{n+1}=Y_n+\left(\frac{\zeta}{2}|\rho Z_n+\hat{u}_n|^2+\frac{\zeta(1-\rho^2)}{2}|Z_n|^2-a(t_n,\hat{\pi}_n)\right)\Delta t+Z_n (\bar{B}_{n+1}-\bar{B}_{n}),\\
    &\tilde{Y}_{n+1}=\tilde{Y}_{n}-a(t_n,\hat{\pi}_n)\Delta t+\tilde{Z}_n (\bar{B}_{n+1}-\bar{B}_{n}).
\end{aligned}
\right.
$$
We adopt a similar neural network structure as above, introducing two additional parameters $(\tilde{Y}_0, \tilde{Z}_0)$ to represent the initial values of the second BSDE. Moreover, the loss function is modified to $\mathbb{E}\left[Y_T^2+\tilde{Y}_T^2\right]$.
\begin{figure}[htbp]
  \centering
  \includegraphics[width=0.8\linewidth]{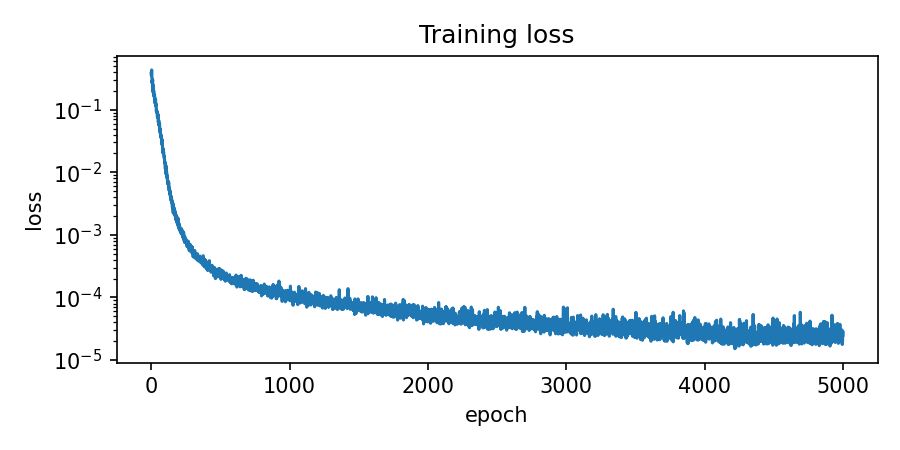}
\caption{The training loss function of an experiment with $\rho=0$.}
  \label{fig:loss_curve}
\end{figure}

Note that each neural network has the same structure: one input layer, two hidden layers, and one output layer. Both the input and output layers are 1 dimensional, while each hidden layer has 11 dimensions. We adopt batch normalization immediately after each matrix multiplication and before applying the ReLU activation function. At each epoch, we use the Adam optimizer to update the parameters with mini-batches of $512$ samples.   
Figure \ref{fig:loss_curve} plots the mean of the loss function for the model with $\rho = 0$ in the experiment. It is observed that the average loss decreases to approximately $3 \times 10^{-5}$ after 5,000 epochs.\footnote{ Given the small variation range of the learned $Y_t$ (see Figure \ref{fig:trajectories}), a sufficiently low loss is required for good relative accuracy. We employ a four-stage linear decay learning rate schedule to balance exploration and convergence:
(1) exploration: $8\times10^{-4} \to 5\times10^{-4}$;
(2) convergence: $5\times10^{-4} \to 2\times10^{-4}$;
(3) fine-tuning: $2\times10^{-4} \to 5\times10^{-5}$;
(4) terminal reinforcement: $5\times10^{-5} \to 1\times10^{-5}$.}

 \begin{figure}[htbp]
  \centering
  \includegraphics[width=\linewidth]{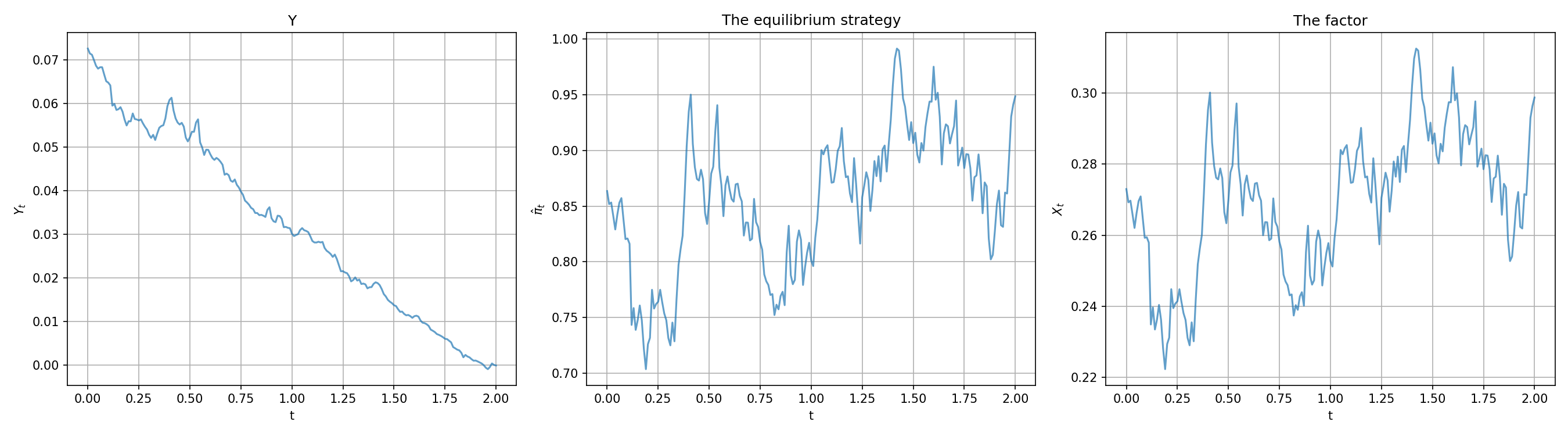}
\caption{A sample trajectory of $Y$, $\hat{\pi}_t$ and $X_t$ with $\rho=0$}
  \label{fig:trajectories}
\end{figure}

Figure \ref{fig:trajectories} presents a sample trajectory of the learned solution $Y_t$, the associated equilibrium strategy $\hat{\pi}_t$ and the factor process $X_t$. It is observed that the sample path of $Y$ exhibits an overall downward trend over time. 
In our framework, the process $Y$ satisfies
 \begin{align*}
 Y_s&=-\frac{1}{\zeta}\log(e^{-\zeta Y_s})=-\frac{1}{\zeta}\log\left(\mathbb{E}_s\left[e^{-\zeta(R_T-R_s)}\right]\right)\\
 &=-\frac{1}{\zeta}\log\left(-\mathbb{E}_s\left[U(R_T-R_s)\right]\right)=-\frac{1}{\zeta}\log\left(\mathbb{E}_s\left[\frac{1}{\zeta}U^{\prime}(R_T-R_s)\right]\right).
 \end{align*}
 Intuitively, $\mathbb{E}_s\left[U(R_T-R_s)\right]$ decreases as time evolves, leading to a decrease in $Y_s$. Therefore, $\mathbb{E}_s\left[U^{\prime}(R_T-R_s)\right]$ increases as $Y_s$ declines. As the investment horizon approaches, the expected marginal utility rises; consequently, the investor becomes more inclined to invest.
 This also explains why the equilibrium strategy $\hat{\pi}$ in the mean–variance problem in \cite{dai_dynamic_2021} remains constant when $\rho=0$, regardless of the volatility of the factor $X_t$ because the expected marginal utility is constant in that case.
    
 \begin{figure}[htbp]
     \centering
     \includegraphics[width=\linewidth]{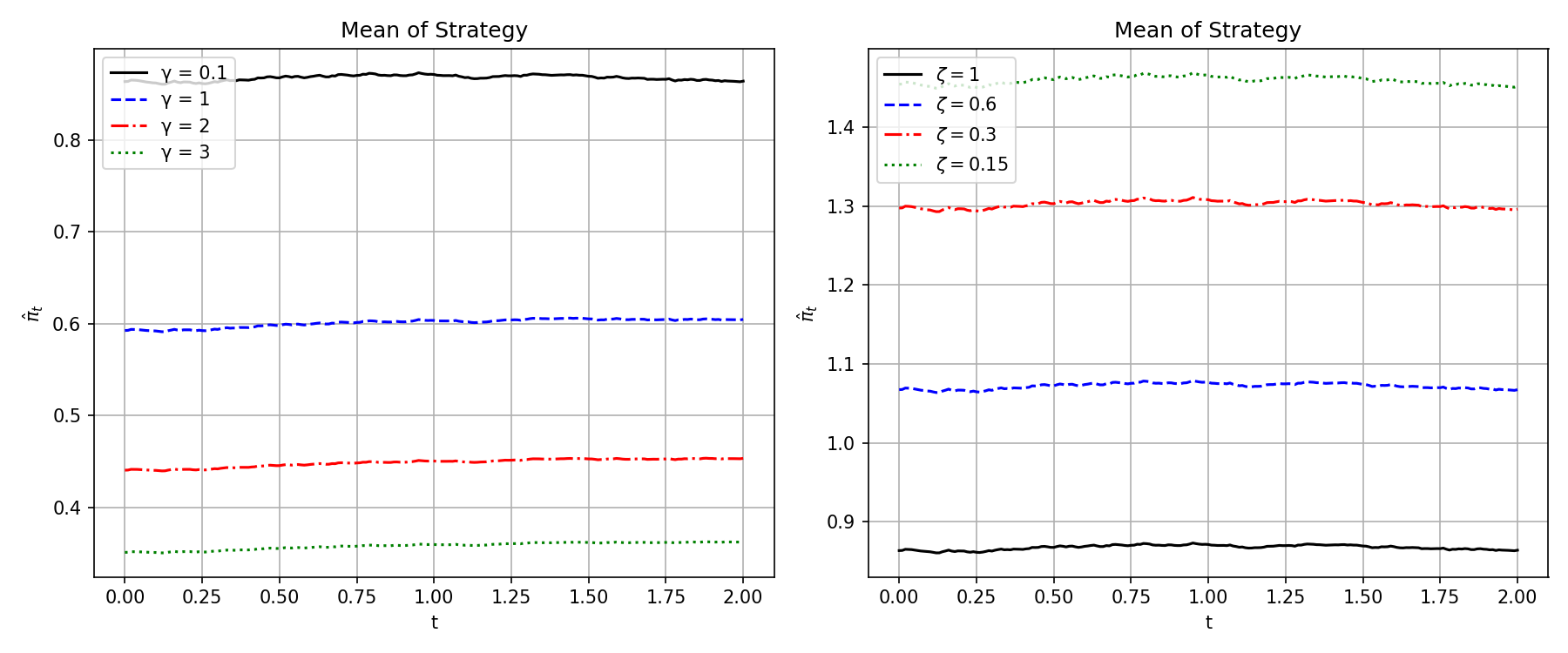}
     \caption{Plots of the mean of the equilibrium strategy $\hat{\pi}$ of $1000$ trajectories under different $\gamma$ (left panel) and different $\zeta$ (right panel) with $\rho=0$.}
     \label{fig:comparegammazeta}
 \end{figure}
 
 Figure \ref{fig:comparegammazeta} plots the mean of the strategy $\hat{\pi}$ using $1000$ trajectories under different values of $\gamma$ and $\zeta$. As illustrated, a higher $\gamma$ corresponds to a smaller investment proportion $\hat{\pi}$, which is intuitively reasonable. This behavior reflects that a larger $\gamma$ indicates the agent places a higher weight on the variance term, leading the investor to favor stability over high returns. Moreover, a higher $\zeta$ also leads to a smaller investment proportion $\hat{\pi}$, which is also intuitively reasonable as $\gamma$ represents the constant risk aversion in the CARA utility function..

 \begin{figure}[htbp]
     \centering
     \includegraphics[width=\linewidth]{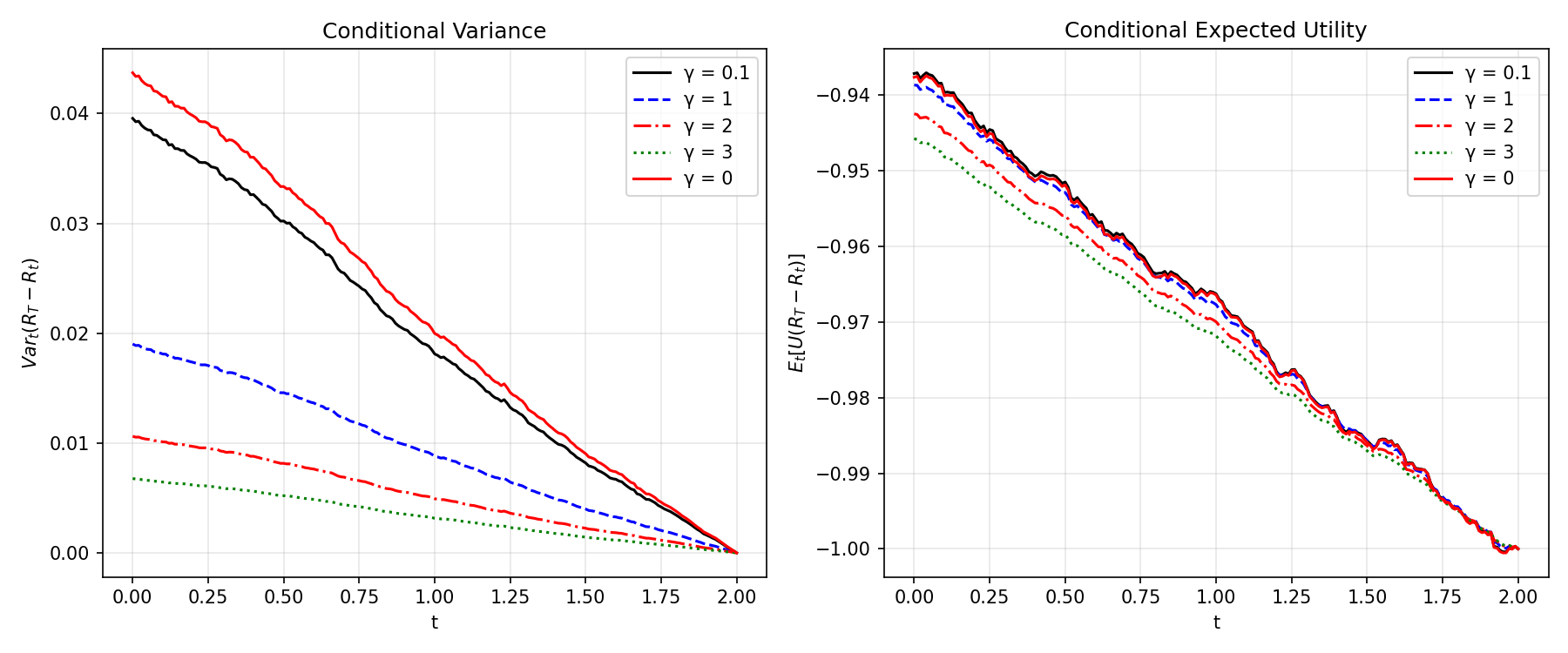}
     \caption{Comparison of conditional variance and conditional expected utility: our model with different $\gamma$ vs. model without variance term}
     \label{fig:compare_variance_utility}
 \end{figure}
Figure \ref{fig:compare_variance_utility} compares the conditional variance and expected utility under the equilibrium strategy $\hat{\pi}$ in our model with $\rho=0$ and the corresponding strategy in the same market setting where the agent ignores the variance term. The optimal strategy in this case can be obtained by setting $\gamma=0$:
$$
\pi = \sigma^{-1}\cdot\frac{\theta(X_t)}{2}.
$$
We then simulate 1,000 trajectories of the log-return process under each strategy and compute the sample mean and variance of $R_T - R_t$ at any time $t = n \Delta t$, $n = 0,1,2,\cdots,N$. It can be observed that, by incorporating the variance term in the $t$-functional, our equilibrium strategy reduces the conditional variance at the cost of a very small decrease in conditional expected utility.

 \begin{figure}[htbp]
     \centering
     \includegraphics[width=\linewidth]{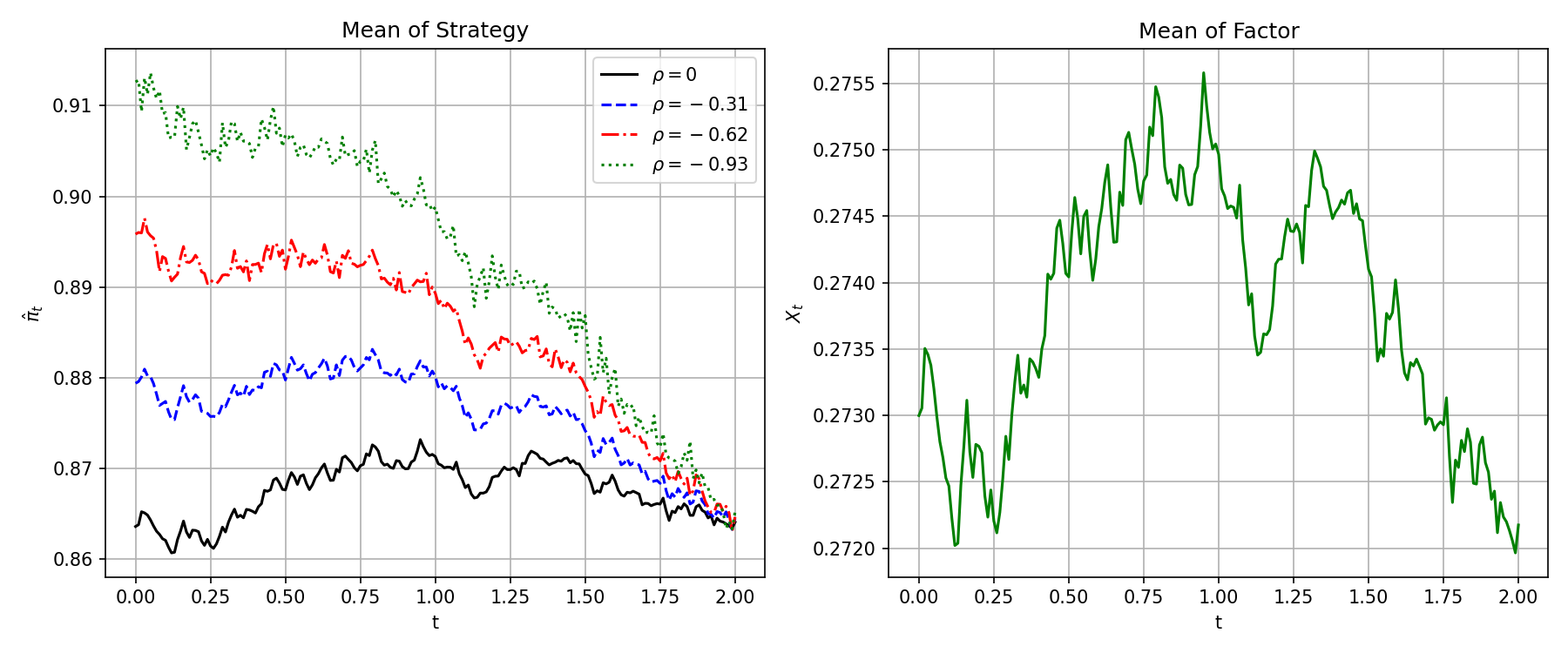}
     \caption{The mean of $\hat{\pi}_t$ and the factor $X_t$ under trading constraint with different $\rho$.}
     \label{fig:Comparerho}
 \end{figure}
 Figure \ref{fig:Comparerho} illustrates the equilibrium strategies with trading constraints $A=[-10000,10000]$ and negative correlations ($\rho=-0.31, -0.62, -0.93$) alongside the equilibrium strategy with $\rho=0$. Although the differences in $Y_t$ are small, there is a clear gap between the strategies. The constrained cases with negative $\rho$ exhibit higher investment levels. Furthermore, the investment decreases as the terminal time approaches,  which is consistent with the results in \cite{dai_dynamic_2021}.

\begin{table}[h!]
\centering
\begin{tabular}{cccc}
\toprule
\textbf{Time} & \textbf{Approximation ($\hat{\pi}^{\rho}$)}& \textbf{Constraint ($\hat{\pi}$)}& \textbf{Relative Error (‰)} \\
\midrule
0.0& -0.959920 & -0.960158 & 0.248‰ \\
0.5& -0.968967 & -0.969083 & 0.121‰ \\
1.0& -0.977834 & -0.977824 & -0.010‰ \\
1.5& -0.990868 & -0.990840 & -0.028‰ \\
2.0& -1.000000 & -1.000000 & -0.000‰ \\
\bottomrule
\end{tabular}
\caption{Comparison of Approximation and Constrained Equilibrium Strategies (six significant digits)}\label{tab:compare appro and constr}
\end{table}

 Table \ref{tab:compare appro and constr} compares the approximate Nash equilibrium strategy $\hat{\pi}^{\rho}$ with the equilibrium strategy $\hat{\pi}$  under a trading constraint at five discrete time point $(t=0.0,\;0.5,\;1.0,\;1.5,\;2.0)$ with $\rho=-0.31$ and $\gamma=1$ fixed\footnote{ In this case, because the loss from the variance term is relatively small, we use $\gamma=1$ instead of $0.1$ to make the numerical results more intuitive. }.  The last column reports the relative errors defined as $\left|
\frac{\hat{\pi}^{\rho}(t) - \hat{\pi}(t)}{\hat{\pi}(t)}
\right|$ in per mille (‰). The results show that the discrepancies between the two strategies are negligible, and the relative errors remain extremely small across all time points.  Furthermore, as $t\rightarrow T$, the relative error exhibits a slightly  decreasing trend. This indicates that the expected payoff generated by the approximate Nash equilibrium strategy closely matches that of the true constrained equilibrium.
\vskip 10pt

\section{Concluding Remarks}\label{sec:conclusion}
This paper investigates a time-inconsistent portfolio optimization problem that integrates expected utility maximization with variance penalization of log returns in an incomplete market. For the general correlated market—where direct analysis becomes intractable—we construct an approximate time-consistent equilibrium. To the best of our knowledge, there are only a few results on time inconsistency in incomplete markets, and we briefly summarize four representative examples.

The work \cite{yan2019open} investigates the mean–variance problem under stochastic volatility. Another example, mentioned in the introduction, is \cite{Gu_2020}, which studies the equilibrium strategy for a utility–deviation risk control problem. A third example is \cite{liang2025dynamic}, where the authors analyze equilibrium strategies under weighted utilities. Yet another related work is \cite{mbodji2025portfolio}, which examines a portfolio selection problem in a stochastic factor model with non-constant time discounting, focusing on the special case $\rho=1$.

These studies adopt different methodologies tailored to their respective model structures, and the techniques are typically not directly transferable across problems, reflecting the substantial difficulty of analyzing time inconsistency in incomplete markets. Our discussion on the approximate equilibrium aims to provide a more flexible and unified framework, which may offer a new perspective for handling a broader class of time inconsistent control problems in general market models and motivate some relevant future studies.
\vskip 10pt
\noindent
\textbf{Acknowledgements}:  Zongxia  Liang  is supported by the National Natural Science Foundation of China under grant no. 12271290. Sheng Wang acknowledges Professor Ka Chun Cheung and the financial supports as a postdoctoral fellow from Department of Statistics and Actuarial Science, School of Computing and Data Science, The University of
Hong Kong. Part of this work was completed in 2024, when Sheng Wang was a visiting student (research assistant) under the supervision of Professor Xiang Yu. He  acknowledges the financial support under the Hong Kong Polytechnic University research grant under no. P0045654. Xiang Yu is supported by the Hong Kong RGC General Research Fund (GRF) under grant no. 15211524, the Hong Kong Polytechnic University research grant under no. P0045654 and the Research Centre for Quantitative Finance at the Hong Kong Polytechnic University under grant no. P0042708.

 {\small
 \setlength{\bibsep}{2pt}
\bibliographystyle{abbrvnat} 
\bibliography{reference}
}

 \appendix
 \section{The Energy Inequality and the John-Nirenberg Inequality}\label{secA}
 For the reader's convenience, we collect below the energy inequality and the John-Nirenberg inequality  (see \cite{kazamaki_continuous_1994} and \cite{zhang_backward_2017}). 
\begin{lemma}\label{lma:bmo:property}
    Assume that $Z\in H_{\textup{BMO}}$.   For each integer $n\ge 1$, it holds that
    $$
    \mathbb{E}\left[\left(\int_0^T Z^2_s\md s\right)^n\right]\le n!\|Z\|_{\textup{BMO}}^{2n}.
    $$
    Moreover, the conditional version of the energy inequality also holds that, for any stopping time $\tau$,
    $$
    \mathbb{E}_{\tau}\left[\left(\int_{\tau}^T Z^2_s\md s\right)^n\right]\le n!\|Z\|_{\textup{BMO}}^{2n} \quad a.s..
    $$
\end{lemma}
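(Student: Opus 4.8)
The final statement to prove is Lemma~\ref{lma:bmo:property}, the energy inequality and its conditional version for a BMO martingale integrand $Z \in H_{\textup{BMO}}$.

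\textbf{Approach.}
The plan is to prove the conditional version directly, since the unconditional statement follows by taking $\tau \equiv 0$ (or by taking expectations). I would argue by induction on $n$. Fix a stopping time $\tau$ and write $I_\tau := \int_\tau^T Z_s^2\,\md s$, which satisfies $\mathbb{E}_{\sigma}[I_\sigma] \le \|Z\|_{\textup{BMO}}^2$ a.s.\ for every stopping time $\sigma \ge \tau$ by the definition of $\|\cdot\|_{\textup{BMO}}$. The case $n=1$ is exactly this definition. For the inductive step, the key identity is the pathwise formula
\[
I_\tau^{\,n} = n \int_\tau^T \Big( \int_s^T Z_r^2\,\md r \Big)^{n-1} Z_s^2 \, \md s,
\]
obtained by differentiating $s \mapsto \big(\int_s^T Z_r^2\,\md r\big)^n$; equivalently one can write $\big(\int_\tau^T Z_r^2 \md r\big)^n = n\int_\tau^T (I_\tau - I_s \cdots)$ — the cleanest route is $\frac{\md}{\md s}\big(\int_s^T Z_r^2\md r\big)^n = -n\big(\int_s^T Z_r^2 \md r\big)^{n-1} Z_s^2$ and integrate from $\tau$ to $T$.

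\textbf{Key steps.}
First I would take conditional expectation $\mathbb{E}_\tau[\,\cdot\,]$ of the identity above and apply the conditional Fubini/tower property to move the conditioning inside:
\[
\mathbb{E}_\tau\!\left[ I_\tau^{\,n} \right]
= n\, \mathbb{E}_\tau\!\left[ \int_\tau^T \mathbb{E}_s\!\Big[\big(I_s\big)^{n-1}\Big] Z_s^2 \, \md s \right],
\]
where $I_s = \int_s^T Z_r^2\,\md r$ and I have used that $Z_s^2$ is $\mathcal{F}_s$-measurable to pull it out of $\mathbb{E}_s[\cdot]$; making this rigorous requires a routine localization/monotone-class argument replacing the deterministic inner time-integral by a limit of Riemann-type sums adapted at the left endpoints, which is standard for the energy inequality. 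By the induction hypothesis applied at the stopping time $s$ (a constant time here, but the statement holds for all stopping times $\ge \tau$), $\mathbb{E}_s[(I_s)^{n-1}] \le (n-1)! \,\|Z\|_{\textup{BMO}}^{2(n-1)}$ a.s. Substituting,
\[
\mathbb{E}_\tau\!\left[ I_\tau^{\,n} \right] \le n \cdot (n-1)!\,\|Z\|_{\textup{BMO}}^{2(n-1)} \; \mathbb{E}_\tau\!\left[ \int_\tau^T Z_s^2\,\md s \right] \le n!\,\|Z\|_{\textup{BMO}}^{2n} \quad \text{a.s.},
\]
using the $n=1$ bound in the last step. This closes the induction. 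The unconditional inequality is the special case $\tau = 0$ combined with $\mathcal{F}_0$ being trivial (or simply taking expectations).

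\textbf{Main obstacle.}
The genuinely delicate point is the interchange of the (conditional) expectation with the time integral and the correct handling of adaptedness when pulling $Z_s^2$ out of $\mathbb{E}_s[\cdot]$ inside the integral — i.e.\ justifying the displayed identity $\mathbb{E}_\tau[\int_\tau^T (I_s)^{n-1} Z_s^2 \md s] = \mathbb{E}_\tau[\int_\tau^T \mathbb{E}_s[(I_s)^{n-1}] Z_s^2\,\md s]$ rigorously. Since this is the classical proof of the energy inequality (Kazamaki \cite{kazamaki_continuous_1994}), and the lemma is explicitly cited from \cite{kazamaki_continuous_1994} and \cite{zhang_backward_2017} as a collected auxiliary result, I would not reproduce the measure-theoretic details but rather cite these references, presenting the induction above as the skeleton of the argument. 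Everything else is elementary.
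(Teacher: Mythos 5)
The paper gives no proof of this lemma at all --- it is stated in Appendix \ref{secA} purely as a collected auxiliary result with citations to \cite{kazamaki_continuous_1994} and \cite{zhang_backward_2017} --- and your induction (base case from the definition of the BMO norm, the pathwise identity $I_\tau^n = n\int_\tau^T I_s^{n-1}Z_s^2\,\md s$, optional projection to insert $\mathbb{E}_s[\cdot]$, then the inductive bound) is exactly the classical argument in those references. Your proposal is correct, and your identification of the optional-projection/interchange step as the only delicate point, handled by citation, is appropriate given that the paper itself treats the whole lemma as citable background.
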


If further $\|Z\|_{\textup{BMO}}< 1$, we have the John-Nirenberg inequality. 
\begin{lemma}\label{John-Nirenberg}
    If $\|Z\|_{\textup{BMO}}< 1$, for every stopping time $\tau$,
    $$
    \mathbb{E}_{\tau}\left[\exp\left(\int_{\tau}^T Z_s^2\md s\right)\right]\le \frac{1}{1-\|Z\|_{\textup{BMO}}^2}\quad a.s..
    $$
\end{lemma}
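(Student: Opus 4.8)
The plan is to expand the exponential in its power series and bound each resulting moment by the conditional energy inequality recorded in Lemma \ref{lma:bmo:property}. Write $I_\tau := \int_\tau^T Z_s^2\,\md s$, a nonnegative random variable. Since $\exp(x)=\sum_{n\ge 0}x^n/n!$ has nonnegative summands for $x\ge 0$, the conditional monotone convergence theorem (equivalently, the conditional form of Tonelli's theorem, applicable precisely because every term $I_\tau^n/n!$ is nonnegative) lets me interchange the conditional expectation with the series:
\[
\mathbb{E}_\tau\left[\exp\left(I_\tau\right)\right]=\mathbb{E}_\tau\left[\sum_{n=0}^\infty \frac{1}{n!}\,I_\tau^n\right]=\sum_{n=0}^\infty \frac{1}{n!}\,\mathbb{E}_\tau\left[I_\tau^n\right],\quad a.s..
\]

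Next I would invoke the conditional energy inequality from Lemma \ref{lma:bmo:property}, which provides $\mathbb{E}_\tau\!\left[I_\tau^n\right]\le n!\,\|Z\|_{\textup{BMO}}^{2n}$ almost surely for every integer $n\ge 1$ (the $n=0$ term being trivially $1$). Substituting this termwise into the series above yields, almost surely,
\[
\mathbb{E}_\tau\left[\exp\left(I_\tau\right)\right]\le \sum_{n=0}^\infty \frac{1}{n!}\cdot n!\,\|Z\|_{\textup{BMO}}^{2n}=\sum_{n=0}^\infty \|Z\|_{\textup{BMO}}^{2n}.
\]
The hypothesis $\|Z\|_{\textup{BMO}}<1$ ensures $\|Z\|_{\textup{BMO}}^2<1$, so this geometric series converges to $(1-\|Z\|_{\textup{BMO}}^2)^{-1}$, which is exactly the asserted bound.

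The argument is essentially immediate once the energy inequality is available, so there is no genuine obstacle to overcome. The only point that I would treat with a little care is the exchange of the conditional expectation and the infinite sum; because all summands are nonnegative, this is a direct application of conditional monotone convergence and requires no integrability hypothesis beyond what $Z\in H_{\textup{BMO}}$ already supplies. In this sense the substantive content of the lemma is entirely packaged inside Lemma \ref{lma:bmo:property}, and the role of the assumption $\|Z\|_{\textup{BMO}}<1$ is solely to guarantee summability of the resulting geometric series and hence the finiteness of the stated upper bound.
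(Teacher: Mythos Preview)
Your argument is correct and is precisely the standard derivation of the John--Nirenberg inequality from the energy inequality. The paper itself does not supply a proof of this lemma; it merely records the statement and cites \cite{kazamaki_continuous_1994} and \cite{zhang_backward_2017}, so there is nothing further to compare.
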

  
\section{Sufficient Conditions for the Existence of Solution to BSDEs}\label{secB}

For the reader's convenience, we present a sufficient condition in \cite{xing_class_2018} for the existence of a solution to the following BSDE system. 
\begin{equation}
\label{eq:AppendixBSDE}
dY_t = -f(t, X_t, Y_t, Z_t)\,dt + Z_t \sigma(t, X_t)\,dW_t, 
\qquad Y_T = g(X_T),
\end{equation}
where $f : [0,T] \times \mathbb{R}^d \times \mathbb{R}^N \times \mathbb{R}^{N \times d} \to \mathbb{R}^N,\; g : \mathbb{R}^d \to \mathbb{R}^N$ are Borel functions.
We first introduce two structural conditions on the generator $f$ that appear in the theorem. 

\begin{definition}[A priori boundedness (AB) condition]\label{def:AB}
We say that a function 
\[
f:[0,T]\times \mathbb{R}^d\times \mathbb{R}\times \mathbb{R}^{N\times d}\to \mathbb{R}^N
\]
satisfies the \emph{condition AB} if there exist a deterministic function 
$l\in L^1([0,T])$ and a set of vectors $a_1,\cdots,a_K\in\mathbb{R}^N$ 
positively spanning $\mathbb{R}^N$ such that
\begin{equation}\label{eq:AB}
    a_k^\top f(t,x,y,z) \le l(t) + \frac12 \lvert a_k^\top z\rvert^2,
    \qquad \text{for all } (t,x,y,z) \text{ and } k=1,\cdots,K.
\end{equation}

We say that $f$ satisfies the \emph{weak  AB condition} (short as $(wAB)$) if 
there exist Borel functions 
\[
L_k : [0,T]\times \mathbb{R}^d\times \mathbb{R}^{N\times d} \to \mathbb{R}^d,
\quad k=1,\cdots,K,
\]
such that $\lvert L_k(t,x,z)\rvert \le C(1+\lvert z\rvert)$ for some constant $C$, and
\begin{equation}\label{eq:wAB}
    a_k^\top f(t,x,y,z) \le l(t) + \frac12 \lvert a_k^\top z\rvert^2
      + a_k^\top z\, L_k(t,x,z),
    \qquad \text{for all } (t,x,y,z) \text{ and } k=1,\cdots,K.
\end{equation}
\end{definition}
\begin{remark}
    A set of nonzero vectors
$a_1, \cdots, a_K$ in $\mathbb{R}^N$ (with $K > N$) is said to positively span
$\mathbb{R}^N$ if, for each $a \in \mathbb{R}^N$, there exist nonnegative constants
$\lambda_1, \cdots, \lambda_K$ such that
\[
    \lambda_1 a_1 + \cdots + \lambda_K a_K = a.
\]
\end{remark}
\begin{definition}[The Bensoussan--Frehse (BF) condition]\label{def:BF}
We say that a continuous function 
\[
f : [0,T] \times \mathbb{R}^d \times \mathbb{R}^N \times \mathbb{R}^{N\times d} \to \mathbb{R}^N
\]
satisfies the \emph{BF} condition if it admits a decomposition of the form
\begin{equation}\label{eq:BF-decomposition}
f(t,x,y,z) = \mathrm{diag}(z\, l(t,x,y,z)) + q(t,x,y,z) + s(t,x,y,z) + k(t,x),
\end{equation}
such that the functions 
\[
l : [0,T] \times \mathbb{R}^d \times \mathbb{R}^N \times \mathbb{R}^{N\times d} \to \mathbb{R}^{d\times N}
\quad\text{and}\quad
q,s,k : [0,T] \times \mathbb{R}^d \times \mathbb{R}^N \times \mathbb{R}^{N\times d} \to \mathbb{R}^N
\]
fulfill the following conditions: there exist $b_0 \in \mathbb{R}^d$ and two sequences $\{C_n\}$ and $\{q_n\}$ of positive constants with $q_n > 1 + d/2$, and a sequence $\{\kappa_n\}$ of functions 
\[
\kappa_n : [0,\infty) \to [0,\infty), \qquad \lim_{w\to\infty}\frac{\kappa_n(w)}{w^2} = 0,
\]
such that for each $n\in\mathbb{N}$ and any $(t,x,y,z) \in [0,T] \times B_n(b_0) \times \mathbb{R}^N \times \mathbb{R}^{N\times d}$, we have
\begin{align*}
|l(t,x,y,z)| &\le C_n(1 + |z|), && \textnormal{(quadratic-linear)} \\
|q^i(t,x,y,z)| &\le C_n\left( 1 + \sum_{j=1}^i |z^j|^2 \right),
&& i=1,\cdots,N, \qquad \textnormal{(quadratic-triangular)} \\
|s(t,x,y,z)| &\le \kappa_n(|z|), && \textnormal{(subquadratic)} \\
k &\in L^{q_n}([0,T]\times B_n), && \textnormal{(z-independent)}.
\end{align*}
In this case, we write $f \in \mathrm{BF}(\{C_n\},\{\kappa_n\},\{q_n\})$.
\end{definition}

\begin{theorem}[Existence under BF + AB conditions]\label{thm:existence_BF_AB}
Suppose that $f$ satisfies \textnormal{BF} and \textnormal{AB} conditions, 
and that $g \in C^{\{\alpha_n\}}_{\mathrm{loc}, b_0}$ for some $b_0$ and it satisfies $\lim_{|x| \to \infty} \frac{|g(x)|}{|x|^2} = 0$. Then the system \eqref{eq:AppendixBSDE} admits a locally Hölderian solution $(v,w)$, 
that is, $v \in C^{\{\alpha_n'\}}_{\mathrm{loc}, b_0}$ for some sequence 
$\{\alpha_n'\}$ in $(0,1]$. When $g$ is bounded, the  \textnormal{AB} condition 
can be replaced by \textnormal{wAB} condition and $(v,w)$ is a bounded 
\textnormal{BMO}-solution.
\end{theorem}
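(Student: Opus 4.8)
The plan is to exploit the Markovian structure and reduce the problem to the existence of a locally Hölder-continuous solution of the associated quasilinear parabolic PDE system, for which the AB and BF conditions are precisely the hypotheses that make a De Giorgi--Nash--Moser / Bensoussan--Frehse argument work in the absence of a comparison principle. Writing $Y_t=v(t,X_t)$ and $Z_t=(v_x\sigma)(t,X_t)$, equation \eqref{eq:AppendixBSDE} corresponds componentwise to
\[
v^i_t+\mathcal{L}v^i+f^i(t,x,v,v_x\sigma)=0,\qquad v(T,\cdot)=g,
\]
where $\mathcal{L}$ is the generator of the forward diffusion $X$, so that the target solution $(v,w)$ is recovered as $w=v_x\sigma$. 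The argument then rests on two a priori estimates extracted from the hypotheses and an interior regularity estimate used to run a compactness scheme.

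The first estimate is the (local) $L^\infty$ bound on $v$, which is exactly what the AB condition delivers. For each spanning vector $a_k$, set $u_k:=a_k^\top Y$; applying It\^o's formula to $e^{u_k}$ and invoking $a_k^\top f\le l(t)+\tfrac12|a_k^\top z|^2$ shows that the drift of $e^{u_k+\int_0^{\cdot}l\,ds}$ is nonnegative, so this process is a submartingale, yielding the one-sided bound $u_{k,t}\le\int_t^T l\,ds+\mathbb{E}_t[a_k^\top g]$. Because $a_1,\dots,a_K$ positively span $\mathbb{R}^N$, every direction $a$ (and $-a$) is a nonnegative combination of the $a_k$, so these one-sided bounds assemble into a two-sided control of $a^\top Y$ for all $a$, i.e.\ a bound on $\|Y\|_\infty$ on each ball (the subquadratic growth $\lim_{|x|\to\infty}|g(x)|/|x|^2=0$ is what localizes this). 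Once $Y$ is bounded, a second It\^o-on-exponential computation (now the quadratic term is controlled on the bounded range of $Y$) produces the BMO estimate $\|Z\|_{\bmo}\le C$. In the bounded-$g$ case the $L^\infty$ bound follows directly from the terminal data together with the weaker $wAB$ inequality, which is why $wAB$ suffices there and why the limiting $(v,w)$ is a bounded BMO solution.

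The technically hardest step is upgrading these integrability bounds to genuine \emph{local Hölder continuity} of $v$, and this is exactly the purpose of the BF decomposition \eqref{eq:BF-decomposition}. On a ball $B_n(b_0)$, the four pieces $\operatorname{diag}(z\,l)$, the quadratic-triangular $q$, the subquadratic $s$, and the $z$-independent $k\in L^{q_n}$ fall into precisely the growth classes under which the Bensoussan--Frehse estimates for parabolic systems apply: the triangular structure of $q$ lets one bootstrap the components one index at a time, the subquadratic $s$ is a negligible perturbation, and the requirement $q_n>1+d/2$ places the forcing term $k$ in the subcritical Lebesgue range needed for interior Hölder continuity. This yields interior estimates for $v$ on each $B_n$, with exponents $\{\alpha_n'\}$, that are uniform along any approximating family whose constants depend only on $(\{C_n\},\{\kappa_n\},\{q_n\},l)$ and the $L^\infty$/BMO bounds.

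The final step is approximation and passage to the limit. I would regularize $f$ by truncation and mollification to obtain generators $f^{(m)}$ of quadratic growth that are, say, Lipschitz, solve each regularized system (via the bounded framework or a fixed point), and observe that the bounds of the previous two steps hold uniformly in $m$ since all constants are stable under this regularization. The uniform interior Hölder estimates then give, via Arzel\`a--Ascoli on each compact set and a diagonal extraction, a locally uniformly convergent subsequence $v^{(m)}\to v$ (together with local convergence of $v^{(m)}_x$), while the uniform BMO bounds give the requisite convergence of the $Z$-components. The main obstacle is the stability of the quadratic nonlinearity under this limit: one must pass to the limit in $f^{(m)}(\cdot,v^{(m)},w^{(m)})$ despite its quadratic growth in $z$. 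This is handled by combining the strong local convergence of $v^{(m)}$ and $v^{(m)}_x$ (furnished by the interior gradient/Hölder estimates) with the uniform BMO control, which upgrades weak gradient convergence to convergence sufficient to identify the limit of the quadratic term. Verifying that the resulting $(v,w)$ solves \eqref{eq:AppendixBSDE} and recording the regularity class $C^{\{\alpha_n'\}}_{\mathrm{loc},b_0}$ completes the proof.
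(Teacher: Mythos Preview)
The paper does not prove this theorem: it is quoted from \cite{xing_class_2018} in Appendix~\ref{secB} purely for the reader's convenience, with no proof or proof sketch given. There is therefore nothing in the paper to compare your proposal against.

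That said, your sketch is a faithful outline of the Xing--\v{Z}itkovi\'c argument: the exponential-transform submartingale trick under the AB condition for the a priori $L^\infty$ bound, the induced BMO estimate on $Z$, the Bensoussan--Frehse interior H\"older estimates enabled by the quadratic-triangular decomposition, and the regularization/compactness passage to the limit. Two small points where your write-up is loose: the submartingale bound on $u_k=a_k^\top Y$ gives $u_{k,t}\le \log\mathbb{E}_t\bigl[e^{a_k^\top g+\int_t^T l}\bigr]$ rather than the linear expression you wrote, and in the unbounded-$g$ case the localization that turns the subquadratic growth of $g$ into \emph{local} $L^\infty$ control of $v$ is more delicate than a one-line remark (in \cite{xing_class_2018} this is handled via a Lyapunov-function argument on balls). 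Neither affects the overall structure of the proof you propose.
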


 \section{The Challenge of our BSDE System when $\rho\neq0$}\label{secC}
  For the general case $\rho\neq 0$, we encounter a system of coupled two-dimensional quadratic BSDEs. Existing results for the existence of solution to the multidimensional BSDEs bifurcate into two paradigms: fixed-point arguments via contraction mappings (yielding the unique solution) (see \cite{FAN2023105}, \cite{Luo_2017} and \cite{HU20161066}) and construction of uniformly convergent approximants through coefficient regularization (see  \cite{xing_class_2018}, \cite{Jackson_2022} and \cite{JACKSON2023}). In this section, we explain that these two methods are not applicable in our setting. 
   
    For the coupled BSDE system in \eqref{simplified BSDE}, the drift term $f=(f_1,f_2)$ can be computed as
    $$
    \begin{aligned}
    f_1=-\left(\frac{\zeta}{2}|\rho Z+\hat{u}|^2+\frac{\zeta(1-\rho^2)}{2}|Z|^2-a(t,\hat{\pi})\right)
        = h_1+g_1
    \end{aligned}
    $$ 
   where $h_1$ is the quadratic term in $f_1$  w.r.t $z$ and $\tilde{z}$ 
$$
\begin{aligned}
h_1=-\frac{\zeta}{2}z^2+\rho\zeta z\cdot \frac{\zeta e^{-\zeta y }\rho z+\gamma\rho \tilde{z}}{(\zeta+1)e^{-\zeta y}+\gamma}-\frac{\zeta+1}{2}\left(\frac{\zeta e^{-\zeta y }\rho z+\gamma\rho \tilde{z}}{(\zeta+1)e^{-\zeta y}+\gamma}\right)^2
\end{aligned}
$$
and $g_1$ collects the linear and constant terms
$$
\begin{aligned}
g_1=&-\zeta\rho z \frac{e^{-\zeta y }\theta(t,x)}{(\zeta+1)e^{-\zeta y}+\gamma}+r(t,x)+\theta(t,x)\cdot\frac{e^{-\zeta y }\theta(t,x)-\zeta e^{-\zeta y }\rho z-\gamma\rho \tilde{z}}{(\zeta+1)e^{-\zeta y}+\gamma}\\
&-\frac{\zeta+1}{2}\left(\frac{e^{-\zeta y }\theta(t,x)}{(\zeta+1)e^{-\zeta y}+\gamma}\right)^2+(\zeta+1)\frac{e^{-\zeta y }\theta(t,x)}{(\zeta+1)e^{-\zeta y}+\gamma}\cdot \frac{\zeta e^{-\zeta y }\rho z+\gamma\rho \tilde{z}}{(\zeta+1)e^{-\zeta y}+\gamma}.
\end{aligned}
$$
Similarly, $
f_2=a(t,\hat{\pi})=h_2+g_2$,
where $h_2$ is the quadratic term in $f_2$ w.r.t. $z$ and $\tilde{z}$ 
$$
\begin{aligned}
   h_2=-\frac{1}{2}\left(\frac{\zeta e^{-\zeta y }\rho z+\gamma\rho \tilde{z}}{(\zeta+1)e^{-\zeta y}+\gamma}\right)^2.
\end{aligned}
$$
and the linear and constant terms are given by
$$
\begin{aligned}
g_2=&r(t,x)+\theta(t,x)\cdot\frac{e^{-\zeta y }\theta(t,x)-\zeta e^{-\zeta y }\rho z-\gamma\rho \tilde{z}}{(\zeta+1)e^{-\zeta y}+\gamma}\\
&-\frac{1}{2}\left(\frac{e^{-\zeta y }\theta(t,x)}{(\zeta+1)e^{-\zeta y}+\gamma}\right)^2+\frac{e^{-\zeta y }\theta(t,x)}{(\zeta+1)e^{-\zeta y}+\gamma}\cdot \frac{\zeta e^{-\zeta y }\rho z+\gamma\rho \tilde{z}}{(\zeta+1)e^{-\zeta y}+\gamma}.
\end{aligned}
$$

The first approach typically requires a continuity condition in $y$ ensuring that
$$
|f_i(t,x,y_1,z,\tilde{z})-f_i(t,x,y_2,z,\tilde{z})|\le C\phi\left(y_1\vee y_2\right)\left(1+||(z,\tilde{z})||\right)|y_1-y_2|,\quad i=1,2,
$$
so that a contraction mapping applies.
However, this condition is violated in our setting because
 $y$ appears inside the quadratic term of $z,\tilde{z}$.

The second approach relies on the AB condition or the wAB condition (see Appendix \ref{secB}) to guarantee the convergence of a subsequence of approximating solutions to the limiting BSDE system.
Specifically, assume that the wAB condition holds. Then there exist a deterministic function $l\in L^1[0,T]$, vectors $\{a_1,\cdots, a_K\}$ positively spanning $\mathbb{R}^2$, and functions $L_k : [0,T]\times \mathbb{R}\times \mathbb{R}^2 \to \mathbb{R},
\; k=1,\cdots,K$, satisfying $L_k\le C(1+|z|)$ such that
$$
a_k^{\top}f(t,x,y,z)\le l(t)+\frac{1}{2}|a_k^{\top}z|^2+a_k^\top z\, L_k(t,x,z), \quad k=1,\cdots ,K.
$$
Since $\{a_1,\cdots, a_K\}$ positively span $\mathbb{R}^2$, there  exist  $a_i,\;a_j\in \{a_1,\cdots, a_K\}$ such that $a_i\cdot (-1,0)>0$ and $a_j\cdot (0,-1)>0$. Without loss of generality, take $a_i=a_1=(-a,b)$ and $a_j=a_2=(c,-d)$ with $a>0,\;d>0$. Then, for any $(t,x,y,z,\tilde{z}) $,
\begin{align}
-af_1+bf_2&\le l(t)+\frac{1}{2}|-az+b\tilde{z}|^2+(-az+b\tilde{z}) L_2(t,x,z,\tilde{z}),\label{eq:challenge1}\\
cf_1-df_2&\le l(t)+\frac{1}{2}|cz-d\tilde{z}|^2+(cz-d\tilde{z}) L_2(t,x,z,\tilde{z}),\label{eq:challenge2}
\end{align}
where $L_i(t,x,z,\tilde{z})\le C(1+|z|+|\tilde{z}|),\;i=1,2$ for a constant $C$.  

If $b=0 \;(c=0)$,  the inequality \eqref{eq:challenge1}(\eqref{eq:challenge2}) fails to hold as $\tilde{z}\;(z)\rightarrow \infty$ and $z\;(\tilde{z})=0$ because $\rho\neq0$\footnote{Only when $\rho=0$ can vectors of the form $(-a,0)$ and $(0,-d)$  belong to $\{a_1,\cdots,a_k\}$; see Section \ref{sec:constraint}.}. 
We now assume that $b\neq 0 $ and $c\neq 0$.
Then \eqref{eq:challenge1} and \eqref{eq:challenge2} imply
\begin{equation}\label{eq:challenge3}
    -af_1+bf_2\le l(t),\quad \textup{for any}\; (t,x,y,z,\tilde{z}) \;\textup{such that}\; az=b\tilde{z},
\end{equation}
and
\begin{equation}\label{eq:challenge4}
    cf_1-df_2\le l(t) ,\quad \textup{for any}\; (t,x,y,z,\tilde{z}) \;\textup{such that}\; cz=d\tilde{z}.
\end{equation}

By substituting $z = \frac{b}{a}\tilde{z}$, the left-hand side of the inequality \eqref{eq:challenge3} becomes a quadratic polynomial in $\tilde{z}$, while the right-hand side is a deterministic function independent of $z$ and $\tilde{z}$. A necessary condition for the inequality to hold is that the quadratic coefficient of the polynomial is non-positive, namely
$$
-ah_1+bh_2\le 0 ,\quad \textup{for any}\; (t,x,y,z,\tilde{z}) \;\textup{such that}\; az=b\tilde{z},
$$
 Moreover, letting $y\rightarrow\infty$ and $y\rightarrow-\infty$ yields
$$
\frac{\zeta e^{-\zeta y }\rho z+\gamma\rho \tilde{z}}{(\zeta+1)e^{-\zeta y}+\gamma}\longrightarrow \rho\tilde{z},\quad \frac{\zeta e^{-\zeta y }\rho z+\gamma\rho \tilde{z}}{(\zeta+1)e^{-\zeta y}+\gamma}\longrightarrow \frac{\zeta\rho z}{\zeta+1}.
$$
Without loss of generality, set $a=1$. Substituting it into $h_1$ and $h_2$ with $z=b\tilde{z}$ gives 
 \begin{align}\label{C1}
\frac{\zeta}{2}b^2-\left(\rho^2\zeta+\frac{\rho^2}{2}\right)b+\frac{\zeta+1}{2}\rho^2\le 0 , \quad \left(\frac{\zeta}{2}-\frac{\zeta^2\rho^2}{2(\zeta+1)}\right)b^2-\frac{\rho^2\zeta^2}{2(\zeta+1)^2}b^3\le 0.
 \end{align}
Likewise, from \eqref{eq:challenge4} we obtain
$$
  ch_1-dh_2\le 0 ,\quad \textup{for any}\; (t,x,y,z,\tilde{z}) \;\textup{such that}\; cz=d\tilde{z}.
$$
Setting $d=1$ and substituting $\tilde{z}=cz$, then  letting $y\to \pm\infty$, we get
\begin{align}\label{C2}
\frac{1}{2}\rho^2 c^2-c\left(\frac{\zeta}{2}-\rho^2\zeta c+\frac{\zeta+1}{2}\rho^2 c^2\right)\le 0,\quad \frac{\zeta^2\rho^2}{2(\zeta+1)^2}-c\left(\frac{\zeta}{2}-\frac{\zeta^2\rho^2}{2(\zeta+1)}\right)\le 0.
\end{align}
From \eqref{C1}, we deduce
 \begin{align*}
 \left\{
 \begin{aligned}
&\rho^4(\zeta+\frac{1}{2})^2-\rho^2\zeta(\zeta+1)\ge 0,\\
&\frac{\rho^2\zeta+\frac{\rho^2}{2}-\sqrt{\rho^4(\zeta+\frac{1}{2})^2-\rho^2\zeta(\zeta+1)}}{\zeta}\le b\le \frac{\rho^2\zeta+\frac{\rho^2}{2}+\sqrt{\rho^4(\zeta+\frac{1}{2})^2-\rho^2\zeta(\zeta+1)}}{\zeta}\le \frac{\zeta+1}{\zeta},\\
&\frac{\zeta+1}{\zeta}\le\frac{(\zeta+1)^2}{\rho^2\zeta}-(\zeta+1)\le b,
\end{aligned}
\right.
 \end{align*}
which result in $\rho=1$ and $b=\frac{\zeta+1}{\zeta}>0$. Substituting $\rho=1$ into \eqref{C2} then yields $
 c\ge\frac{\zeta}{\zeta+1}$.

We have therefore shown that if $\rho=1$, then any vector $a_i=(m,n)\in \{a_1,\cdots,a_k\}$ selected in the wAB condition must satisfy one of the following:
\begin{enumerate}
    \item $m\ge0$, $n\ge 0$.
    \item $m<0$, $n=-\frac{\zeta+1}{\zeta}m$.
    \item $n<0$, $m\ge - \frac{\zeta}{\zeta+1}n$.
\end{enumerate}
 Each of the three cases implies that $(m,n)\cdot (-1,-\frac{\zeta}{\zeta+1})\le 0$.  Moreover, if $\rho\neq1$, then for any $a_i=(m,n)\in \{a_1,\cdots,a_k\}$ selected in wAB condition, we have  $m\ge 0$, which further implies $(m,n)\cdot (-1,0)\le 0$.
 As a consequence, $\{a_1,\cdots,a_k\}$ cannot positively span $\mathbb{R}^2$. Hence, neither the AB condition nor the wAB condition can hold in our setting.

 \section{Technical proofs in section \ref{sec:formulation}, section \ref{section3} and section \ref{sec:factor}}\label{sec:D}
  In this appendix, we collect proofs of Lemma \ref{lma:2.4}, Lemma \ref{diffpro}, Corollary \ref{coro:admi:piepsilon}, Lemma \ref{lma:3.4}, Corollary \ref{corol:verifyequilibrium}, Lemma \ref{lma:3.5}, Lemma \ref{lma:3.6} and Lemma \ref{lma:projection}.
 
  \subsection{Proof of Lemma \ref{lma:2.4}}\label{pflma:2.4}
  As $\pi\in\Pi$, there exists $p>1$ such that $\pi\in\Pi_p$. Using Hölder's inequality,  we have
    $$
    \mathbb{E}_t\left[\exp\left(-\zeta(R^{\pi}_T-R^{\pi}_t)\right)\right]\le\left(\mathbb{E}_t\left[\exp\left(-p\zeta(R^{\pi}_T-R^{\pi}_t)\right)\right]\right)^{\frac{1}{p}}< \infty \quad a.s..
    $$    
  By the SDE \eqref{eq:R} of $R^{\pi}$, it holds that
    $$
    R^{\pi}_T-R^{\pi}_t= \int_t^T \left[r_s+\pi_s(\mu_s-r_s)-\frac{1}{2}|\sigma_s \pi_s|^2 \right]\md s+\pi_s\sigma_s\md B_s.
     $$
    Using  standard estimate for SDE (see Theorem 3.4.3 in \cite{zhang_backward_2017}), for any $A\in\mathcal{F}_t$ we obtain
    $$
 \begin{aligned}
    \mathbb{E}\left[\sup\limits_{s\in [t,T]}\left|R^{\pi}_s-R^{\pi}_t\right|^2\mathds{1}_{A}\right]\leq& C\mathbb{E}\left[\left(\int_t^T \left[r_s+\pi_s(\mu_s-r_s)-\frac{1}{2}|\sigma_s \pi_s|^2 \right]\mathds{1}_{A}\md s\right)^2+\int_t^T|\pi_s\sigma_s|^2\mathds{1}_{A}\md s\right]\\
    \le& C\mathbb{E}\left[\left(\left(\int_t^T|\pi_s|^2\md s\right)^2+1\right)\mathds{1}_{A}\right].
    \end{aligned}
 $$
  Because $\pi\in H_{\bmo}$, by the definition of conditional expectation and Lemma \ref{lma:bmo:property}, we have
 \begin{align*}
 &\left(\mathbb{E}_t\left[\left|R^{\pi}_T-R^{\pi}_t\right|\right]\right)^2\le\mathbb{E}_t\left[\left|R^{\pi}_T-R^{\pi}_t\right|^2\right]\\ \le&\mathbb{E}_t\left[\sup\limits_{s\in [t,T]}\left|R^{\pi}_s-R^{\pi}_t\right|^2\right] \le C\left(\mathbb{E}_t\left[\left(\int_t^T|\pi_s|^2\md s\right)^2\right]+1\right) \\
 \le & C\left(\|\pi\|_{\textup{BMO}}^4+ 1\right)<\infty.
 \end{align*}
 Hence $\textup{Var}_t[R^{\pi}_T-R^{\pi}_t]$ is  bounded, and consequently $J(t,\pi)$ is finite  for any $t\in[0,T)$. The argument also verifies that $R^{\pi}\in L^2_{\mathbb{F}}(\Omega; C([0, T]; \mathbb{R}))$.

  \subsection{Proof of Lemma \ref{diffpro}}\label{pflm:diffpro}
  
      The first assertion is standard; see Theorem 3.4.3 in \cite{zhang_backward_2017}.

For the second assertion, the argument is similar to the proof of Lemma \ref{lma:2.4}. For any set $A \in \mathcal{F}_t$, we have
 $$
 \begin{aligned}
\mathbb{E}\left[\sup\limits_{s\in[t,T]}|\xi_s^{t,\varepsilon,\pi}|^{2k}\mathds{1}_{A}\right]\le& C\mathbb{E}\left[\left(\int_t^{t+\varepsilon}\left|(\mu_s-r_s)\eta-\frac{1}{2}\sigma_s^2\eta^2-\sigma_s^2\pi_s\eta\right|\mathds{1}_{A}\md s\right)^{2k}+\left(\int_t^{t+\varepsilon}|\sigma_s\eta|^2 \mathds{1}_A\md s\right)^k \right]\\
\le&C\mathbb{E}\left[\left\{\left(\varepsilon+\varepsilon\eta+\varepsilon^{\frac{1}{2}}\left(\int_t^{t+\varepsilon}|\pi|^2\md s\right)^{\frac{1}{2}}\right)^{2k}|\eta|^{2k}+\varepsilon^k|\eta|^{2k}\right\}\mathds{1}_A\right]\\
=& C\mathbb{E}\left[\left((\varepsilon|\eta|^2)^k\left[\left(\int_t^{t+\varepsilon}|\pi_s|^2\md s\right)^{k}+1\right]+o(\varepsilon^k)\right)\mathds{1}_A\right].
\end{aligned}
$$
 Then it follows that
$$
\mathbb{E}_t\left[\sup\limits_{s\in[t,T]}|\xi_s^{t,\varepsilon,\pi}|^{2k}\right]\le C(\varepsilon|\eta|^2)^k\left(1+\mathbb{E}_t\left[\left(\int_t^T|\pi_s|^2\md s\right)^{k}\right]\right)\le C(k,\|\pi\|_{\textup{BMO}})(\varepsilon|\eta|^2)^k,
$$
where we have used the fact that $\pi\in H_{\textup{BMO}}$ and the conditional version of the energy inequality (see Lemma \ref{lma:bmo:property}). 

For the third assertion, note that $\mathbb{P}-a.s.$
$$
\begin{aligned}
&\mathbb{E}_t\left[\exp(c\xi^{t,\varepsilon,\pi}_T)\right]\\
=& \mathbb{E}_t\left[\exp\left(c\int_t^{t+\varepsilon}\left((\mu_s-r_s)\eta-\frac{1}{2}\sigma_s^2\eta^2-\sigma_s^2\pi_s\eta\right)\md s+c\int_t^{t+\varepsilon}\sigma_s\eta\md B_s\right)\right] \\
\le& \left(\mathbb{E}_t\left[\exp\left(2c\int_t^{t+\varepsilon}\left((\mu_s-r_s)\eta-\frac{1}{2}|\sigma_s\eta|^2-\sigma_s^2\pi_s\eta\right)\md s+2c^2\int_t^{t+\varepsilon}|\sigma_s\eta|^2\md s\right)\right]\right)^{\frac{1}{2}}\\
&\cdot\left(\mathbb{E}_t\left[\exp\left(2c\int_t^{t+\varepsilon}\sigma_s\eta\md B_s-2c^2\int_t^{t+\varepsilon}|\sigma_s\eta|^2\md s\right)\right]\right)^{\frac{1}{2}}\\
\le& C(|c|,\|\eta\|_{\infty}) \left(\mathbb{E}_t\left[\exp\left(2|c|\int_t^T\|\sigma\|^2_{\infty}\|\eta\|_{\infty}|\pi_s|\md s\right)\right]\right)^{\frac{1}{2}},
\end{aligned}
$$
where the first inequality follows from the Cauchy–Schwarz inequality.
For the second inequality, since $\sigma$ and $\eta$ are bounded, Novikov’s condition is satisfied, and hence
\begin{align*}  \mathbb{E}_t\left[\exp\left(2c\int_t^{t+\varepsilon}\sigma_s\eta\md B_s-2c^2\int_t^{t+\varepsilon}|\sigma_s\eta|^2\md s\right)\right]=1\quad a.s..
\end{align*}
Moreover, we can choose
$$
C(|c|,\|\eta\|_{\infty}):= \exp\left(|c|(T-t)\left(\|\mu-r\|_{\infty}\cdot\|\eta\|_{\infty}+\frac{1}{2}\|\sigma\|_{\infty}^2\|\eta\|_{\infty}^2\right)+c^2(T-t)\|\sigma\|_{\infty}^2\|\eta\|_{\infty}^2\right).
$$
Then, by another application of the Cauchy–Schwarz inequality,
$$
\begin{aligned}
   \mathbb{E}_t\left[\exp (c\xi^{t,\varepsilon,\pi}_T)\right] \le& C(|c|,\|\eta\|_{\infty})\left(\mathbb{E}_t\left[\exp\left(\int_t^T\left(\frac{\left(c\|\sigma\|^2_{\infty}\|\eta\|_{\infty}\right)^2}{\delta}+\delta|\pi_s|^2\right)\md s\right)\right]\right)^{\frac{1}{2}}\\
   \le&C(|c|,\|\eta\|_{\infty})\exp\left(\frac{T-t}{2\delta}\left(c\|\sigma\|^2_{\infty}\|\eta\|_{\infty}\right)^2\right)\left(\mathbb{E}_t\left[\exp\left(\delta\int_t^T|\pi_s|^2\md s\right)\right]\right)^{\frac{1}{2}} \quad a.s..
\end{aligned}
$$
By the John–Nirenberg inequality (see Lemma \ref{John-Nirenberg}), for $\delta>0$ such that $\delta\|\pi\|_{\textup{BMO}}^2<1$, we have
$$
\mathbb{E}_t\left[\exp\left(\delta\int_t^T|\pi_s|^2\md s\right)\right]<\frac{1}{1-\delta\|\pi\|_{\textup{BMO}}^2}<\infty\quad a.s..
$$
Taking $\delta=\frac{1}{2\|\pi\|_{\textup{BMO}}^2}$ yields
\begin{align*}
\sup\limits_{\varepsilon\in (0,T-t)}\mathbb{E}_t\left[\exp(c\xi^{t,\varepsilon,\pi}_T)\right]
\leq C(|c|,\|\eta\|_{\infty},\|\pi\|_{\textup{BMO}})\quad a.s.,
\end{align*}
and the constant $C$ is non-decreasing in $\|\eta\|_{\infty}$.

\subsection{Proof of Corollary \ref{coro:admi:piepsilon}}\label{pfcoro:admi:piepsilon}
  As $\pi\in\Pi$, there exists $p>1$ such that $\pi\in\Pi_p$. We can choose $1<p^{\prime}<p$ and $q>1$ such that $\frac{1}{p}+\frac{1}{q}=\frac{1}{p^{\prime}}$.  By Lemma \ref{diffpro}, it holds that

     $$
     \begin{aligned}
      &\left(\mathbb{E}_t\left[\exp\left(-p^{\prime}\zeta(R^{\pi^{t,\varepsilon,\eta}}_T-R^{\pi^{t,\varepsilon,\eta}}_t)\right)\right]\right)^{\frac{1}{p^{\prime}}} \\=&\left(\mathbb{E}_t\left[\exp\left(-p^{\prime}\zeta\left[(R^{\pi}_T-R^{\pi}_t)+\xi_T^{t,\varepsilon,\pi}\right]\right)\right]\right)^{\frac{1}{p^{\prime}}}\\
      \le&\left(\mathbb{E}_t\left[\exp\left(-p\zeta(R^{\pi}_T-R^{\pi}_t)\right)\right]\right)^{\frac{1}{p}}\left(\mathbb{E}_t\left[\exp\left(-q\zeta\xi_T^{t,\varepsilon,\pi}\right)\right]\right)^{\frac{1}{q}}<\infty \;a.s..
      \end{aligned}
     $$
 Moreover, it is obvious that under a bounded perturbation $\eta$, 
 $$
 \|\pi^{t,\varepsilon,\eta}\|_{\textup{BMO}}^2\le2\sup_{\tau \in \mathcal{T}_{[0,T]}} \left\| \mathbb{E}_{\tau} \left[ \int_\tau^T |\pi_s|^2 ds \right] \right\|_\infty + 2\varepsilon\|\eta\|_{\infty}^2<\infty.
 $$
 Thus, $\pi^{t,\varepsilon,\eta}\in\Pi_{p'}\subset\Pi$.

\subsection{Proof of Lemma \ref{lma:3.4}}\label{pflma:3.4}
Fix  $\pi\in\Pi$, $t\in [0,T)$, $\eta\in L_{\mathcal{F}_t}^{\infty}(\Omega,\mathbb{R}^d)$ and $\varepsilon\in (0,T-t)$. By the  definition of $J$, we have
$$
    \begin{aligned}
         &J(t,\pi^{t,\varepsilon})-J(t,\pi)
         =\mathbb{E}_t\left[U(R_T^{t,\varepsilon}-R_t)-U(R_T-R_t)\right]-\frac{\gamma}{2}\left(\textup{Var}_t[R_T^{t,\varepsilon}]-\textup{Var}_t[R_T]\right).
    \end{aligned}
    $$
    For the first term, we have that
\begin{align}\label{eq:3.4:step1
}\mathbb{E}_t\left[U(R_T^{t,\varepsilon}-R_t)-U(R_T-R_t)\right]
    =e^{\zeta R_t}\mathbb{E}_t\left[U^{'}(R_T)\xi_T^{t,\varepsilon}+\int_0^1 U^{''}(R_T+\lambda\xi_T^{t,\varepsilon})(1-\lambda)d\lambda|\xi_T^{t,\varepsilon}|^2\right].
\end{align}
Next we analyze the conditional expectation $\mathbb{E}_t\left[U^{'}(R_T)\xi_T^{t,\varepsilon}\right]$. To this end, let $(\alpha,\beta)$ be the unique adapted solution of the BSDE
$$
\left\{
\begin{aligned}
    &\md \alpha_s=\beta_s^1\md B_s+\beta_s\md \bar{B}_s ,\\
    &\alpha_T=e^{-\zeta R_T}.
\end{aligned}
\right.
$$
The martingale representation theorem impllies that $\alpha_s=\mathbb{E}_s\left[U^{'}(R_T)\right]$. By the definition of the admissible strategy, it holds that $\alpha_T\in L^p_{\mathcal{F}_T}(\Omega;\mR)$ for some $p>1$. Let $Y=-\frac{1}{\zeta}\log\alpha-R$ and hence $U^{'}(R_s+Y_s)=\alpha_s=\mathbb{E}_s\left[U^{'}(R_T)\right]$. In particular, $\alpha$ is positive and $Y$ is well defined for any $t\in [0,T]$ a.s.. Moreover, $\alpha\in L^p_{\mathbb{F}}(\Omega;C([0,T];\mathbb{R}))$ and $\beta^1, \beta\in L^p_{\mathbb{F}}\left(\Omega;L^2(0,T;\mathbb{R})\right)$. Then $Y$ is an adapted process satisfying the following BSDE that
$$
\left\{
\begin{aligned}
    &\md Y_s=\left(\frac{1}{2\zeta \alpha_s^2}\left(|\beta_s^1|^2+|\beta_s|^2+2\rho\beta_s^1\beta_s\right)-a(s,\pi_s)\right)\md s-\left(\frac{\beta_s^1}{\zeta\alpha_s}+\sigma\pi_s\right)\md B_s-\frac{\beta_s}{\zeta\alpha_s}\md \bar{B}_s ,\\
    &Y_T=0.
\end{aligned}
\right.
$$
Let $Z^1=-\frac{\beta^1}{\zeta\alpha}-\sigma\pi$ and $Z=-\frac{\beta}{\zeta\alpha}$, it then holds that 
$$
\md Y_s=\left(\frac{\zeta}{2}|Z_s^1+\sigma_s\pi+\rho Z_s|^2+\frac{\zeta(1-\rho^2)}{2}|Z_s|^2-a(s,\pi_s)\right)\md s+Z_s^1\md B_s+Z^1\md \bar{B}_s.
$$
Moreover,  taking the conditional expectations, we see that 
$$
\mathbb{E}_t\left[U^{'}(R_T)\xi_T^{t,\varepsilon}\right]=\mathbb{E}_t[\alpha_{T}\xi_{t+\varepsilon}^{t,\varepsilon}]=\mathbb{E}_t\left[\mathbb{E}[\alpha_{T}|\mathcal{F}_{t+\varepsilon}]\xi_{t+\varepsilon}^{t,\varepsilon}\right]=\mathbb{E}_t[\alpha_{t+\varepsilon}\xi_{t+\varepsilon}^{t,\varepsilon}]=\mathbb{E}_t\left[U^{'}(R_{t+\varepsilon}+Y_{t+\varepsilon})\xi_{t+\varepsilon}^{t,\varepsilon}\right].
$$
It$\hat{\text{o}}$'s formula gives that
\begin{align}\label{cp1}
    &\mathbb{E}_t\left[U^{'}(R_{t+\varepsilon}+Y_{t+\varepsilon})\xi_{t+\varepsilon}^{t,\varepsilon}\right]\nonumber\\
    =&\mathbb{E}_t\left[\int_t^{t+\varepsilon}U^{'}(R_s+Y_s)\left(\left(a(s,\pi^{t,\varepsilon}_s)-a(s,\pi_s)\right)\md s+\sigma_s\eta \md B_s\right)\right.\nonumber\\
    &+\int_t^{t+\varepsilon}\xi_s^{t,\varepsilon}U^{''}(R_s+Y_s)\left((Z_s^1+\sigma_s\pi_s)\md B_s+Z_s\md \bar{B}_s+\left(\frac{\zeta}{2}|Z_s^1+\sigma_s\pi_s+\rho Z_s|^2+\frac{\zeta(1-\rho^2)}{2}|Z_s|^2\right)\md s\right)\nonumber\\
    &+\frac{1}{2}\int_t^{t+\varepsilon}\xi_s^{t,\varepsilon}U^{'''}(R_s+Y_s)\left(|Z_s^1+\sigma_s\pi_s|^2+|Z_s|^2+2\rho(Z_s^1+\sigma_s\pi_s)Z_s\right)\md s\nonumber\\
    &\left.+\int_t^{t+\varepsilon}U^{''}(R_s+Y_s)(Z_s^1+\sigma_s\pi_s+\rho Z_s)\cdot \sigma_s\eta \md s\right]\nonumber\\
    =&\mathbb{E}_t\left[\int_t^{t+\varepsilon}e^{-\zeta(R_s+Y_s)}\left(a(s,\pi^{t,\varepsilon}_s)-a(s,\pi_s)-\zeta(Z_s^1+\sigma_s\pi_s+\rho Z_s)\cdot \sigma_s\eta \right)\md s\right]\nonumber\\
    &+\mathbb{E}_t\left[\int_t^{t+\varepsilon}e^{-\zeta(R_s+Y_s)}\left[(\sigma_s\eta-\zeta\xi_s^{t,\varepsilon}(Z_s^1+\sigma_s\pi_s))\md B_s-\zeta\xi_s^{t,\varepsilon}Z_s\md \bar{B}_s\right]\right].
\end{align}
Under Assumption \ref{assumpmarket}, we have 
$$\mathbb{E}_t\left[\left(\int_t^{t+\varepsilon}(\alpha_s\sigma_s\eta)^2\md s\right)^{\frac{1}{2}}\right]\le \left(\mathbb{E}_t\left[\left(\sup\limits_{s\in[t,t+\varepsilon]}\alpha_s\right)^p\right]\right)^{\frac{1}{p}}\left(\mathbb{E}_t\left[\left(\sup\limits_{s\in [t,t+\varepsilon]}\sigma_s\eta\right)^q\right]\right)^{\frac{1}{q}}<\infty,$$
$$\mathbb{E}_t\left[\left(\int_t^{t+\varepsilon}(\beta_s\xi_s^{t,\varepsilon})^2\md s\right)^{\frac{1}{2}}\right]\le \left(\mathbb{E}_t\left[\left(\int_t^{t+\varepsilon}(\beta_s)^2\md s\right)^{\frac{p}{2}}\right]\right)^{\frac{1}{p}}\left(\mathbb{E}_t\left[\left(\sup\limits_{s\in [t,t+\varepsilon]}\xi_s^{t,\varepsilon}\right)^q\right]\right)^{\frac{1}{q}}<\infty,$$
$$\mathbb{E}_t\left[\left(\int_t^{t+\varepsilon}(\beta_s^1\xi_s^{t,\varepsilon})^2\md s\right)^{\frac{1}{2}}\right]\le \left(\mathbb{E}_t\left[\left(\int_t^{t+\varepsilon}(\beta_s^1)^2\md s\right)^{\frac{p}{2}}\right]\right)^{\frac{1}{p}}\left(\mathbb{E}_t\left[\left(\sup\limits_{s\in [t,t+\varepsilon]}\xi_s^{t,\varepsilon}\right)^q\right]\right)^{\frac{1}{q}}<\infty.$$
The Burkholder-Davis-Gundy inequality yields that $$\mathbb{E}_t\left[\int_t^{t+\varepsilon}e^{-\zeta(R_s+Y_s)}\left[(\sigma\eta-\zeta\xi_s^{t,\varepsilon}(Z_s^1+\sigma\pi))\md B_s-\zeta\xi_s^{t,\varepsilon}Z_s\md \bar{B}_s\right]\right]=0.$$
For the second term,  let $\tilde{\alpha}_t=\mathbb{E}_t[R_T]$ and $\tilde{Y_t}=\tilde{\alpha}_t-R_t$. Then, $\tilde{\alpha}$ and $\tilde{Y}$ satisfy that
\begin{align*}
&\md\tilde{\alpha}=\tilde{\beta}^1\md B_s+\tilde{\beta}\md \bar{B}_s,\\
&\md \tilde{Y}_s=-a(s,\pi_s)\md s+\tilde{Z}^1_s\md B_s+\tilde{Z}_s\md \bar{B}_s,
\end{align*}
and $\tilde{\alpha}\in L^2_{\mathbb{F}}(\Omega; C([0, T]; \mathbb{R})), \tilde{\beta}^1=\tilde{Z}^1+\sigma\pi, \tilde{\beta}=\tilde{Z}\in L^2_{\mathbb{F}}\left(\Omega;L^2(0,T;\mathbb{R})\right)$. From Lemma \ref{lma:2.4}, it follows that $R^{\pi}\in L^2_{\mathbb{F}}(\Omega; C([0, T]; \mathbb{R}))$ and hence $\tilde{Y}\in L^2_{\mathbb{F}}(\Omega; C([0, T]; \mathbb{R}))$. We thus have
\begin{align}\label{cp2}
 &\textup{Var}_t[R_T]-\textup{Var}_t[R_T^{t,\varepsilon}]\nonumber\\
 =&\textup{Var}_t\left[\mathbb{E}_{t+\varepsilon}[R_T]\right]+\mathbb{E}_t\left[\textup{Var}_{t+\varepsilon}[R_T]\right]-\textup{Var}_t\left[\mathbb{E}_{t+\varepsilon}[R_T^{t,\varepsilon}]\right]-\mathbb{E}_t\left[\textup{Var}_{t+\varepsilon}[R_T^{t,\varepsilon}]\right]\nonumber\\
=&\textup{Var}_t[R_{t+\varepsilon}+\tilde{Y}_{t+\varepsilon}]-\textup{Var}_t[R_{t+\varepsilon}^{t,\varepsilon}+\tilde{Y}_{t+\varepsilon}]\nonumber\\
   =&\textup{Var}_t\left[\int_t^{t+\varepsilon}(\sigma\pi_s+\tilde{Z}^1_s)\md B_s+\tilde{Z}_s\md \bar{B}_s\right]-\textup{Var}_t\left[\int_t^{t+\varepsilon}(\sigma\pi_s^{t,\varepsilon}+\tilde{Z}_s^1)\md B_s+\tilde{Z}_s\md \bar{B}_s\right]\nonumber\\
   =&\mathbb{E}_t\left[\int_t^{t+\varepsilon}((\sigma\pi_s+\tilde{Z}^1_s+\rho\tilde{Z}_s)^2-(\sigma\pi_s^{t,\varepsilon}+\tilde{Z}^1_s+\rho\tilde{Z}_s)^2)\md s\right],
\end{align}
where we have used the fact that $R_T-R_{t+\varepsilon}=R_T^{t,\varepsilon}-R_{t+\varepsilon}^{t,\varepsilon}$ in the second and third equalities . The conclusion then follows by combining \eqref{eq:3.4:step1
}, \eqref{cp1} and \eqref{cp2}.
\subsection{Proof of Corollary \ref{corol:verifyequilibrium}}\label{pfcoro:verifyequilibrium}
In contrast to the proof of Lemma \ref{lma:3.4}, where the processes $\alpha_s:=\mathbb{E}_s\left[U^{'}(R_T)\right]$ and $\tilde{\alpha}_t:=\mathbb{E}_t[R_T]$ were defined first and subsequently used to construct the processes $Y$ and $\tilde{Y}$, here we start from the solution $(Y,\tilde{Y})$ to the simplified BSDE \eqref{eq:simplifiedBSDE} and define $\alpha_s:=U^{'}(R_s+Y_s)$ and $\tilde{\alpha}_s:=\tilde{Y}_s+R_s$. Applying Itô’s formula yields
$$
\begin{aligned}
    \md \alpha_s=&-\zeta\alpha_s\md(R_s+Y_s)+\frac{1}{2}\zeta^2\alpha_s\langle R+Y\rangle_s\\
    =&-\zeta\alpha_s\left(\frac{\zeta}{2}|\sigma_s\pi_s+\rho Z_s|^2+\frac{\zeta(1-\rho^2)}{2}|Z_s|^2\right)\md s-\zeta\alpha_s\left(\sigma_s\pi_s\md B_s+Z_s\md \bar{B}_s\right)\\
    &+\frac{1}{2}\zeta^2\alpha_s\left(|\sigma_s\pi_s|^2+|Z_s|^2+2\rho\sigma_s\pi_sZ_s\right)\md s\\
    =& -\zeta\alpha_s\left(\sigma_s\pi_s\md B_s+Z_s\md \bar{B}_s\right).
\end{aligned}
$$
and similarly, 
$$
\begin{aligned}
    \md \tilde{\alpha}_s=\md \tilde{Y}_s+\md R_s
    =\sigma_s\pi_s\md B_s+\tilde{Z}_s\md \bar{B}_s.
\end{aligned}
$$
Hence, by the martingale representation and terminal conditions of $Y$ and $\tilde{Y}$, we recover $\alpha_s=\mathbb{E}_s\left[U^{'}(R_T)\right]$ and $\tilde{\alpha}_t=\mathbb{E}_t[R_T]$. Because these processes retain the same structural properties as their counterparts in the proof of Lemma \ref{lma:3.4}, the remaining arguments require only minor adjustments, and the desired result follows.

\subsection{Proof of Lemma \ref{lma:3.5}}\label{pflma:3.5}
   Hölder's inequality and Lemma \ref{diffpro} imply that  $\mathbb{E}_t\left[e^{-\zeta R_T}|\xi_T^{t,\varepsilon}|^2\right]<\infty$ and $ \mathbb{E}_t\left[e^{-\zeta(R_T+\xi_T^{t,\varepsilon})}|\xi_T^{t,\varepsilon}|^2\right]<\infty$ a.s.. Noting that $e^{-(R_T+\lambda\xi_T^{t,\varepsilon})}|\xi_T^{t,\varepsilon}|^2$ is nonnegative and convex almost sure with respect to $\lambda$, we have
     $$
     \begin{aligned}
         &\frac{1}{\varepsilon} \mathbb{E}_t\left[\int_0^1 \left(e^{-\zeta(R_T+\lambda\xi_T^{t,\varepsilon}-R_t)}(1-\lambda)\right)d\lambda\cdot|\xi_T^{t,\varepsilon}|^2\right] 
         =\frac{1}{\varepsilon}\int_0^1\left(\mathbb{E}_t\left[ e^{-\zeta(R_T+\lambda\xi_T^{t,\varepsilon}-R_t)}|\xi_T^{t,\varepsilon}|^2\right](1-\lambda)\right)\md\lambda \\
         \le & \frac{1}{\varepsilon}e^{\zeta R_t}\left(\mathbb{E}_t\left[e^{-\zeta R_T}|\xi_T^{t,\varepsilon}|^2\right]\int_0^1(1-\lambda)^2\md \lambda+\mathbb{E}_t\left[e^{-\zeta (R_T+\xi_T^{t,\varepsilon})}|\xi_T^{t,\varepsilon}|^2\right]\int_0^1\lambda(1-\lambda)\md\lambda\right).
     \end{aligned}
     $$
     Therefore, we only need to show that $\frac{1}{\varepsilon} \mathbb{E}_t\left[e^{-(R_T+\lambda\xi_T^{t,\varepsilon}-R_t)}|\xi_T^{t,\varepsilon}|^2\right]\le C|\eta|^2 $ a.s. for $\lambda=0,1$, and for any  $t\in[0,T)$ and $\eta\in L_{\mathcal{F}_t}^{\infty}(\Omega,\mathbb{R})$. By Lemma \ref{diffpro} and the Hölder's inequality, there exists a constant $C(\|\eta\|_{\infty})$  such that,
for $\varepsilon$ sufficiently small,
$$
\begin{aligned}
\frac{1}{\varepsilon} \mathbb{E}_t\left[e^{-\zeta(R_T+\lambda\xi_T^{t,\varepsilon}-R_t)}|\xi_T^{t,\varepsilon}|^2\right]&\le \frac{1}{\varepsilon} \left(\mathbb{E}_t\left[e^{-p\zeta(R_T-R_t)}|\right]\right)^{\frac{1}{p}} \left(\mathbb{E}_t\left[|\xi_T^{t,\varepsilon}|^{2q}\right]\right)^{\frac{1}{q}}\left(\mathbb{E}_t\left[e^{-r\zeta\lambda\xi_T^{t,\varepsilon}}\right]\right)^{\frac{1}{r}}\\
&\le C\left(\mathbb{E}_t\left[e^{-p\zeta(R_T-R_t)}|\right]\right)^{\frac{1}{p}}\left(\mathbb{E}_t\left[e^{-r\zeta\lambda\xi_T^{t,\varepsilon}}\right]\right)^{\frac{1}{r}} |\eta|^2\\
&\le C(\|\eta\|_{\infty})|\eta|^2, \quad \lambda=0, 1.
\end{aligned}
$$
Here, we choose $q,r$ such that $\frac{1}{p}+\frac{1}{q}+\frac{1}{r}=1$.
Then, it holds that
$$
 \limsup\limits_{\varepsilon\rightarrow 0}\frac{1}{\varepsilon} \mathbb{E}_t\left[\int_0^1 e^{-\zeta (R_T+\lambda\xi_T^{t,\varepsilon}-R_t)}(1-\lambda)d\lambda|\xi_T^{t,\varepsilon}|^2\right]\le C(\|\eta\|_{\infty})|\eta|^2 \quad a.s..
$$
\subsection{Proof of Lemma \ref{lma:3.6}}\label{pflma:3.6}
Noting that $a(s,\pi^{t,\varepsilon_n^t}_s)-a(s,\pi_s)=(\mu_s-r_s)\eta-\frac{1}{2}\sigma^2_s\eta^2-\sigma^2_t\eta\pi_s$ and $(\sigma_s\pi_s+\tilde{Z}^1_s+\rho\tilde{Z}_s)^2-(\sigma_s\pi_s^{t,\varepsilon}+\tilde{Z}^1_s+\rho\tilde{Z}_s)^2=-2\sigma_s\eta(\sigma_s\pi_s+\tilde{Z}^1_s+\rho\tilde{Z}_s-\sigma_s^2\eta^2)$, we only need to prove that
$$
\begin{aligned}
 &\lim\limits_{n\rightarrow\infty} \frac{1}{\varepsilon_n^t}\mathbb{E}_t\left[\int_t^{t+\varepsilon_n^t} \alpha_s\left((\mu_s-r_s)\eta-\frac{1}{2}\sigma^2_s\eta^2-\sigma^2_t\eta\pi_s-\zeta(Z_s^1+\sigma_s\pi_s+\rho Z_s)\cdot \sigma_s\eta \right)\md s\right]\\
 =&\alpha_t\left((\mu_t-r_t)\eta-\frac{1}{2}\sigma^2_t\eta^2-\sigma^2_t\eta\pi_t-\zeta(Z_t^1+\sigma_t\pi_t+\rho Z_t)\cdot \sigma_t\eta\right)\quad a.s.
 \end{aligned}
$$
and 
$$
\lim\limits_{n\rightarrow\infty} \frac{1}{\varepsilon_n^t}\mathbb{E}_t\left[\int_t^{t+\varepsilon_n^t} 2\sigma_s(\sigma_s\pi_s+\tilde{Z}^1_s+\rho\tilde{Z}_s-\sigma_s^2\eta^2)\md s\right]=2\sigma_t(\sigma_t\pi_t+\tilde{Z}^1_t+\rho\tilde{Z}_t-\sigma_t^2\eta^2)\quad a.s..
$$
In view of $ e^{-\zeta(R_s+Y_s)}=\alpha_s \in L^p_{\mathbb{F}}(\Omega;C([0,T];\mathbb{R}))$ and the fact that $\mu-r$ and $\sigma$ are bounded, we have $\alpha(\mu-r),\alpha\sigma^2\in L^{p}(0,T; \mathbb{R})$ with $p>1$. It also holds that, for any $1<\gamma<p$ and $\frac{1}{\gamma}=\frac{1}{p}+\frac{1}{q^{\prime}}$,
       $$
       \mathbb{E}\left[\int_0^T |\alpha_s\pi_s|^{\gamma}\md s\right]\le \left(\mathbb{E}\left[\sup\limits_{s\in[0,T]}|\alpha_s|^p\right]\right)^{\frac{\gamma}{p}}\left(\mathbb{E}\left[\left(\int_0^T |\pi_s|^{\gamma}\md s\right)^{\frac{q^{\prime}}{\gamma}}\right]\right)^{\frac{\gamma}{q^{\prime}}}<\infty,
       $$
       which implies that $\alpha\sigma^2\pi\in L^{\gamma}(0,T; \mathbb{R})$. In addition,  $\alpha(Z^1+\sigma\pi)=\beta^1,\; \alpha Z=\beta\in L^{ p}_{\mathbb{F}}\left(\Omega;L^2(0,T;\mathbb{R})\right)\subset L^{p\wedge 2}(0,T; \mathbb{R})$ and $\tilde{\beta}^1=\tilde{Z}^1+\sigma\pi, \tilde{\beta}=\tilde{Z}\in L^2_{\mathbb{F}}\left(\Omega;L^2(0,T;\mathbb{R})\right)\subset L^{2}(0,T; \mathbb{R})$.

       Therefore,  by Lemma 3.3 in \cite{Hamaguchi_2021} , there exists a measurable set $E_1$ with $\operatorname{Leb}([0,T] \setminus E_1) = 0$ such that for any $t\in E_1$, we can choose a subsequence by recursively extracting subsequences and applying the diagonal selection argument that
       $$
\lim\limits_{n\rightarrow\infty}\mathbb{E}_t \left[
\frac{1}{\varepsilon_n^t}\int_t^{\varepsilon_n^t}P_s\right]=P_t.
       $$
       Here $P$ can be chosen to be $\alpha(\mu-r), \alpha\sigma^2, \alpha\sigma^2\pi, \alpha(Z^1+\sigma\pi+\rho Z)$ and $\sigma(\sigma\pi+\tilde{Z}^1+\rho\tilde{Z}), \sigma^3\eta^2$. Then we obtain \eqref{lma:3.5.1} and \eqref{lma:3.5.2} with $E_1$ and the corresponding $\varepsilon_n^t$ for any $t\in E_1$, independent of $\eta$.

\subsection{Proof of Lemma \ref{lma:projection}}\label{pflma:projection}
  We first prove assertion (i).  By the characterization of the orthogonal projection onto a convex set,
for every $z\in U$, we have
\[
\langle w-u,\,z-u\rangle \le 0.
\]
Choosing $z=u+h\in U$ yields
\begin{equation}\label{inner_nonpos}
\langle w-u,\,h\rangle \le 0.
\end{equation}
It then follows that
\[
\begin{aligned}
\bigl|\alpha(w-u)-h\bigr|^2
= \alpha^2|w-u|^2 + |h|^2 - 2\alpha\langle w-u,\,h\rangle 
\ge \alpha^2|w-u|^2.
\end{aligned}
\]
Taking the square root yields the desired inequality \eqref{eq:alpha:convex}. Moreover, the equality requires both $\langle w-u,h\rangle=0$ and $|h|^2=0$, which implies that
$h=0$.

Next we verify assertion (ii). As $u$ is not the projection, the vector $v=P_U(w)$ satisfies that
\[
\langle w-u,\,v-u\rangle \ge |v-u|^2 > 0.
\]
Let $h=\lambda(v-u),\;\lambda\in(0,1]$, then
$$
\bigl|\alpha(w-u)-h\bigr|<\alpha|w-u| \iff 2\alpha\langle w-u, h\rangle >  |h|^2 \Longleftarrow 0<\lambda<2\alpha
$$
because $\alpha>0$, the above inequality holds for sufficiently small $\lambda>0$. By the convexity of $U$, $u+ h\in U$, which completes the proof.
 \end{document}